\newtheorem{mycor}{Corollary}
\newtheorem{myprop}{Proposition}
\newtheorem{mythe}{Theorem}
\newtheorem{mylem}{Lemma}
\newtheorem{myrem}{Remark}
\DeclareMathOperator*{\Tr}{Tr}
\begin{document}

\sloppy

\title{Polar Codes for Arbitrary Classical-Quantum Channels and Arbitrary cq-MACs}


%
%

\author{Rajai Nasser, \IEEEmembership{Member, IEEE,} and Joseph M. Renes, \IEEEmembership{Member, IEEE}%
}




\maketitle

\begin{abstract}
We prove polarization theorems for arbitrary classical-quantum (cq) channels. The input alphabet is endowed with an arbitrary Abelian group operation and an Ar{\i}kan-style transformation is applied using this operation. It is shown that as the number of polarization steps becomes large, the synthetic cq-channels polarize to deterministic homomorphism channels which project their input to a quotient group of the input alphabet. This result is used to construct polar codes for arbitrary cq-channels and arbitrary classical-quantum multiple access channels (cq-MAC). The encoder can be implemented in $O(N\log N)$ operations, where $N$ is the blocklength of the code. A quantum successive cancellation decoder for the constructed codes is proposed. It is shown that the probability of error of this decoder decays faster than $2^{-N^{\beta}}$ for any $\beta<\frac{1}{2}$.
\end{abstract}

\section{Introduction}

Polar coding is the first efficient coding technique that was shown to achieve the capacity of symmetric binary-input channels \cite{Arikan}. The code construction relies on a phenomenon called polarization: starting from a collection of independent copies of a given binary-input channel, one can recursively apply a polarization transformation on those channels and obtain synthetic channels that become extreme (i.e., either almost useless or almost perfect channels) as the number of polarization steps becomes large. This suggests sending information through the channels which are almost perfect, while sending frozen symbols through the almost useless channels. Since the total capacity is conserved by the applied transformations, we can reliably communicate using this method at a rate that is close to the capacity. Ar{\i}kan proposed a successive cancellation decoder for the constructed polar code and he showed that both the encoder and the decoder can be implemented in $O(N\log N)$ operations. The probability of error of the successive cancellation decoder was shown to decay faster than $2^{-N^\beta}$ for any $\beta<\frac{1}{2}$ \cite{ArikanTelatar}.

Since Ar{\i}kan's polarization transformation for binary-input channels uses the XOR operation, the straightforward generalization of Ar{\i}kan's construction to arbitrary discrete memoryless channel is to replace the XOR operation with a binary operation on the input alphabet. It was shown that polarization happens for a wide family of binary operations: addition modulo $q$ (where $q$ is prime) \cite{SasogluTelAri}, addition modulo $2^r$ \cite{ParkBarg}, arbitrary Abelian group operations \cite{SahebiPradhan} and arbitrary quasigroup operations \cite{RajTelA}. This allowed the construction of polar codes for arbitrary discrete memoryless channels since any set can be endowed with an Abelian group operation. Note that in the case where the input alphabet size is not prime, the polarization may not be a two-level polarization to useless and perfect channels as in the binary-input case. We may have multilevel polarization where it is possible for the synthetic channels to converge to intermediate channels which are neither almost useless nor almost perfect. However, the polarized intermediate channels are ``easy" in the sense that it is easy to reliably communicate information through them at a rate that is near their symmetric capacity. A complete characterization of binary operations which are polarizing was given in \cite{RajErgI} and \cite{RajErgII}.

Polarization was also shown to happen in the multiple access setting. Polar codes were constructed for two-user MACs with inputs in $\mathbb{F}_q$ \cite{SasogluTelYeh}, for $m$-user binary-input MACs \cite{AbbeTelatar}, and for arbitrary MACs \cite{RajTelA, RajErgII}.

Wilde and Guha constructed polar codes for binary-input classical-quantum channels in \cite{WildeGuhaCQ}. They showed that using the same polarization transformation of Ar{\i}kan yields polarization of the synthetic cq-channels to almost useless and almost perfect channels. Wilde and Guha proposed a quantum successive cancellation decoder and showed that its probability of error decays faster than $2^{-N^\beta}$ for any $\beta<\frac{1}{2}$. In \cite{PolarcqMAC}, Hirche et. al. constructed codes for binary-input cq-MAC codes by combining the polarization results of \cite{WildeGuhaCQ} with the monotone chain rule method of \cite{Monotone}.

In this paper, we construct polar codes for arbitrary cq-channels and arbitrary cq-MACs by using arbitrary Abelian group operations on the input alphabets. The polarization transformation that we use is similar to the one in \cite{SahebiPradhan}. Since we are proving a quantum version of the results in \cite{SasogluTelAri} and \cite{SahebiPradhan}, many ideas of those two papers were adopted and adapted to the quantum setting. However, some inequalities that were used in \cite{SasogluTelAri} and \cite{SahebiPradhan} do not have quantum analogues. Therefore, other inequalities that serve the same purpose have to be shown for cq-channels.

In section II we give useful definitions and basic results that we will use later. The polarization transformation is described in section III. Two-level polarization is shown in section IV for cq-channels having input in $\mathbb{F}_q$. In Section V, we prove multilevel polarization for arbitrary cq-channels using an arbitrary Abelian group operation on the input alphabet. We show that the synthetic cq-channels converge to deterministic homomorphism channels which project their input onto a quotient group of the input alphabet. The rate of polarization is discussed in section VI. Polar codes are constructed and studied in section VII. As in all polar coding schemes, the encoder can be implemented in $O(N\log N)$ operations, where $N$ is the blocklength of the polar code. We prove that the probability of error of the quantum successive cancellation decoder decays faster than $2^{-N^\beta}$ for any $\beta<\frac{1}{2}$, but we do not have an efficient implementation of the decoder. Finally, we discuss polarization of arbitrary cq-MACs in section VIII. We show that while cq-MAC polar codes may not achieve the whole symmetric capacity region, they always achieve points on the dominant face. We show that the whole symmetric capacity region can be achieved by combining our cq-channel polarization result with the rate-splitting method of \cite{SasogluTelYeh} or with the monotone chain rule method of \cite{Monotone}.

\section{Preliminaries}

A classical-quantum (cq) channel $W: x\in G\longrightarrow \rho_x \in \mathcal{DM}(k)$ takes a classical input $x\in G$ and has a quantum output $\rho_x \in \mathcal{DM}(k)$, where $\mathcal{DM}(k)$ is the space of density matrices of dimension $k<\infty$. We assume that the input alphabet $G$ is finite but its size $q=|G|$ can be arbitrary.

If the input to the cq-channel $W$ is uniformly distributed, we can describe the state of the joint input-output system as the state $\rho^{XB}\in \mathcal{DM}(q\cdot k)$ defined as:
$$\rho^{XB}:=\frac{1}{q}\sum_{x\in G}|x\rangle\langle x|\otimes \rho_x.$$

A very important quantity associated with $W$ is the symmetric Holevo information $I(W)$ defined as:
$$I(W):=I(X;B)_\rho=H(X)_\rho+H(B)_\rho-H(XB)_\rho,$$

where $H(\sigma)$ is the von Neumann entropy of the density matrix $\sigma$:
$$H(\sigma)=-\Tr (\sigma\log\sigma),$$
and $\log$ is the natural logarithm operator. It is easy to show that

$$I(W)=H\left(\frac{1}{q}\sum_{x\in G}\rho_x\right)-\frac{1}{q}\sum_{x\in G}H(\rho_x).$$

The quantity $I(W)$ is the capacity for transmitting classical information over the channel $W$ when the prior input distribution is restricted to be uniform in $G$. We have $0\leq I(W)\leq \log q$.

Besides $I(W)$, we will need another parameter that measures the reliability of the channel $W$. For the binary-input case, the fidelity between the two output states was used as a measure of reliability in \cite{WildeGuhaCQ}. In our case, we have $q$ output states, so we will consider the average pairwise fidelity between them (similarly to the average Bhattacharyya distance defined in \cite{SasogluTelAri}):
$$F(W):=\frac{1}{q(q-1)}\sum_{\substack{x,x'\in G,\\x\neq x'}}F(\rho_x,\rho_{x'}),$$

where $F(\rho,\sigma)=\Tr\sqrt{\rho^{\frac{1}{2}}\sigma\rho^{\frac{1}{2}}}=\left\|\sqrt{\sigma}\sqrt{\rho}\right\|_1$, and $\|A\|_1$ is the nuclear norm of the matrix $A$:
$$\|A\|_1=\Tr\sqrt{A^\dagger A}.$$
Clearly, $0\leq F(W)\leq 1$. We adopt the convention $F(W):=0$ if $|G|=1$.

It was was shown in \cite{AvFidelity} that $\mathbb{P}_e(W)\leq (q-1)F(W)$, where $\mathbb{P}_e(W)$ is the probability of error of the optimal decoder of $W$. This shows that if $F(W)$ is small then $\mathbb{P}_e(W)$ is also small and so $W$ is reliable. Intuitively, this is true because a small $F(W)$ means that all the pairwise fidelities are small, which implies that all the output states are easily distinguishable from each other, which in turn should allow a reliable decoding.

The following proposition provides three inequalities that relate $I(W)$ and $F(W)$.

\begin{myprop}
\label{propFid}
We have:
\begin{itemize}
\item[(i)] $\displaystyle I(W)\geq \log \frac{q}{1+(q-1)F(W)}$.
\item[(ii)] $I(W)\leq \log(q/2) + (\log 2)\sqrt{1-F(W)^2}$.
\item[(iii)] $I(W)\leq \log\left(1+\sqrt{q^2-(1+(q-1)F(W))^2}\right)$.
\end{itemize}
\end{myprop}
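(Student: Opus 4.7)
The plan is to tackle the three bounds separately, starting from the identity
\[
\frac{1}{q}\sum_{x,x'\in G} F(\rho_x,\rho_{x'}) \;=\; 1+(q-1)F(W),
\]
which holds because the $q$ diagonal terms contribute $q$ and the $q(q-1)$ off-diagonal terms contribute $q(q-1)F(W)$. This interprets $1+(q-1)F(W)$ as the normalised sum of all pairwise fidelities.

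For (i), I would rewrite the bound as the Fano-type inequality $H(X|B)_\rho \le \log\bigl(1+(q-1)F(W)\bigr)$, since $I(W)=\log q - H(X|B)_\rho$. The strategy is to upper-bound $H(X|B)_\rho$ by a R\'enyi conditional entropy of order $1/2$ (using monotonicity of R\'enyi entropies in the order), and to rewrite that R\'enyi quantity in terms of the trace norms $\|\sqrt{\rho_x}\sqrt{\rho_{x'}}\|_1=F(\rho_x,\rho_{x'})$. Jensen's inequality, applied to pull the average inside the logarithm, combined with the identity above, should then yield the desired bound.

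For (ii), I would reduce to the binary-input case, where the assertion reads $I_2(\rho,\sigma) \le (\log 2)\sqrt{1-F(\rho,\sigma)^2}$ with $I_2(\rho,\sigma):=H\bigl(\tfrac{\rho+\sigma}{2}\bigr)-\tfrac{1}{2}(H(\rho)+H(\sigma))$; this binary version follows from the Fannes--Audenaert continuity inequality combined with the Fuchs--van de Graaf bound $T(\rho,\sigma)\le\sqrt{1-F(\rho,\sigma)^2}$. For general $q$, I would use a perfect-matching averaging argument: for each perfect matching $M$ of $G$ (take $q$ even; the odd case admits a small modification), the quantum mixture inequality gives $H(\bar\rho)\le\log(q/2) + \tfrac{2}{q}\sum_{\{x,x'\}\in M} H\bigl(\tfrac{\rho_x+\rho_{x'}}{2}\bigr)$. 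Averaging over all matchings produces $H(\bar\rho)\le \log(q/2) + \tfrac{1}{q(q-1)}\sum_{x\neq x'} H\bigl(\tfrac{\rho_x+\rho_{x'}}{2}\bigr)$, which rearranges to $I(W)\le \log(q/2) + \tfrac{1}{q(q-1)}\sum_{x\neq x'} I_2(\rho_x,\rho_{x'})$. Invoking the binary bound and applying Jensen's inequality to the concave function $u\mapsto\sqrt{1-u^2}$ (whose argument then averages to $F(W)$) gives (ii).

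For (iii), the bound has the Pythagorean form $(e^{I(W)}-1)^2+A^2\le q^2$ with $A=1+(q-1)F(W)$. I would introduce joint purifications $|\phi_x\rangle^{BR}$ of the $\rho_x$ and analyse the Gram matrix $G_{xx'}=\langle\phi_x|\phi_{x'}\rangle$, whose non-zero eigenvalues coincide with those of $q\bar\rho_{BR}$ where $\bar\rho_{BR}=\tfrac{1}{q}\sum_x|\phi_x\rangle\langle\phi_x|$. Bounding $H(\bar\rho)$ via the R\'enyi entropy $H_{1/2}(\bar\rho)=2\log\Tr\sqrt{\bar\rho}$ and controlling $\Tr\sqrt{\bar\rho}$ by a Cauchy--Schwarz estimate on the eigenvalues of $G$ (tying the sum of overlaps back to $A$) should produce the claimed Pythagorean form. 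The main obstacle is that Uhlmann's theorem allows saturating a single pairwise overlap at a time but not all pairs simultaneously, so a symmetrisation or joint-convexity argument across the purification choices will be required; this is the step I expect to be hardest.
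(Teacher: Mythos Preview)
Your plans for (i) and (ii) are essentially what the paper does. For (i), the paper invokes Holevo's reliability bound at $s=1$, which is precisely the R\'enyi-$\tfrac{1}{2}$ route you describe: it produces $I(W)\ge -\log\bigl(\tfrac{1}{q^2}\sum_{x,x'}\Tr(\sqrt{\rho_x}\sqrt{\rho_{x'}})\bigr)$, and then $\Tr(\sqrt{\rho_x}\sqrt{\rho_{x'}})\le\|\sqrt{\rho_x}\sqrt{\rho_{x'}}\|_1=F(\rho_x,\rho_{x'})$ together with your identity finishes it. For (ii), the paper attaches classical side information $(S_1,S_2)$ encoding a uniformly random unordered pair containing $x$; conditioning on $(S_1,S_2)$ turns the channel into a binary one and the uniform average over pairs falls out automatically. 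This is the side-information phrasing of your matching argument and leads to the same line $I(W)\le\log(q/2)+\tfrac{1}{q(q-1)}\sum_{x\ne x'}I_2(\rho_x,\rho_{x'})$, followed by the binary bound and Jensen on $u\mapsto\sqrt{1-u^2}$. One caution: Fannes--Audenaert does not deliver the binary inequality $I_2(\rho,\sigma)\le(\log 2)\sqrt{1-F(\rho,\sigma)^2}$, because it carries either a $\log d$ factor (on $B$) or a suboptimal constant (via Alicki--Fannes on $X$). The clean proof is by Uhlmann: purify to $|\psi_0\rangle,|\psi_1\rangle$ with $|\langle\psi_0|\psi_1\rangle|=F$, so that $I_2$ is bounded by the entropy of $\tfrac{1}{2}|\psi_0\rangle\langle\psi_0|+\tfrac{1}{2}|\psi_1\rangle\langle\psi_1|$, whose eigenvalues are $\tfrac{1\pm F}{2}$; then use the elementary $-p\log p-(1-p)\log(1-p)\le 2(\log 2)\sqrt{p(1-p)}$.

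For (iii), however, the Gram-matrix route has a gap that symmetrisation will not close. Any joint purification satisfies $|G_{xx'}|\le F(\rho_x,\rho_{x'})$, with simultaneous equality across all pairs generically unattainable for $q\ge 3$. Writing $A':=\tfrac{1}{q}\sum_{x,x'}|G_{xx'}|\le A$, the most a spectral estimate on $G$ could deliver is a bound of the shape $I(W)\le H(\bar\rho_{BR})\le\log\bigl(1+\sqrt{q^2-A'^2}\bigr)$, and since $A'\le A$ this is \emph{weaker} than the target, not stronger. There is also a second difficulty: the quantity $\sum_{x,x'}|G_{xx'}|$ is not an eigenvalue moment of $G$, so it is unclear what Cauchy--Schwarz on the spectrum would grab onto; one would need $\Tr G^2=\sum_{x,x'}|G_{xx'}|^2$ instead, which does not reduce to $A$.

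The paper avoids joint purification altogether by working with the single pair of states $\rho^{XB}$ and $\rho^X\otimes\rho^B$. Concavity of fidelity gives
\[
F(\rho^{XB},\rho^X\otimes\rho^B)=\tfrac{1}{q}\sum_x F(\rho_x,\bar\rho)\;\ge\;\tfrac{1}{q^2}\sum_{x,x'}F(\rho_x,\rho_{x'})=\tfrac{A}{q},
\]
while the variational formula for trace distance, with the optimal decoding POVM as the witness operator, gives $D(\rho^{XB},\rho^X\otimes\rho^B)\ge \mathbb{P}_c(W)-\tfrac{1}{q}$. The Fuchs--van de Graaf inequality $D^2+F^2\le 1$ applied to this one pair then yields $(q\,\mathbb{P}_c(W)-1)^2+A^2\le q^2$, and the min-entropy bound $H(X|B)\ge-\log\mathbb{P}_c(W)$ converts this into $I(W)=\log q-H(X|B)\le\log(q\,\mathbb{P}_c(W))\le\log\bigl(1+\sqrt{q^2-A^2}\bigr)$. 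The idea you are missing is to pass to the fidelity and trace distance between the joint cq-state and the product state, where concavity of the fidelity does the averaging over pairs for you and no simultaneous Uhlmann saturation is required.
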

\begin{proof}
See Appendix \ref{appFid}.
\end{proof}

\vspace*{3mm}

In the above proposition, the first inequality implies that if $I(W)$ is close to 0 then $F(W)$ is close to 1. The same inequality also implies that if $F(W)$ is close to 0 then $I(W)$ is close to $\log q$. The second inequality implies that if $I(W)$ is close to $\log q$ then $F(W)$ is close to 0. The third inequality implies that if $F(W)$ is close to $1$ then $I(W)$ is close to 0.

\subsection{Non-commutative union bound}
Sen proved in \cite{Sen} the following ``non-commutative union bound":
\begin{equation}
\label{eqNonComUnionBoundSen}
1-\Tr(\Pi_r\ldots\Pi_1\rho\Pi_1\ldots\Pi_r)\leq 2\sqrt{\sum_{i=1}^r(1-\Tr(\Pi_i\rho))},
\end{equation}
where $\Pi_1,\ldots,\Pi_r$ are projection operators. This inequality was used in \cite{WildeGuhaCQ} to upper bound the probability of error of the quantum-successive cancellation decoder of the polar code constructed for a binary-input cq-channel. This was possible because the measurements used in \cite{WildeGuhaCQ} are projective. In this paper, the quantum successive cancellation decoder that we propose uses general POVM measurement. Therefore, we cannot use the inequality \eqref{eqNonComUnionBoundSen}.

We provide a ``non-commutative union bound" that is looser than \eqref{eqNonComUnionBoundSen} by a multiplicative factor of $\sqrt{r}$, but it is more general so that it can be applied to general POVMs.
\begin{mylem}
\label{lemNonComUnionBound}
Let $\Pi_1,\ldots,\Pi_r$ be $r$ positive operators satisfying $\Pi_1\leq I,\ldots,\Pi_r\leq I$. We have:
$$ 1-\Tr\left(\sqrt{\Pi_r}\ldots\sqrt{\Pi_1}\rho\sqrt{\Pi_1}\ldots\sqrt{\Pi_r}\right)\leq 2\sqrt{r}\sqrt{\sum_{i=1}^r\left(1-\Tr(\Pi_i\rho)\right)}.$$
\end{mylem}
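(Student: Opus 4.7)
My plan is to reduce to a pure-state calculation via purification of $\rho$, telescope $I - \sqrt{\Pi_r}\cdots\sqrt{\Pi_1}$ in an orientation that makes each defect $(I - \sqrt{\Pi_k})$ act directly on the purification, and then apply Cauchy--Schwarz to extract the $\sqrt{r}$ factor.

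Concretely, let $|\psi\rangle$ be a purification of $\rho$ on a joint system and set $M := \sqrt{\Pi_r}\cdots\sqrt{\Pi_1}$ and $|\phi\rangle := (M\otimes I)|\psi\rangle$, so that the trace appearing in the lemma equals $\|\phi\|^2$. Since $0 \leq \Pi_i \leq I$ forces $\|\sqrt{\Pi_i}\|_\infty \leq 1$, we have $\|\phi\| \leq \|\psi\| = 1$, and factoring $\|\psi\|^2 - \|\phi\|^2 = (\|\psi\|+\|\phi\|)(\|\psi\|-\|\phi\|)$ together with the reverse triangle inequality gives
\[
1 - \Tr\bigl(\sqrt{\Pi_r}\cdots\sqrt{\Pi_1}\rho\sqrt{\Pi_1}\cdots\sqrt{\Pi_r}\bigr) \leq 2\bigl\|(I - M)\otimes I\,|\psi\rangle\bigr\|.
\]
Now I would telescope with $D_k := \sqrt{\Pi_r}\cdots\sqrt{\Pi_k}$ and $D_{r+1} := I$, which using $D_k = D_{k+1}\sqrt{\Pi_k}$ yields $I - M = \sum_{k=1}^r D_{k+1}(I - \sqrt{\Pi_k})$; here each $(I - \sqrt{\Pi_k})$ sits on the inside, adjacent to $|\psi\rangle$. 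Because $\|D_{k+1}\|_\infty \leq 1$, the triangle inequality gives $\|(I - M)\otimes I|\psi\rangle\| \leq \sum_k \|(I - \sqrt{\Pi_k})\otimes I|\psi\rangle\|$, and the scalar bound $(1 - \sqrt{x})^2 \leq 1 - x$ on $[0,1]$ lifts to the operator inequality $(I - \sqrt{\Pi_k})^2 \leq I - \Pi_k$, so $\|(I - \sqrt{\Pi_k})\otimes I|\psi\rangle\|^2 \leq 1 - \Tr(\Pi_k \rho)$. A single Cauchy--Schwarz step bounds the resulting sum of square roots by $\sqrt{r\sum_k(1 - \Tr(\Pi_k\rho))}$, and chaining the three estimates produces the claimed inequality.

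The main obstacle, as I see it, is picking the correct orientation of the telescope. The alternative and seemingly more natural decomposition $I - M = \sum_k (I - \sqrt{\Pi_k})A_{k-1}$, with $A_{k-1} := \sqrt{\Pi_{k-1}}\cdots\sqrt{\Pi_1}$, places each $(I - \sqrt{\Pi_k})$ on the outside so that it acts on the partially-measured vector $(A_{k-1}\otimes I)|\psi\rangle$ rather than on $|\psi\rangle$ itself. After the same operator inequality the summands become $\langle\psi|A_{k-1}^\dagger(I - \Pi_k)A_{k-1}\otimes I|\psi\rangle$, whose sum telescopes identically to $1 - \|\phi\|^2$; substituting back produces only the vacuous estimate $1 - \|\phi\|^2 \leq 2\sqrt{r(1 - \|\phi\|^2)}$. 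Keeping each $(I - \sqrt{\Pi_k})$ on the inside, with the operators $D_{k+1}$ (all of operator norm at most one) on the outside, is what anchors every contribution to the original purification $|\psi\rangle$ and thus to the individual error $1 - \Tr(\Pi_k \rho)$.
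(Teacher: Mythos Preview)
Your proof is correct but follows a genuinely different route from the paper's. The paper telescopes the trace itself as a sum of successive sandwich differences $\Tr\bigl(\sqrt{\Pi_{r+1}}\cdots\sqrt{\Pi_{i+1}}(\rho-\sqrt{\Pi_i}\rho\sqrt{\Pi_i})\sqrt{\Pi_{i+1}}\cdots\sqrt{\Pi_{r+1}}\bigr)$, bounds each by the trace norm $\|\rho-\sqrt{\Pi_i}\rho\sqrt{\Pi_i}\|_1$, and then invokes the gentle-measurement (Ogawa--Nagaoka) inequality $\|\rho-\sqrt{X}\rho\sqrt{X}\|_1\leq 2\sqrt{1-\Tr(X\rho)}$, finishing with concavity of the square root. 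You instead purify $\rho$, telescope the operator $I-M$ one-sidedly so that each defect $(I-\sqrt{\Pi_k})$ acts directly on the purification, and use only the elementary spectral bound $(I-\sqrt{\Pi_k})^2\leq I-\Pi_k$ together with Cauchy--Schwarz. Your argument is more self-contained, avoiding the external trace-norm lemma entirely; the paper's argument avoids purification and stays at the density-matrix level. Both approaches yield the same constant $2\sqrt{r}$, and your discussion of why the ``wrong'' telescope orientation collapses to a tautology is a nice clarification of the mechanism.
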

\begin{proof}
See Appendix \ref{appNonComUnionBound}.
\end{proof}

\section{Polarization process}

Since any set can be endowed with an Abelian group operation, we may assume that one such operation on $G$ is fixed. We will denote this Abelian group operation additively.

Let $W: x\in G\longrightarrow \rho_x \in \mathcal{DM}(k)$ be a cq-channel. Define the channels $W^-: u_1\in G\longrightarrow \rho_{u_1}^- \in \mathcal{DM}(k^2)$ and $W^+: u_2\in G\longrightarrow \rho_{u_2}^+ \in \mathcal{DM}(k^2\cdot q)$ as:
$$\rho_{u_1}^-=\frac{1}{q}\sum_{u_2\in G} \rho_{u_1+u_2}\otimes\rho_{u_2},$$
and
$$\rho_{u_2}^+=\frac{1}{q}\sum_{u_1\in G} \rho_{u_1+u_2}\otimes\rho_{u_2}\otimes|u_1\rangle\langle u_1|.$$

Moreover for every $n>0$ and every $s=(s_1,\ldots,s_n)\in\{-,+\}^n$, define $W^s=(\ldots((W^{s_1})^{s_2})\ldots)^{s_n}$.

\begin{myrem}
\label{remPolarSim}
$W^-$ and $W^+$ can be constructed as follows:
\begin{itemize}
\item Two independent and uniform random variables $U_1,U_2$ are generated in $G$.
\item $X_1=U_1+U_2$ and $X_2=U_2$ are computed.
\item $X_1$ is sent through one copy of the channel $W$. Let $B_1$ be the quantum system describing the output.
\item $X_2$ is sent through another copy of the channel $W$ (independent from the one that was used for $X_1$). Let $B_2$ be the quantum system describing the output.
\end{itemize}
It can be easily seen that the channels $U_1\longrightarrow B_1B_2$ and $U_2\longrightarrow B_1B_2U_1$ simulate $W^-$ and $W^+$ respectively. 
\end{myrem}

We have:
\begin{align*}
I(W^-)+I(W^+)&=I(U_1;B_1B_2)+I(U_2;B_1B_2U_1)=I(U_1;B_1B_2)+I(U_2;B_1B_2|U_1)\\
&=I(U_1U_2;B_1B_2)=I(X_1X_2;B_1B_2)=I(X_1;B_1)+I(X_2;B_2)=2I(W).
\end{align*}

This shows that the total symmetric Holevo information is conserved. Moreover,
$$I(W^+)=I(U_2;B_1B_2U_1)\geq I(U_2;B_2)=I(X_2;B_2)=I(W)$$
and
$$I(W^-)=2I(W)-I(W^+)\leq I(W).$$

Let us now study the reliability of the channel and how it is affected after one step of polarization. But first let us define the quantity $F_d(W)$ for every $d\in G$:
$$F_d(W)=\frac{1}{q}\sum_{x\in G}F(\rho_x,\rho_{x+d}).$$
Clearly, $0\leq F_d(W)\leq 1$ and $F_0(W)=1$. Note that
$$F(W)=\frac{1}{q-1}\sum_{\substack{d\in G,\\d\neq 0}}F_d(W).$$

Define $\displaystyle F_{\max}(W)=\max_{\substack{d\in G,\\d\neq 0}}F_d(W)$. Clearly, $F(W)\leq F_{\max}(W)\leq (q-1)F(W)$.

\begin{myprop}
\label{propFPolar}
For every $d\in G$, we have:
\begin{itemize}
\item $F_d(W^+)=F_d(W)^2$.
\item $\displaystyle F_d(W)\leq F_d(W^-)\leq 2F_d(W) + \sum_{\substack{\Delta\in G,\\
\Delta\neq 0,\\\Delta\neq -d}}F_\Delta(W)F_{d+\Delta}(W)$.
\end{itemize}
\end{myprop}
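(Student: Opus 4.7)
The plan is to handle the three claims separately, leveraging multiplicativity of fidelity under tensor products, $F(\rho\otimes\rho',\sigma\otimes\sigma')=F(\rho,\sigma)F(\rho',\sigma')$, as a common tool, and invoking strong concavity of fidelity for the lower bound on $F_d(W^-)$ and a Rotfel'd-type trace inequality for the upper bound.

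For $F_d(W^+)$, the classical $|u_1\rangle\langle u_1|$ register makes $\rho^+_{u_2}$ block-diagonal in $u_1$, so
\begin{align*}
F(\rho^+_{u_2},\rho^+_{u_2+d})
&=\frac{1}{q}\sum_{u_1}F(\rho_{u_1+u_2}\otimes\rho_{u_2},\rho_{u_1+u_2+d}\otimes\rho_{u_2+d})\\
&=F(\rho_{u_2},\rho_{u_2+d})\cdot\frac{1}{q}\sum_{u_1}F(\rho_{u_1+u_2},\rho_{u_1+u_2+d}).
\end{align*}
The inner average is independent of $u_2$ and equals $F_d(W)$, and averaging the outer factor over $u_2$ yields a second copy of $F_d(W)$, giving $F_d(W^+)=F_d(W)^2$.

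For $F_d(W^-)\ge F_d(W)$, I would apply the strong (joint) concavity of fidelity to the uniform mixtures defining $\rho^-_{u_1}$ and $\rho^-_{u_1+d}$, pairing the two sums by the common index $u_2$. Since $F(\rho_{u_2},\rho_{u_2})=1$ and $F$ is multiplicative on tensor products, this gives
$$F(\rho^-_{u_1},\rho^-_{u_1+d})\ge\frac{1}{q}\sum_{u_2}F(\rho_{u_1+u_2},\rho_{u_1+d+u_2}),$$
whose average over $u_1$ is exactly $F_d(W)$.

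The upper bound on $F_d(W^-)$ is the main obstacle. The classical proof uses $\sqrt{\sum_i a_i\sum_j b_j}\le\sum_{i,j}\sqrt{a_ib_j}$, and its naive operator analogue $\sqrt{A+B}\le\sqrt A+\sqrt B$ is known to fail for non-commuting positive operators. I would substitute the \emph{scalar} subadditivity $F(A_1+A_2,B)\le F(A_1,B)+F(A_2,B)$, where $F(A,B):=\|\sqrt A\sqrt B\|_1$ extends the fidelity to unnormalized positive operators. This follows from the identity $\|\sqrt A\sqrt B\|_1=\Tr\sqrt{\sqrt B\,A\,\sqrt B}$ combined with Rotfel'd's trace inequality $\Tr\sqrt{X+Y}\le\Tr\sqrt X+\Tr\sqrt Y$ (a consequence of operator concavity of $\sqrt{\cdot}$ together with $\sqrt 0=0$), applied with $X=\sqrt B\,A_1\sqrt B$ and $Y=\sqrt B\,A_2\sqrt B$. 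Iterating in both arguments yields $F(\sum_iA_i,\sum_jB_j)\le\sum_{i,j}F(A_i,B_j)$; combined with tensor-product multiplicativity, this gives
$$F(\rho^-_{u_1},\rho^-_{u_1+d})\le\frac{1}{q}\sum_{u_2,u_2'}F(\rho_{u_1+u_2},\rho_{u_1+d+u_2'})F(\rho_{u_2},\rho_{u_2'}).$$
Averaging over $u_1$ and substituting $v=u_1+u_2$, $\Delta=u_2'-u_2$ produces $F_d(W^-)\le\sum_{\Delta\in G}F_{d+\Delta}(W)F_\Delta(W)$. Splitting off $\Delta=0$ and $\Delta=-d$, each of which contributes $F_d(W)$ after using $F_0(W)=1$ and $F_{-d}(W)=F_d(W)$ (from the symmetry of $F$), completes the bound.
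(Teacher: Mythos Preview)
Your proof is correct and follows essentially the same route as the paper. The paper likewise reduces the $W^-$ upper bound to the inequality $F\big(\sum_i p_i\rho_i,\sum_j q_j\sigma_j\big)\le\sum_{i,j}\sqrt{p_iq_j}\,F(\rho_i,\sigma_j)$, which it derives from the trace inequality $\Tr\sqrt{A+B}\le\Tr\sqrt{A}+\Tr\sqrt{B}$; the only cosmetic difference is that you attribute this to Rotfel'd (operator concavity of $\sqrt{\cdot}$ plus $\sqrt{0}=0$), whereas the paper proves it directly via operator monotonicity of $C\mapsto C^{-1}$ and $C\mapsto\sqrt{C}$.
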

\begin{proof}
See Appendix \ref{appFPolar}.
\end{proof}

\begin{mycor}
\label{corFidPolar}
We have:
\begin{itemize}
\item $F_{\max}(W^+)=F_{\max}(W)^2$.
\item $F_{\max}(W)\leq F_{\max}(W^-)\leq q F_{\max}(W)$.
\item $F(W^+)\leq \min\Big\{F(W),\;(q-1)^2 F(W)^2\Big\}$.
\item $F(W)\leq F(W^-)\leq  q(q-1)F(W)$
\end{itemize}
\end{mycor}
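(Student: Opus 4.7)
The plan is to derive all four inequalities as elementary consequences of Proposition \ref{propFPolar} by taking either a maximum or an average over $d \neq 0$. The only auxiliary facts I would need are $0 \leq F_d(W) \leq 1$ and the relation $F_{\max}(W) \leq (q-1) F(W)$ noted just before the proposition.

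For the $F_{\max}$ claims, the first equality is immediate from $F_d(W^+) = F_d(W)^2$ and monotonicity of squaring on $[0,1]$. The lower bound $F_{\max}(W) \leq F_{\max}(W^-)$ is just the maximum of $F_d(W) \leq F_d(W^-)$. For the upper bound on $F_{\max}(W^-)$, I would bound the right-hand side of the second inequality of Proposition \ref{propFPolar} termwise: the leading $2 F_d(W)$ by $2 F_{\max}(W)$, and the sum, which has exactly $q-2$ terms (since $\Delta = 0$ and $\Delta = -d$ are distinct exclusions when $d \neq 0$), by $(q-2) F_{\max}(W)^2 \leq (q-2) F_{\max}(W)$. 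Adding gives $q F_{\max}(W)$, and maximizing over $d \neq 0$ concludes.

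For the $F$ claims, I would expand $F(W^\pm) = \frac{1}{q-1}\sum_{d\neq 0} F_d(W^\pm)$. For $F(W^+)$, substitute $F_d(W^+) = F_d(W)^2$: the bound by $F(W)$ uses $F_d(W)^2 \leq F_d(W)$, and the bound by $(q-1)^2 F(W)^2$ follows from $F_d(W)^2 \leq F_{\max}(W) \cdot F_d(W) \leq (q-1) F(W) \cdot F_d(W)$ and summing. For $F(W^-)$ the lower bound is immediate; for the upper bound I would sum the second inequality of Proposition \ref{propFPolar} over $d \neq 0$ and interchange the two summations, noting that for each fixed $\Delta \neq 0$ the shifted variable $d + \Delta$ ranges over $G \setminus \{0, \Delta\}$, so its sum is at most $(q-1) F(W)$. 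This gives $F(W^-) \leq 2 F(W) + (q-1) F(W)^2 \leq (q+1) F(W)$; for $q \geq 3$ this is already $\leq q(q-1) F(W)$, and for $q = 2$ the sum in Proposition \ref{propFPolar} is empty so one gets $F(W^-) \leq 2 F(W) = q(q-1) F(W)$ directly.

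The main obstacle is the last claim: the double-sum interchange must be done carefully so that no index is overcounted, and obtaining the stated constant $q(q-1)$ requires splitting into the cases $q = 2$ and $q \geq 3$, because a single uniform argument yields a slightly weaker factor such as $(q+1)$ or $q$ that is larger than $q(q-1)$ for $q = 2$.
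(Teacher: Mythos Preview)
Your proposal is correct. The first two bullets, the bound $F(W^+)\leq F(W)$, and the lower bound $F(W)\leq F(W^-)$ are handled exactly as in the paper.

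The difference lies in how you establish the two remaining upper bounds. For $F(W^+)\leq (q-1)^2 F(W)^2$ and $F(W^-)\leq q(q-1)F(W)$, the paper takes a shorter route: it simply uses $F\leq F_{\max}$ together with the already-proved $F_{\max}$ bullets, namely
\[
F(W^+)\leq F_{\max}(W^+)=F_{\max}(W)^2\leq (q-1)^2 F(W)^2,\qquad
F(W^-)\leq F_{\max}(W^-)\leq qF_{\max}(W)\leq q(q-1)F(W).
\]
This is uniform in $q$ and avoids any double-sum interchange or case split. Your direct averaging argument is also valid; in fact, for $F(W^+)$ it yields the sharper constant $(q-1)F(W)^2$, and for $F(W^-)$ it gives the tighter bound $2F(W)+(q-1)F(W)^2\leq (q+1)F(W)$ when $q\geq 3$. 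The price is the extra bookkeeping and the separate treatment of $q=2$ that you identified as the main obstacle. So your approach buys slightly better constants at the cost of a longer argument, while the paper's approach buys brevity by reusing the $F_{\max}$ results.
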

\begin{proof}
First equation:
$$F_{\max}(W^+)=\max_{\substack{d\in G,\\d\neq 0}} F_d(W^+)=\max_{\substack{d\in G,\\d\neq 0}} F_d(W)^2=\left(\max_{\substack{d\in G,\\d\neq 0}} F_d(W)\right)^2=F_{\max}(W)^2.$$

\vspace*{3mm}

Second equation:
\begin{align*}
F_{\max}(W)&=\max_{\substack{d\in G,\\d\neq 0}} F_d(W)\leq \max_{\substack{d\in G,\\d\neq 0}} F_d(W^-)=F_{\max}(W^-)\\
&\leq \max_{\substack{d\in G,\\d\neq 0}} \Big(2 F_d(W) + \sum_{\substack{\Delta\in G,\\
\Delta\neq 0,\\\Delta\neq -d}}F_\Delta(W)F_{d+\Delta}(W)\Big)\\
&\leq 2 F_{\max}(W) + (q-2)F_{\max}(W)^2\leq q F_{\max}(W).
\end{align*}

\vspace*{3mm}

First part of third equation:
$$F(W^+)=\frac{1}{q-1}\sum_{\substack{d\in G,\\d\neq 0}}F_d(W^+)=\frac{1}{q-1}\sum_{\substack{d\in G,\\d\neq 0}}F_d(W)^2\leq \frac{1}{q-1}\sum_{\substack{d\in G,\\d\neq 0}}F_d(W)=F(W).$$

\vspace*{3mm}

Second part of third equation:
$$F(W^+)\leq F_{\max}(W^+)=F_{\max}(W)^2\leq (q-1)^2F(W)^2.$$

\vspace*{3mm}

First inequality of the fourth equation:
$$F(W^-)=\frac{1}{q-1}\sum_{\substack{d\in G,\\d\neq 0}}F_d(W^-)\geq \frac{1}{q-1}\sum_{\substack{d\in G,\\d\neq 0}}F_d(W)=F(W).$$

\vspace*{3mm}

Second inequality of the fourth equation:
$$F(W^-)\leq F_{\max}(W^-)\leq q F_{\max}(W)\leq q(q-1)F(W).$$
\end{proof}

The following lemma is very useful to prove polarization results.

\begin{mylem}
\cite{SahebiPradhan}
\label{lemPolarIT}
Let $\{B_n\}_{n\geq 0}$ be a sequence of independent and uniformly distributed $\{-,+\}$-valued random variables. Suppose $\{I_n\}_{n\geq 0}$ and $\{T_n\}_{n\geq 0}$ are two processes adapted to the process $\{B_n\}_{n\geq 0}$ satisfying:
\begin{itemize}
\item[(1)] $0\leq I_n\leq \log q$.
\item[(2)] $\{I_n\}_{n\geq 0}$ converges almost surely to a random variable $I_\infty$.
\item[(3)] $0\leq T_n\leq 1$.
\item[(4)] $T_{n+1}=T_n^2$ when $B_{n+1}=+$.
\item[(5)] There exists a function $f(\epsilon)$ (depending only on $q$) satisfying $\displaystyle\lim_{\epsilon\to0}f(\epsilon)=0$ such that for all $n$, if $T_n<\epsilon$ then $I_n>\log q - f(\epsilon)$.
\item[(6)] There exists a function $g(\epsilon)$ (depending only on $q$) satisfying $\displaystyle\lim_{\epsilon\to0}g(\epsilon)=0$ such that for all $n$, if $T_n>1-\epsilon$ then $I_n<g(\epsilon)$.
\end{itemize}
Then $\displaystyle T_{\infty}=\lim_{n\to\infty}T_n$ exists almost surely. Moreover, we have $I_{\infty}\in \{0,\log q\}$ and $T_{\infty}\in \{0,1\}$ with probability 1.
\end{mylem}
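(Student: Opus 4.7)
The plan is to first pin $I_\infty$ down to the extremes $\{0,\log q\}$ using a ``long run of $+$'s'' argument based on (4), (5), and (6), and then exploit the squaring rule (4) together with (5), (6) to force $T_n$ to converge to $\{0,1\}$.

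For the first part, I argue by contradiction. Suppose $\Pr[\alpha \leq I_\infty \leq \log q - \alpha] > 0$ for some $\alpha > 0$. On this event $I_n$ eventually lies in $(\alpha/2,\ \log q - \alpha/2)$, and the contrapositives of (5), (6) trap $T_n$ in some interval $[\delta, 1-\delta] \subset (0,1)$ with $\delta$ depending only on $\alpha$, $f$, $g$. Borel--Cantelli~II applied to the disjoint block events $\{B_{jk+1} = \cdots = B_{(j+1)k} = +\}$, which are independent with common probability $2^{-k}$, guarantees almost surely infinitely many $n$ with $k$ consecutive $+$'s immediately following. Condition (4) iterated gives $T_{n+k} = T_n^{2^k} \leq (1-\delta)^{2^k}$ for any such $n$ in the tail; choosing $k$ large enough that $f((1-\delta)^{2^k}) < \alpha/4$, condition (5) then forces $I_{n+k} > \log q - \alpha/4$, contradicting $I_{n+k} \leq \log q - \alpha/2$. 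Hence $I_\infty \in \{0, \log q\}$ almost surely.

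For the second part, I treat the event $\{I_\infty = \log q\}$; the event $\{I_\infty = 0\}$ is symmetric, with the roles of (5), (6) (resp.\ of $0$ and $1$) interchanged. Since $I_n \to \log q$ and $g(\epsilon) \to 0$, for every $\epsilon > 0$ eventually $I_n > g(\epsilon)$, so (6) yields $T_n \leq 1 - \epsilon$ eventually and thus $\limsup_n T_n < 1$. Combining this uniform upper bound with another long-$+$-run application of Borel--Cantelli~II shows $\liminf_n T_n = 0$. The remaining step is to upgrade this to $T_n \to 0$. Here I would use that along any subsequence $n_k$ with $T_{n_k} \to t^*$, the independence of $B_{n_k+1}$ from $\mathcal{F}_{n_k} := \sigma(B_1,\ldots,B_{n_k})$ together with a conditional Borel--Cantelli argument extracts a sub-subsequence on which $B_{n_k+1} = +$, so $T_{n_k+1} = T_{n_k}^2 \to (t^*)^2$. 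The set of subsequential limits of $T_n$ is therefore closed under $t \mapsto t^2$, contains the whole orbit $\{(t^*)^{2^j}\}_{j \geq 0}$, and (by closedness) contains $0$ whenever $t^* < 1$; since $\limsup T_n < 1$ excludes $t^* = 1$, a careful chaining of this squaring argument with (5) rules out any $t^* \in (0,1)$, and we conclude $T_n \to 0$.

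The main obstacle is the last upgrade from ``$\liminf T_n = 0$ and $\limsup T_n < 1$'' to genuine convergence $T_n \to 0$: on $-$-steps we have no direct control over $T_n$, and the only lever against spurious upward jumps is the assumed convergence of $I_n$ channelled through (5) and (6). Executing the subsequence argument cleanly, so that no subsequential limit of $T_n$ survives in $(0,1)$, is the delicate step where the bulk of the technical work will lie.
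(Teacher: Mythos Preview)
The paper does not prove this lemma; it cites it from \cite{SahebiPradhan}. Your Part~1, establishing $I_\infty\in\{0,\log q\}$ via Borel--Cantelli on long $+$-runs, is correct and is the standard argument.

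Your Part~2 has a genuine gap, and in fact it \emph{cannot} be closed under hypotheses (1)--(6) alone. The ``careful chaining with (5)'' you invoke gives nothing on the event $\{I_\infty=\log q\}$: condition~(5) says that small $T_n$ forces large $I_n$, but you already know $I_n$ is large there, so (5) carries no information in the direction you need. Worse, the conclusion that $T_n$ converges is actually false under (1)--(6) as stated. Take $I_n\equiv\log q$, $T_0=\tfrac12$, $T_{n+1}=T_n^2$ when $B_{n+1}=+$, and $T_{n+1}=\tfrac12$ when $B_{n+1}=-$. Conditions (1)--(5) are immediate; for (6), note $T_n\le\tfrac12$ always, so the hypothesis $T_n>1-\epsilon$ is vacuous for $\epsilon\le\tfrac12$, while for $\epsilon>\tfrac12$ one may set $g(\epsilon)>\log q$. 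Yet $T_n=\tfrac12$ infinitely often and $T_n$ is arbitrarily small infinitely often, so $\lim T_n$ does not exist. What the lemma as quoted is missing is either (a) control on the $-$-step, e.g.\ $T_n\le T_{n+1}\le c\,T_n$ when $B_{n+1}=-$, or (b) the converses of (5) and (6). Both are available in every application in this paper (via Proposition~\ref{propFid}, Proposition~\ref{propFPolar}, and Corollary~\ref{corFidPolar}); with the converse of (5) in hand, $I_n\to\log q$ directly forces $T_n\to 0$, and your Part~2 collapses to a one-line consequence of Part~1, making the subsequence-squaring machinery unnecessary.
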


\section{Polarization for $G=\mathbb{F}_q$}

In this section, we focus on the particular case where $G=\mathbb{F}_q$ where $q$ is prime. The main result of this section is the following theorem.

\begin{mythe}
Let $W: x\in \mathbb{F}_q\longrightarrow \rho_x \in \mathcal{DM}(k)$ be a cq-channel with input in $\mathbb{F}_q$. For every $\delta>0$, we have:
\begin{equation}
\lim_{n\to\infty} \frac{1}{2^n}\Big|\Big\{s\in\{-,+\}^n:\; \delta \leq I(W^s) \leq \log q -\delta\Big\}\Big|=0.
\label{eqPolar}
\end{equation}
Moreover, for every $\beta<\frac{1}{2}$, we have:
\begin{equation}
\lim_{n\to\infty} \frac{1}{2^n}\left|\left\{s\in\{-,+\}^n:\; I(W^s)\geq \log q - \delta,\; F(W^s)<2^{-2^{\beta n}}\right\}\right|=\frac{1}{\log q}I(W).
\label{eqPolarRate}
\end{equation}
\label{thePolarFq}
\end{mythe}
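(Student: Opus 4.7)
The plan is to apply Lemma \ref{lemPolarIT} to the processes $I_n := I(W^{B_1\cdots B_n})$ and $T_n := F_{\max}(W^{B_1\cdots B_n})$, where $B_1, B_2, \ldots$ are i.i.d.\ uniform $\{-,+\}$-valued, and then to deduce the rate claim \eqref{eqPolarRate} by running a standard Ar{\i}kan--Telatar random-walk argument on $T_n$.

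For \eqref{eqPolar} I verify the six hypotheses of Lemma \ref{lemPolarIT}. Conditions (1), (3), and (4) are immediate from the preliminaries, the definition of $F_{\max}$, and the first equation of Corollary \ref{corFidPolar}. Condition (2) follows because $I(W^-)+I(W^+) = 2I(W)$ makes $\{I_n\}$ a bounded martingale, which converges almost surely. For condition (5), $F(W^s) \leq F_{\max}(W^s) < \epsilon$ combined with Proposition \ref{propFid}(i) yields $I(W^s) \geq \log q - \log(1+(q-1)\epsilon)$, so one may take $f(\epsilon) = \log(1+(q-1)\epsilon)$.

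Condition (6) is the heart of the argument and where primeness of $q$ enters. If $F_{\max}(W^s) > 1-\epsilon$ then $F_{d_0}(W^s) > 1-\epsilon$ for some $d_0 \neq 0$. I plan to convert to trace distance using the Fuchs--van de Graaf inequality $\frac{1}{2}\|\rho-\sigma\|_1 \leq \sqrt{1-F(\rho,\sigma)^2} \leq \sqrt{2(1-F(\rho,\sigma))}$ together with Jensen, giving $\frac{1}{q}\sum_x \|\rho^s_x - \rho^s_{x+d_0}\|_1 \leq 2\sqrt{2\epsilon}$. The triangle inequality for $\|\cdot\|_1$ and the other Fuchs--van de Graaf bound $1-F \leq \frac{1}{2}\|\cdot\|_1$ then give $F_{kd_0}(W^s) \geq 1-k\sqrt{2\epsilon}$ for every $k \geq 1$. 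Since $q$ is prime and $d_0 \neq 0$, $\{kd_0 : 1\leq k \leq q-1\} = \mathbb{F}_q \setminus \{0\}$, so $F(W^s) \geq 1-(q-1)\sqrt{2\epsilon}$, and Proposition \ref{propFid}(iii) bounds $I(W^s)$ by an explicit $g(\epsilon) \to 0$. Lemma \ref{lemPolarIT} then gives $I_\infty \in \{0,\log q\}$ almost surely, and the martingale identity $\mathbb{E}[I_\infty] = I(W)$ forces $\mathbb{P}(I_\infty = \log q) = I(W)/\log q$; \eqref{eqPolar} follows by bounded convergence.

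For \eqref{eqPolarRate} I apply the Ar{\i}kan--Telatar argument to $T_n$, whose dynamics from Corollary \ref{corFidPolar} are $T_{n+1} = T_n^2$ on $+$ steps and $T_{n+1} \leq q T_n$ on $-$ steps. On $A := \{I_\infty = \log q\}$, $T_n \to 0$ almost surely (contrapositive of condition (5)). Given $\beta < 1/2$ and $\gamma > 0$, picking $n_0$ large and $\eta > 0$ small so that $\mathbb{P}(\{T_{n_0} < \eta\}\cap A) > \mathbb{P}(A) - \gamma$, and then combining a Chernoff bound on the number of $+$ steps in $B_{n_0+1},\ldots,B_n$ with a direct bookkeeping of how the $-$ steps erode the squarings, one obtains $T_n \leq 2^{-2^{\beta n}}$ on a sub-event of $A$ of probability at least $\mathbb{P}(A) - 2\gamma$. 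Since $F(W^s) \leq F_{\max}(W^s)$ and small $T_n$ implies $I(W^s) \geq \log q - \delta$ via condition (5), \eqref{eqPolarRate} follows. The principal obstacle throughout is condition (6): the classical Sasoglu--Telatar--Ar{\i}kan proof uses a subadditivity-type inequality for the Bhattacharyya parameter that has no direct quantum analogue, and routing through the trace distance is the key quantum-specific substitute.
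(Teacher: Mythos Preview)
Your proof is correct and follows the same overall architecture as the paper's: verify the six hypotheses of Lemma~\ref{lemPolarIT} for $(I_n,T_n)=(I(W_n),F_{\max}(W_n))$, and then run the Ar{\i}kan--Telatar bootstrapping to get~\eqref{eqPolarRate}. The one substantive difference is in condition~(6). The paper uses that $A(\rho,\sigma):=\arccos F(\rho,\sigma)$ is a metric: from $F_{d_0}(W)>1-\epsilon$ it first extracts the \emph{pointwise} bound $F(\rho_x,\rho_{x+d_0})\geq 1-q\epsilon$ (since $1-F_{d_0}(W)=\tfrac{1}{q}\sum_x(1-F(\rho_x,\rho_{x+d_0}))$) and then chains $A(\rho_x,\rho_{x+ld_0})\leq\sum_{i=0}^{l-1}A(\rho_{x+id_0},\rho_{x+(i+1)d_0})$ to control every pair. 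You instead pass to the trace distance via Fuchs--van de Graaf, work only with \emph{averaged} quantities, and use the triangle inequality for $\|\cdot\|_1$. Both are legitimate metric-based substitutes for the classical Bhattacharyya subadditivity; yours is slightly more elementary (no pointwise extraction, only Jensen), while the paper's $\arccos$ route is what it later reuses in the non-prime setting (Lemmas~\ref{lemFdSumRel} and~\ref{lemFdFmaxRel}). For~\eqref{eqPolarRate} the paper applies the rate argument directly to $F(W_n)$ via the bounds $F(W^-)\leq q^2F(W)$ and $F(W^+)\leq q^2F(W)^2$ from Corollary~\ref{corFidPolar}, whereas you run it on $F_{\max}(W_n)$ (with the cleaner $T_{n+1}=T_n^2$ on $+$) and then invoke $F\leq F_{\max}$; either works. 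One tiny imprecision: the identification $\{I_\infty=\log q\}=\{T_\infty=0\}$ that you need before the bootstrapping uses both conditions~(5) and~(6), not just the contrapositive of~(5).
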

\begin{proof}
Let $\{B_n\}_{n\geq 0}$ be a sequence of independent and uniformly distributed $\{-,+\}$-valued random variables. Define the cq-channel-valued process $\{W_n\}_{n\geq 0}$ as follows:
\begin{itemize}
\item $W_0=W$.
\item $W_{n}=W_{n-1}^{B_n}$ for every $n\geq 1$.
\end{itemize}
Let $I_n=I(W_n)$ and $\displaystyle T_n=F_{\max}(W_n)$. Let us check the conditions of Lemma \ref{lemPolarIT}. Conditions (1) and (3) follow from the properties of $I(W)$ and $F_{\max}(W)$. Condition (4) is satisfied because of Corollary \ref{corFidPolar}.

We have $\displaystyle\mathbb{E}(I_{n+1}|W_n)=\frac{1}{2}I(W_n^-)+\frac{1}{2}I(W_n^+)=I(W_n)$. This shows that $\{I_n\}_{n\geq 0}$ is a bounded martingale and so it converges almost surely. This shows that condition (2) is satisfied.

Condition (5) follows from the following inequality:
\begin{align*}
I(W)\stackrel{(a)}{\geq} \log\frac{q}{1+(q-1)F(W)}\geq \log\frac{q}{1+(q-1)F_{\max}(W)},
\end{align*}
where (a) is from Proposition \ref{propFid}. By choosing $f(\epsilon)=\log(1+(q-1)\epsilon)$, we can see that condition (5) is satisfied.

In order to show condition (6), we need to prove that if $F_{\max}(W)$ is close to $1$ then $I(W)$ is close to 0. Let $d$ be such that $F_d(W)=F_{\max}(W)$. We have: $$1-F_d(W)=\frac{1}{q}\sum_{x\in G}\Big(1-F(\rho_x,\rho_{x+d})\Big).$$ Therefore, for every $x\in G$ we have $1-F(\rho_x,\rho_{x+d})\leq q(1-F_d(W))$ and so
$$F(\rho_x,\rho_{x+d})\geq 1-q(1-F_d(W)).$$

Assume that $F_d(W)$ is high enough so that 
\begin{equation}
\label{eqAssump}
1-q(1-F_d(W))\geq \cos\frac{\pi}{2(q-1)}.
\end{equation}

Now let $x,x'\in G$ be such that $x\neq x'$. Define $A(\rho_x,\rho_{x'})=\arccos F(\rho_x,\rho_{x'})$ and let $\displaystyle l=\frac{x'-x}{d}\bmod q$. We have:
\begin{align*}
F(\rho_x,\rho_{x'})&=\cos \big(A(\rho_x,\rho_{x+ld})\big) \stackrel{(a)}{\geq} \cos\left(\sum_{i=0}^{l-1} A(\rho_{x+id},\rho_{x+(i+1)d}) \right) \\
&= \cos\left(\sum_{i=0}^{l-1} \arccos F(\rho_{x+id},\rho_{x+(i+1)d}) \right)\stackrel{(b)}{\geq} \cos\Big(l\cdot \arccos \Big(1-q\big(1-F_d(W)\big)\Big) \Big) \\
&\stackrel{(c)}{\geq} \cos\Big((q-1)\cdot \arccos \Big(1-q\big(1-F_d(W)\big)\Big) \Big),
\end{align*}
where (a) follows from the fact that $A(\rho_x,\rho_{x'})$ is a metric distance \cite{Nielsen}. (a), (b) and (c) are true because $\cos$ is a decreasing function on $\displaystyle \left[0,\frac{\pi}{2}\right]$ and we assumed Equation \eqref{eqAssump}. We deduce that
\begin{equation}
F(W)=\frac{1}{q(q-1)}\sum_{\substack{x,x'\in G,\\x\neq x'}} F(\rho_x,\rho_{x'})\geq \cos\Big((q-1)\cdot \arccos \Big(1-q\big(1-F_d(W)\big)\Big) \Big).
\label{eqFFmax}
\end{equation}
By combining Equation \eqref{eqFFmax} and inequality (iii) of Proposition \ref{propFid}, we get condition (6) of Lemma \ref{lemPolarIT}. Therefore, all the conditions of Lemma \ref{lemPolarIT} are satisfied. We conclude that $\{I(W_n)\}_{n\geq 0}$ converges almost surely to a random variable $I_\infty\in\{0,\log q\}$. This proves Equation \eqref{eqPolar}.

From Corollary \ref{corFidPolar} we can deduce that $F(W^-)\leq q^2F(W)$ and $F(W^+)\leq q^2F(W)^2$. Therefore, we can apply the same techniques that were used to prove \cite[Theorem 3.5]{SasogluThesis} in order to get Equation \eqref{eqPolarRate}.
\end{proof}

\vspace*{3mm}

Theorem \ref{thePolarFq} can be used to construct polar codes for any cq-channel whose input alphabet size is prime. The polar code construction, encoder and decoder are similar to the one described in \cite{WildeGuhaCQ}. The main idea is to send information only through synthetic cq-channels for which the symmetric Holevo information is close to $\log q$ and for which the average pairwise fidelity is less than $2^{-N^{\beta}}$, where $N=2^n$ is the blocklength of the polar code and $\beta<\frac{1}{2}$. We send frozen symbols that are known to the receiver through the remaining synthetic cq-channels. A quantum successive cancellation decoder that is similar to the one in \cite{WildeGuhaCQ} is applied. The probability of error can be shown to decay faster than $2^{-N^\beta}$ for any $\beta<\frac{1}{2}$. We postpone the accurate description and the study of the polar code till section VII where we construct polar codes in the more general case where $(G,+)$ is an arbitrary Abelian group.

\section{Polarization for arbitrary $(G,+)$}

In this section, $(G,+)$ is an arbitrary Abelian group. For every cq-channel $W: x\in G\longrightarrow \rho_x\in\mathcal{DM}(k)$ and for every subgroup $H$ of $G$, define the channel $W[H]: D\in G/H\longrightarrow \rho_D\in\mathcal{DM}(k)$ as follows:
$$\rho_D=\frac{1}{|D|}\sum_{x\in D}\rho_x.$$

$W[H]$ can be simulated as follows: if a coset $D\in G/H$ is chosen as input, a random variable $X$ is chosen uniformly from $D$ and then sent through the channel $W$.

It is easy to see that if $\displaystyle\rho^{XB}=\frac{1}{q}\sum_{x\in G}|x\rangle\langle x|^X\otimes \rho_x^B$, then $I(W[H])=I(X\bmod H; B)_\rho$.

The main result of this section is the following theorem.
\begin{mythe}
\label{thePolG}
Let $W: x\in G\longrightarrow \rho_x \in \mathcal{DM}(k)$ be a cq-channel. For every $\delta>0$, we have:
\begin{align*}
\lim_{n\to\infty} \frac{1}{2^n}\Big|\Big\{s\in\{-,+\}^n:\; &\exists H_s\;\text{a subgroup of}\; G,\\
&\big|I(W^s)-\log|G/H_s|\big|<\delta,\; \big|I(W^s[H_s])-\log|G/H_s|\big|<\delta\Big\}\Big|=1.
\end{align*}
\end{mythe}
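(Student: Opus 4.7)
The plan is to mirror the proof of Theorem \ref{thePolarFq} while tracking the quotient channels $W_n[H]$ for every subgroup $H\le G$ simultaneously, so as to isolate a random limiting subgroup $H_\infty\le G$ for which $W_n$ asymptotically behaves like the deterministic hom-channel onto $G/H_\infty$. I would let $\{B_n\}_{n\ge 0}$ be i.i.d.\ uniform in $\{-,+\}$, define the channel-valued process $\{W_n\}$ as in Section IV, and for each subgroup $H\le G$ monitor the process $I_n^H := I(W_n[H])$.

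The first key input is how polarization interacts with the quotient $[H]$. Direct comparison of the output states (reindexing the double sum defining $W^-$ by cosets of $H$) gives $(W^-)[H] = (W[H])^-$ as cq-channels, so the $-$ step commutes exactly with $[H]$. The $+$-step analogue fails, but $(W[H])^+$ is obtained from $(W^+)[H]$ by the CPTP post-processing $u_1\in G\mapsto u_1\bmod H\in G/H$ on the classical side register. CPTP-monotonicity of Holevo information and Uhlmann fidelity then yields $I((W^+)[H])\ge I((W[H])^+)$ and $F_{\max}((W^+)[H])\le F_{\max}(W[H])^2$. Combined with the conservation law applied to $W[H]$, the Holevo inequality makes $\{I_n^H\}$ a bounded sub-martingale in $[0,\log|G/H|]$, hence convergent almost surely to some $I_\infty^H$.

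For $H$ of prime index $p_H$ (so $G/H\cong\mathbb{F}_{p_H}$) I would then establish two-level polarization $I_\infty^H\in\{0,\log p_H\}$. The approach is to apply Lemma \ref{lemPolarIT} to the pair $I_n = I(W_n[H])$, $T_n=F_{\max}(W_n[H])$: conditions (1)--(3) are immediate, while condition (4) holds in the relaxed form $T_{n+1}\le T_n^2$ under $+$ (from the $+$-step inequality of the previous paragraph) and $T_n\le T_{n+1}\le p_H\, T_n$ under $-$ (from Corollary \ref{corFidPolar} applied to $W_n[H]$, using the $-$-commutativity). A companion inequality $I(W_n[H])\ge I(V_n^H)$, where $V_n^H:=(W[H])^{B_1\cdots B_n}$ is the genuine polarization of $W[H]$ as an $\mathbb{F}_{p_H}$-input channel (established by induction on $n$ via the CPTP processing identified above), combined with the two-level polarization of $V_n^H$ supplied by Theorem \ref{thePolarFq} applied to $\mathbb{F}_{p_H}$-inputs, transfers the upper-level convergence: on $\{I(V_n^H)\to\log p_H\}$ also $I_\infty^H = \log p_H$. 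On the complementary event, conditions (5) (Proposition \ref{propFid}(i) applied to $W_n[H]$) and (6) (the arccos-metric chain argument, valid because $G/H$ is cyclic of prime order), combined with the vanishing of the sub-martingale drift, force $I_\infty^H = 0$.

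Finally I would identify $H_s$ by strong induction on $|G|$. If no prime-index $H\le G$ has $I_\infty^H = \log p_H$, then a short argument using $F_d(W_n)\le F_{(d\bmod H)}(W_n[H])$ (a joint-concavity-of-fidelity estimate) together with Proposition \ref{propFid}(i) shows $I(W_n)\to 0$, so $H_s = G$ works. Otherwise, pick such an $H^*$ and apply the inductive hypothesis to the process $\{W_n[H^*]\}$ on the strictly smaller group $G/H^*$, obtaining a random subgroup $\bar H\le G/H^*$ satisfying the two target conditions for $W_n[H^*]$. Lifting $\bar H$ to $H_s\le G$ with $H^*\subseteq H_s$ and $H_s/H^*=\bar H$, the identity $W_n[H_s]=W_n[H^*][\bar H]$ gives $I(W_n[H_s])\to\log|G/H_s|$, and the matching $I(W_n)\to\log|G/H_s|$ combines the lower bound $I(W_n)\ge I(W_n[H_s])$ with an upper bound obtained from showing $F_d(W_n)\to 0$ for each $d\notin H_s$ (using the definition of $H_s$ to extract a prime-index $H\not\ni d$ with $I_\infty^H=\log p_H$, Proposition \ref{propFid}(ii) to force $F_{(d\bmod H)}(W_n[H])\to 0$, and the fidelity inequality to propagate to $F_d(W_n)\to 0$), followed by a Fannes-type continuity argument comparing $H(\rho_x)$ across each coset of $H_s$. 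The main obstacle will be the final upper-bound chain, where the interplay between prime-index two-level polarization, the one-sided commutation of $[H]$ with the polarization transformations, and the induction on $|G|$ must be choreographed carefully; in particular, transferring the polarization of $V_n^H$ back to the $W$-driven process $W_n[H]$ through data processing without losing the dichotomy is the most delicate step.
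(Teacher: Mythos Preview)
Your route is genuinely different from the paper's, and the two crucial steps do not close as written.

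\textbf{How the paper argues.} The paper never passes to the quotient channels $W_n[H]$ for the dichotomy; it works with the \emph{intrinsic} parameters $F_d(W_n)$, which obey the \emph{exact} identity $F_d(W^+)=F_d(W)^2$ (Proposition~\ref{propFPolar}). For every pair $M\subset H$ with $M$ maximal in $H$ (not only $H$ of prime index in $G$), it sets $I_n=I_{M|H}(W_n)=I(W_n[M])-I(W_n[H])$ and $T_n=F_{\max}^{M|H}(W_n)=\max_{d\in H\setminus M}F_d(W_n)$, so condition~(4) of Lemma~\ref{lemPolarIT} holds with \emph{equality}. From the resulting two-level polarization of each $F_{\max}^{M|H}(W_n)$ (Lemma~\ref{lemPolarIMH}) one deduces directly, via the arccos-metric identities of Lemmas~\ref{lemFdSumRel}--\ref{lemFdFmaxRel}, that each $F_d(W_n)$ polarizes and that $\{d:F_d^{(\infty)}=1\}$ is a subgroup (Proposition~\ref{propPolarFArb}); Lemma~\ref{lemFromFToI} then converts this to the statement on $I$. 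No induction on $|G|$ is used.

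\textbf{Where your argument breaks.} Two places. First, by taking $T_n=F_{\max}(W_n[H])$ you lose the exact squaring: data processing only yields $T_{n+1}\le T_n^2$ under $+$, and Lemma~\ref{lemPolarIT} is stated (and proved) for $T_{n+1}=T_n^2$. Your attempt to patch this via the comparison $I(W_n[H])\ge I(V_n^H)$ handles only the event $\{I(V_n^H)\to\log p_H\}$; on the complementary event $\{I(V_n^H)\to 0\}$ the inequality points the wrong way, and the appeal to ``vanishing of the sub-martingale drift'' together with conditions~(5),~(6) does not pin down $I_\infty^H$: those conditions are one-sided ($T_n$ extreme $\Rightarrow$ $I_n$ extreme), so bounding $T_n$ away from $0$ and $1$ gives no control on $I_n$. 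Second, and more structurally, the induction step does not close: $\{W_n[H^*]\}$ is \emph{not} the polarization process of any fixed channel on $G/H^*$ (only the $-$-branch commutes exactly with $[H^*]$), so the inductive hypothesis---which is Theorem~\ref{thePolG} for polarization processes---cannot be invoked on it. You would need to prove the theorem for a strictly larger class of processes (those satisfying one-sided commutation with the polar transforms), which is a different and harder statement. The paper's choice of $F_{\max}^{M|H}(W_n)$, defined directly on $W_n$ rather than on a quotient, is precisely what sidesteps both obstacles.
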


Theorem \ref{thePolG} can be interpreted as follows: As the number of polarization steps becomes large, the synthetic cq-channels polarize to homomorphism channels projecting their input onto a quotient group of $G$. The inequality $\big|I(W^s[H_s])-\log|G/H_s|\big|<\delta$ means that from the output of $W^s$, one can determine with high probability the coset of $H_s$ to which the input belongs. The inequality $\big|I(W^s)-\log|G/H_s|\big|<\delta$ means that there is almost no other information about the input that can be determined from the output of $W^s$.

In order to prove Theorem \ref{thePolG} we need several definitions and lemmas. Let $\{B_n\}_{n\geq 0}$ be a sequence of independent and uniformly distributed $\{-,+\}$-valued random variables. Define the cq-channel-valued process $\{W_n\}_{n\geq 0}$ as follows:
\begin{itemize}
\item $W_0=W$.
\item $W_{n}=W_{n-1}^{B_n}$ for every $n\geq 1$.
\end{itemize}

\begin{mylem}
\label{lemSubMart}
For every subgroup $H$ of $G$, the process $\{I(W_n[H])\}_{n\geq 0}$ is a sub-martingale.
\end{mylem}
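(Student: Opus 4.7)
The plan is to show directly that for every subgroup $H$,
\[
I\bigl((W^{-})[H]\bigr)+I\bigl((W^{+})[H]\bigr)\;\geq\;2\,I(W[H]),
\]
since then $\mathbb{E}\bigl[I(W_{n+1}[H])\,\big|\,W_n\bigr]=\tfrac{1}{2}\bigl(I(W_n^{-}[H])+I(W_n^{+}[H])\bigr)\geq I(W_n[H])$. The strategy is to relate the two operations ``take the polarization transform'' and ``take the quotient $[H]$'', and then to use the fact that the symmetric Holevo information is conserved for the induced channel $W[H]$ on $G/H$, i.e.\ $I((W[H])^{-})+I((W[H])^{+})=2\,I(W[H])$, which follows from the identical calculation performed in Section~III applied to the group $G/H$.

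The core of the argument consists of two claims about the channels on $G/H$. The first claim is the identity $(W^{-})[H]=(W[H])^{-}$. To verify this I would write out both output density operators at an input coset $D_1\in G/H$: the left-hand side gives
\[
\tfrac{1}{|H|\,q}\sum_{u_1\in D_1,\,u_2\in G}\rho_{u_1+u_2}\otimes\rho_{u_2},
\]
while the right-hand side, after substituting the definition of $W[H]$, gives
\[
\tfrac{1}{|G/H|\,|H|^{2}}\sum_{D_2\in G/H}\sum_{x\in D_1+D_2,\,y\in D_2}\rho_x\otimes\rho_y.
\]
Reparametrizing $(u_1,u_2)\mapsto(x,y)=(u_1+u_2,u_2)$ in the first expression yields the same sum, and the prefactors match because $|H|\,q=|G/H|\,|H|^{2}$; hence $I((W^{-})[H])=I((W[H])^{-})$.

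The second claim is the inequality $I((W^{+})[H])\geq I((W[H])^{+})$, obtained from a data-processing argument on the classical side register. The output of $(W^{+})[H]$ carries the register $|u_1\rangle\langle u_1|$ with $u_1\in G$, while the output of $(W[H])^{+}$ carries only $|D_1\rangle\langle D_1|$ with $D_1=u_1\bmod H\in G/H$. An analogous reparametrization to the one above shows that applying the coset projection $u_1\mapsto u_1\bmod H$ to the classical register of $(W^{+})[H]$ produces exactly $(W[H])^{+}$; since this is a CPTP map on the output, the data-processing inequality for the Holevo quantity gives the claimed inequality. Putting the two claims together with conservation for $W[H]$ proves the sub-martingale property.

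The main obstacle I expect is just making sure the combinatorial bookkeeping in the identity $(W^{-})[H]=(W[H])^{-}$ is done cleanly, in particular matching the normalizing factors and checking that the change of variables $(u_1,u_2)\leftrightarrow(D_2,x,y)$ is a bijection onto the same index set. The rest is conceptually transparent: once the ``$-$'' transform commutes with the quotient and the ``$+$'' transform only loses information under the quotient, the sub-martingale inequality drops out of the conservation law applied to $W[H]$.
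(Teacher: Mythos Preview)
Your argument is correct. Both your claims check out: the bijection $(u_1,u_2)\mapsto(u_1+u_2,u_2)$ gives the equality $(W^-)[H]=(W[H])^-$, and the coset projection on the classical side register realizes $(W[H])^+$ as a CPTP post-processing of $(W^+)[H]$; conservation for $W[H]$ on the quotient group $G/H$ then finishes the job.

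The paper proves the same inequality by a direct chain-rule computation on mutual informations in the joint state of $U_1,U_2,B_1,B_2$: it writes $I(W^-[H])+I(W^+[H])=I(U_1\bmod H;B_1B_2)+I(U_2\bmod H;B_1B_2U_1)$, applies data processing to replace $U_1$ by $U_1\bmod H$, and then the chain rule together with the bijection $(U_1,U_2)\mapsto(X_1,X_2)$ collapses the sum to $2I(W[H])$. The single substantive step---discarding the extra information in $U_1$ beyond its coset---is exactly your Claim~2, so the two proofs share the same core. Your packaging is slightly more structural, establishing the reusable channel identity $(W^-)[H]=(W[H])^-$ along the way, whereas the paper's computation is a couple of lines shorter and avoids having to verify that the normalizations match.
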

\begin{proof}
It is sufficient to show that $I(W^-[H])+I(W^+[H])\geq 2I(W[H])$. Let $U_1,U_2,X_1,X_2,B_1$ and $B_2$ be as in Remark \ref{remPolarSim}. We have:
\begin{align*}
I(W^-[H])+I(W^+[H])&=I(U_1\bmod H; B_1B_2) + I(U_2\bmod H; B_1B_2 U_1)\\
&\geq I(U_1\bmod H; B_1B_2) + I(U_2\bmod H; B_1B_2, U_1\bmod H)\\
&=I(U_1\bmod H,U_2\bmod H; B_1B_2)=I(X_1\bmod H,X_2\bmod H; B_1B_2)\\
&=I(X_1\bmod H, B_1) +I(X_2\bmod H; B_2) = 2I(W[H]).
\end{align*}
\end{proof}

Let $M\subset H$ be two subgroups of $G$. For every coset $D$ of $H$, let $D/M=\{C\in G/M:\; C\subset D\}$ be the set of cosets of $M$ which are subsets of $D$. Define the channel $W[M|D]: C\in D/M\longrightarrow \rho_C\in \mathcal{DM}(k)$ as follows:
$$\rho_C=\frac{1}{|C|}\sum_{x\in C}\rho_x.$$

$W[M|D]$ can be simulated as follows: if a coset $C\in D/M$ is chosen as input, a random variable $X$ is chosen uniformly from $C$ and then  sent through the channel $W$.

Define the following:
\begin{itemize}
\item $I_{M|H}(W)=I(W[M])-I(W[H])$.
\item $\displaystyle F_{\max}^{M|H}(W)=\max_{\substack{d\in H,\\d\notin M}}F_d(W)$.
\end{itemize}

The following lemma relates $I_{M|H}(W)$ to $\{I(W[M|D]):\; D\in G/H\} $.
\begin{mylem}
\label{lemIMH}
$\displaystyle I_{M|H}(W)=\frac{1}{|G/H|}\sum_{D\in G/H} I(W[M|D])$.
\end{mylem}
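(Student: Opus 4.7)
The plan is to recognize $I_{M|H}(W)$ as a conditional mutual information and then decompose that conditional mutual information over cosets of $H$. Concretely, work with the joint cq-state
$$\rho^{XB}=\frac{1}{|G|}\sum_{x\in G}|x\rangle\langle x|^X\otimes\rho_x^B,$$
and let $Y=X\bmod H$, $Z=X\bmod M$. The paper has already recorded $I(W[H])=I(Y;B)_\rho$ and $I(W[M])=I(Z;B)_\rho$, so the left-hand side is $I(Z;B)-I(Y;B)$.

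Next I would invoke the chain rule for quantum mutual information on the classical-quantum state for $(Y,Z,B)$. Because $M\subseteq H$, the coset $Y$ is a deterministic function of $Z$, so $I(YZ;B)=I(Z;B)$; the chain rule then yields
$$I(Z;B)=I(Y;B)+I(Z;B\mid Y),$$
hence $I_{M|H}(W)=I(Z;B\mid Y)$. Since $Y$ is classical with uniform distribution on $G/H$ (as $X$ is uniform on $G$), the standard definition of conditional quantum mutual information with a classical conditioning register gives
$$I(Z;B\mid Y)=\frac{1}{|G/H|}\sum_{D\in G/H} I(Z;B)_{\sigma_D},$$
where $\sigma_D^{ZB}$ is the conditional state of $(Z,B)$ given $Y=D$.

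The remaining step is to identify each $\sigma_D^{ZB}$ with the joint cq-state of the channel $W[M|D]$ fed with a uniform input on $D/M$. Conditional on $Y=D$, $X$ is uniform on $D$; grouping the $|H|$ elements of $D$ according to their cosets of $M$ (there are $|D/M|$ such cosets, each of size $|M|$), a direct rewrite gives
$$\sigma_D^{ZB}=\frac{1}{|D/M|}\sum_{C\in D/M}|C\rangle\langle C|\otimes\rho_C,\qquad \rho_C=\frac{1}{|C|}\sum_{x\in C}\rho_x,$$
which is exactly the cq-state associated to $W[M|D]$. Hence $I(Z;B)_{\sigma_D}=I(W[M|D])$, and combining the displays delivers the claimed identity.

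This is essentially the quantum chain rule applied to a two-level coset refinement; the only thing to be careful about is the bookkeeping when showing that the conditional state $\sigma_D^{ZB}$ really is the uniform-input cq-state of $W[M|D]$, but since $Y$ is classical and $X$ is uniform, no quantum subtlety arises — everything reduces to the fact that conditioning a uniform distribution on $D\subset G$ yields a uniform distribution on $D/M$ after quotienting.
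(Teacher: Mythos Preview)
Your proof is correct and follows essentially the same route as the paper: both recognize $I_{M|H}(W)$ as the conditional mutual information $I(X\bmod M;B\mid X\bmod H)$ via the chain rule (using that $X\bmod H$ is a function of $X\bmod M$), then expand over the uniform classical conditioning variable $X\bmod H$ and identify each term with $I(W[M|D])$. The only difference is cosmetic notation ($Y,Z$ versus $X\bmod H,X\bmod M$) and that you spell out the identification of the conditional state $\sigma_D^{ZB}$ a bit more explicitly than the paper does.
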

\begin{proof}
Let $\displaystyle\rho^{XB}=\frac{1}{q}\sum_{x\in G}|x\rangle\langle x|^X\otimes \rho_x^B$. We have $I(W[M])=I(X\bmod M;B)_{\rho}$ and $I(W[M])=I(X\bmod H;B)_{\rho}$. Therefore,
\begin{align*}
I_{M|H}(W)&=I(W[M])-I(W[H])=I(X\bmod M;B)_{\rho}-I(X\bmod M;B)_{\rho}\\
&=I(X\bmod M,X\bmod H;B)_{\rho}-I(X\bmod H;B)_{\rho}=I(X\bmod M;B|X\bmod H)_{\rho}\\
&=\sum_{D\in G/H} \frac{1}{|G/H|} I(X\bmod M;B|X\bmod H=D)_{\rho}\stackrel{(a)}{=} \sum_{D\in G/H} \frac{1}{|G/H|} I(W[M|D]),
\end{align*}
where (a) follows from the fact that conditioning on $X\bmod H=D$, the state of the input-output system becomes $\displaystyle\frac{1}{|D|}\sum_{x\in D}|x\rangle\langle x|^X\otimes \rho_x^B$ and so the mutual information between $X\bmod M$ and $B$ becomes exactly $I(W[M|D])$.
\end{proof}

The following lemma relates $F(W[M|D])$ to $F_{\max}^{M|H}(W)$.
\begin{mylem}
\label{lemFMDFmax}
For every $D\in G/H$, we have:
\begin{itemize}
\item[(1)] $\displaystyle F(W[M|D])\leq \frac{q\cdot|M|}{|H|}F_{\max}^{M|H}(W)$.
\item[(2)] There exists $\epsilon_q>0$ depending only on $q$ such that if $M$ is maximal in $H$ (i.e., $|H/M|$ is prime) and if $F_{\max}^{M|H}(W)\geq 1-\epsilon_q$, then
\begin{align*}
F(W[M|D])\geq \cos\left(\frac{|H|-|M|}{|M|} \arccos \left(1-\sqrt{1-\Big(1-q\big(1-F_{\max}^{M|H}(W)\big)\Big)^2} \right)\right).
\end{align*}
\end{itemize}
\end{mylem}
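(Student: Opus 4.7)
Both items are bounds on
\[
F(W[M|D])=\frac{1}{|H/M|(|H/M|-1)}\sum_{C\neq C'\in D/M}F(\rho_C,\rho_{C'}),
\]
with $\rho_C=\frac{1}{|M|}\sum_{x\in C}\rho_x$. For each ordered pair $(C,C')$, the shift $C'-C$ is a nontrivial coset of $M$ in $H$, so any representative $\Delta$ of that coset lies in $H\setminus M$ and therefore satisfies $F_\Delta(W)\leq F_{\max}^{M|H}(W)$. The plan for both parts is to reduce the inequality to a statement about the single-element pairwise fidelities $F(\rho_x,\rho_{x+\Delta})$.

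For item~(1) the key tool will be Rotfel'd's trace inequality $\Tr\sqrt{A_1+\cdots+A_n}\leq \sum_i\Tr\sqrt{A_i}$, valid for positive operators $A_i$ thanks to the operator concavity and sub-additivity of $\sqrt{\cdot}$. Writing $F(\rho_C,\rho_{C+d})=\Tr\sqrt{\sqrt{\rho_C}\,\rho_{C+d}\sqrt{\rho_C}}$, expanding $\rho_{C+d}=\frac{1}{|M|}\sum_{y\in C}\rho_{y+d_0}$ (for a representative $d_0$ of $d\in H/M$), pulling the $1/|M|$ through the square root, and applying Rotfel'd, then repeating the same manoeuvre after expanding $\rho_C$ inside each resulting term, yields
\[
F(\rho_C,\rho_{C+d})\leq\frac{1}{|M|}\sum_{x,y\in C}F(\rho_x,\rho_{y+d_0}).
\]
Summing over all ordered pairs $(C,C')\in D/M$ and reparameterizing $y=x+m$ with $m\in M$ so that $y+d_0=x+(m+d_0)$ ranges over $x+d$ as $m$ ranges over $M$, the double sum collapses to $\frac{1}{|M|}\sum_{\Delta\in H\setminus M}\sum_{x\in D}F(\rho_x,\rho_{x+\Delta})$. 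Bounding each inner sum by $\sum_{x\in G}F(\rho_x,\rho_{x+\Delta})=qF_\Delta(W)\leq qF_{\max}^{M|H}(W)$ and dividing by $|H/M|(|H/M|-1)=|H|(|H|-|M|)/|M|^2$, the $|H|-|M|$ factor cancels and exactly $\tfrac{q|M|}{|H|}F_{\max}^{M|H}(W)$ emerges.

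For item~(2) the plan is to imitate the chain/angle argument used in the proof of Theorem~\ref{thePolarFq}. Since $M$ is maximal in $H$, $p=|H/M|$ is prime, so $H/M$ is cyclic; fix $d\in H\setminus M$ achieving $F_d(W)=F_{\max}^{M|H}(W)$ and note that $d+M$ generates $H/M$, so every pair $C\neq C'$ in $D/M$ satisfies $C'=C+ld$ for some $l\in\{1,\ldots,p-1\}$. From $qF_d(W)=\sum_{x\in G}F(\rho_x,\rho_{x+d})$ I get $F(\rho_x,\rho_{x+d})\geq 1-q(1-F_d(W))$ for every $x$. Converting to trace distance via one side of Fuchs--van de Graaf ($D\leq\sqrt{1-F^2}$), pushing the bound through the mixture using the triangle inequality and convexity of the trace norm, and converting back via the other side ($F\geq 1-D$), gives
\[
F(\rho_C,\rho_{C+d})\geq 1-\sqrt{1-\bigl(1-q(1-F_d(W))\bigr)^2}.
\]
Applying the triangle inequality of the Bures angle $A=\arccos F$ along the chain $C,C+d,\ldots,C+ld$ of length at most $p-1=(|H|-|M|)/|M|$ and then taking $\cos$ produces the stated lower bound uniformly in $l$; averaging over pairs in $D/M$ preserves it. The constant $\epsilon_q$ must be chosen small enough (depending only on $q$) so that $1-q(1-F_{\max}^{M|H}(W))\in[0,1]$ and every Bures angle in the chain lies in $[0,\pi/2]$, keeping the final $\cos$ step in its monotone regime.

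The main obstacle is item~(1): the standard joint (strong) concavity of $F$ only yields lower bounds on fidelities of mixtures, not the upper bound required here, so an unusual quantum gadget is needed. Rotfel'd's trace inequality is the right substitute because it yields a linear (rather than square-root) dependence on the individual pairwise fidelities, which is exactly what makes the $|M|/|H|$ factor in the target bound appear cleanly instead of a weaker $1/\sqrt{|H/M|}$-type factor.
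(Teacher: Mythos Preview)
Your proposal is correct and follows essentially the same approach as the paper. For part~(1) the paper packages the Rotfel'd-type bound $\Tr\sqrt{A+B}\leq\Tr\sqrt{A}+\Tr\sqrt{B}$ (its Lemma~\ref{lemTrSqrt}) into the fidelity inequality $F(\sum_i p_i\rho_i,\sum_j q_j\sigma_j)\leq\sum_{i,j}\sqrt{p_iq_j}\,F(\rho_i,\sigma_j)$ (its Lemma~\ref{lemFidIneqSum}) and applies it once, whereas you apply the trace inequality twice directly --- same content, same constants; for part~(2) your Fuchs--van~de~Graaf/convexity/Bures-angle chain is line-for-line what the paper does, including the choice of $\epsilon_q$ to keep all angles in $[0,\pi/2]$.
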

\begin{proof}
See Appendix \ref{appFMDFmax}.
\end{proof}

\begin{mylem}
\label{lemPolarIMH}
For every two subgroups $M\subset H$ of $G$ where $M$ is maximal in $H$ (i.e., $|H/M|$ is prime), the process $\{I_{M|H}(W_n)\}_{n\geq 0}$ converges almost surely to a random variable $I_{M|H}^{(\infty)}\in\{0,\log|H/M|\}$ and the process $\{F_{\max}^{M|H}(W_n)\}_{n\geq 0}$ converges almost surely to a random variable $F_{M|H}^{(\infty)}\in\{0,1\}$.
\end{mylem}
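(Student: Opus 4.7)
The plan is to apply Lemma \ref{lemPolarIT} with $I_n := I_{M|H}(W_n)$ and $T_n := F_{\max}^{M|H}(W_n)$, where the role of $q$ in the lemma is played by the prime $p := |H/M|$. Primality enters not in applying the lemma itself, but in ensuring that the quotient channels $W[M|D]$ (with input alphabet $D/M$ of prime size $p$) behave like the $\mathbb{F}_p$-input channels studied in Section~IV. I would verify the six conditions in the order that keeps the logic parallel to the proof of Theorem~\ref{thePolarFq}.

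Conditions (1), (3) and (4) are essentially bookkeeping. The bound $0 \le I_{M|H}(W_n) \le \log p$ follows from $I_{M|H}(W) = I(X \bmod M ; B \mid X \bmod H)_{\rho}$; the bound $0 \le F_{\max}^{M|H}(W_n) \le 1$ is by definition; and $F_{\max}^{M|H}(W^+) = F_{\max}^{M|H}(W)^2$ is immediate from Proposition~\ref{propFPolar}, since squaring commutes with taking the maximum of non-negative numbers over $d \in H \setminus M$. Condition~(2) is handled by Lemma~\ref{lemSubMart}: both $\{I(W_n[M])\}$ and $\{I(W_n[H])\}$ are bounded sub-martingales, hence converge almost surely by Doob's theorem, so their difference $I_{M|H}(W_n)$ converges almost surely.

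The substantive work is in conditions (5) and (6), and the central tool in both is Lemma~\ref{lemIMH}, which expresses $I_{M|H}(W)$ as the average of $I(W[M|D])$ over cosets $D \in G/H$. For condition~(5), if $F_{\max}^{M|H}(W) < \epsilon$, then Lemma~\ref{lemFMDFmax}(1) gives $F(W[M|D]) \le (q|M|/|H|)\,\epsilon$ uniformly in $D$. Applying Proposition~\ref{propFid}(i) to each channel $W[M|D]$ (alphabet size $p$) yields
$$
I(W[M|D]) \ge \log\frac{p}{1+(p-1)(q|M|/|H|)\epsilon},
$$
and averaging over $D$ via Lemma~\ref{lemIMH} produces $I_{M|H}(W) \ge \log p - f(\epsilon)$ with $f(\epsilon)\to 0$. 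For condition~(6), if $F_{\max}^{M|H}(W) > 1-\epsilon$ with $\epsilon < \epsilon_q$, Lemma~\ref{lemFMDFmax}(2) forces $F(W[M|D])$ close to $1$ uniformly in $D$. Then Proposition~\ref{propFid}(iii) applied to $W[M|D]$ gives
$$
I(W[M|D]) \le \log\!\left(1+\sqrt{p^2-(1+(p-1)F(W[M|D]))^2}\right),
$$
which tends to $0$; averaging via Lemma~\ref{lemIMH} yields $I_{M|H}(W) < g(\epsilon)$ with $g(\epsilon)\to 0$. With all six hypotheses checked, Lemma~\ref{lemPolarIT} delivers $I_{M|H}^{(\infty)} \in \{0,\log p\}$ and $F_{M|H}^{(\infty)} \in \{0,1\}$ almost surely.

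The one place requiring care is condition~(6): Lemma~\ref{lemFMDFmax}(2) is a quantitative statement that crucially uses the primality of $|H/M|$ via an $\arccos$-metric chaining argument (essentially the cq-analogue of the step in the proof of Theorem~\ref{thePolarFq} where every pair $(x,x')$ is reached by successive $d$-translates). Without primality one cannot close the gap between $F_{\max}^{M|H}$ being near~$1$ and $F(W[M|D])$ being near~$1$, and the two-level polarization of $I_{M|H}$ would fail. Fortunately the hypothesis $M$ maximal in $H$ is precisely what makes $|H/M|$ prime, so Lemma~\ref{lemFMDFmax}(2) applies and the argument goes through.
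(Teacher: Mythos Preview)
Your proposal is correct and follows essentially the same route as the paper's proof: set $I_n=I_{M|H}(W_n)$, $T_n=F_{\max}^{M|H}(W_n)$, verify the six hypotheses of Lemma~\ref{lemPolarIT} (with $q$ replaced by $|H/M|$) using Proposition~\ref{propFPolar} for~(4), Lemma~\ref{lemSubMart} for~(2), and the chain Lemma~\ref{lemFMDFmax} $\to$ Proposition~\ref{propFid} $\to$ Lemma~\ref{lemIMH} for~(5) and~(6). Your quantitative details and your remark on where primality is actually used (namely in Lemma~\ref{lemFMDFmax}(2)) are accurate and slightly more explicit than the paper's own write-up.
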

\begin{proof}
Let $I_n=I_{M|H}(W_n)$ and $T_n=F_{\max}^{M|H}(W_n)$. We will show that $I_n$ and $T_n$ satisfy the conditions of Lemma \ref{lemPolarIT}, where $q$ is replaced with $q'=|H/M|$. Conditions (1) and (3) are obviously satisfied. Condition (4) is also satisfied because of Proposition \ref{propFPolar}.

Since $I_{M|H}(W_n)=I(W_n[M])-I(W_n[H])$ and since $\{I(W_n[M])\}_{n\geq 0}$ and $\{I(W_n[H])\}_{n\geq 0}$ are sub-martingales by Lemma \ref{lemSubMart}, we conclude that $\{I_n\}_{n\geq 0}$ converges almost surely. Therefore, condition (2) is satisfied.

To see that condition (5) is satisfied, assume that $F_{\max}^{M|H}(W)$ is close to zero, then the first inequality of Lemma \ref{lemFMDFmax} implies that $F(W[M|D])$ is close to zero for every $D\in G/H$. The first inequality of Proposition \ref{propFid} then shows that $I(W[M|D])$ is close to $\log q'$, for every $D\in G/H$. Lemma \ref{lemIMH} now implies that $I_{M|H}(W)$ is close to $\log q'$.

To see that condition (6) is satisfied, assume that $F_{\max}^{M|H}(W)$ is close to 1, then the second inequality of Lemma \ref{lemFMDFmax} implies that $F(W[M|D])$ is close to 1 for every $D\in G/H$. The third inequality of Proposition \ref{propFid} then shows that $I(W[M|D])$ is close to zero, for every $D\in G/H$. Lemma \ref{lemIMH} now implies that $I_{M|H}(W)$ is close to zero.

We conclude that $\{I_{M|H}(W_n)\}_{n\geq 0}$ converges almost surely to a random variable taking values in $\{0,\log q'\}=\{0,\log|H/M|\}$ and $\{F_{\max}^{M|H}(W_n)\}_{n\geq 0}$ converges almost surely to a random variable taking values in $\{0,1\}$.
\end{proof}

\begin{mylem}
\label{lemFdSumRel}
Let $d_1,\ldots,d_r\in G$. If $\displaystyle F_{d_i}(W)\geq 1-\frac{1}{q}\left(1-\cos\frac{\pi}{2 r}\right)$ for all $1\leq i\leq r$, then
$$F_{d_1+\ldots+d_r}(W)\geq \cos\left(\sum_{i=1}^{r} \arccos \Big(1-q\big(1-F_{d_i}(W)\big)\Big) \right).$$
\end{mylem}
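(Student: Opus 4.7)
The strategy is to mirror the chaining argument already used in the proof of Theorem \ref{thePolarFq}, but applied to a path $x \to x + d_1 \to x + d_1 + d_2 \to \cdots \to x + d_1 + \cdots + d_r$ in $G$ whose steps use the distinct group elements $d_1, \ldots, d_r$ (rather than repeating a single $d$). The key tool is that the Bures angle $A(\rho, \sigma) = \arccos F(\rho, \sigma)$ is a genuine metric on density matrices, so we get a triangle inequality along this path.

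First I would convert the averaged hypothesis into a pointwise one. From the identity
$$1 - F_{d_i}(W) = \frac{1}{q}\sum_{y \in G}\bigl(1 - F(\rho_y, \rho_{y+d_i})\bigr),$$
and nonnegativity of each summand, I obtain the uniform bound $F(\rho_y, \rho_{y+d_i}) \geq 1 - q(1 - F_{d_i}(W))$ for every $y \in G$. Set $\alpha_i := \arccos\bigl(1 - q(1 - F_{d_i}(W))\bigr)$. The hypothesis $F_{d_i}(W) \geq 1 - \tfrac{1}{q}\bigl(1 - \cos\tfrac{\pi}{2r}\bigr)$ is exactly what is needed to ensure $1 - q(1 - F_{d_i}(W)) \geq \cos\tfrac{\pi}{2r} \geq 0$, and hence $\alpha_i \in [0, \tfrac{\pi}{2r}]$. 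In particular $\sum_{i=1}^r \alpha_i \leq \tfrac{\pi}{2}$, which is the range on which $\cos$ is monotonically decreasing; this is the observation that makes the whole argument go through.

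Next, for each $x \in G$, I would apply the triangle inequality for the Bures metric $A$ along the chain of intermediate states $\rho_x, \rho_{x+d_1}, \rho_{x+d_1+d_2}, \ldots, \rho_{x + d_1 + \cdots + d_r}$:
$$A\bigl(\rho_x, \rho_{x+d_1+\cdots+d_r}\bigr) \;\leq\; \sum_{i=1}^{r} A\bigl(\rho_{x+d_1+\cdots+d_{i-1}}, \rho_{x+d_1+\cdots+d_i}\bigr) \;\leq\; \sum_{i=1}^{r} \alpha_i,$$
where the second inequality uses the pointwise bound from step one applied at $y = x + d_1 + \cdots + d_{i-1}$ (observing that $\arccos$ is decreasing). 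Taking $\cos$ of both sides, and using that both sides lie in $[0, \tfrac{\pi}{2}]$, gives the pointwise-in-$x$ lower bound $F(\rho_x, \rho_{x+d_1+\cdots+d_r}) \geq \cos\bigl(\sum_{i=1}^r \alpha_i\bigr)$.

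Finally, averaging over $x$ (or simply noting that the bound is uniform in $x$) yields
$$F_{d_1+\cdots+d_r}(W) \;=\; \frac{1}{q}\sum_{x \in G} F(\rho_x, \rho_{x+d_1+\cdots+d_r}) \;\geq\; \cos\!\left(\sum_{i=1}^{r} \arccos\bigl(1 - q(1 - F_{d_i}(W))\bigr)\right),$$
which is the desired conclusion. The only real subtlety is the bookkeeping that keeps the accumulated angle $\sum \alpha_i$ below $\pi/2$ so that the metric inequality on $A$ can be converted to a fidelity inequality by applying $\cos$; this is the reason for the particular form $1 - \tfrac{1}{q}(1 - \cos\tfrac{\pi}{2r})$ of the hypothesis, and once that invariant is established no further difficulty arises.
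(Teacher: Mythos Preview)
Your proposal is correct and follows essentially the same argument as the paper: convert the averaged bound on $F_{d_i}(W)$ into the pointwise bound $F(\rho_y,\rho_{y+d_i})\geq 1-q(1-F_{d_i}(W))$, apply the triangle inequality for the Bures angle along the chain $\rho_x,\rho_{x+d_1},\ldots,\rho_{x+d_1+\cdots+d_r}$, use the hypothesis to keep the accumulated angle in $[0,\pi/2]$ so that taking $\cos$ reverses the inequality, and average over $x$. The only cosmetic difference is that the paper first disposes of the trivial cases $d_i=0$ and $d_1+\cdots+d_r=0$, which your argument handles implicitly since they contribute zero angle or make the conclusion vacuous.
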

\begin{proof}
We may assume without loss of generality that $d_1\neq 0,\ldots,d_r\neq 0$ and $d:=d_1+\ldots+d_r\neq 0$. Define $d_1'=0$, and for every $2\leq i\leq r$, let $\displaystyle d_i'=\sum_{j=1}^{i-1} d_j$.

For every $1\leq i\leq r$, we have $\displaystyle 1-F_{d_i}(W)=\frac{1}{q}\sum_{x\in G}\big(1-F(\rho_x,\rho_{x+d_i})\big)$. Therefore, for every $x\in G$, we have $1-F(\rho_x,\rho_{x+d_i})\leq q\big(1-F_{d_i}(W)\big)$ and so $F(\rho_x,\rho_{x+d_i})\geq 1-q\big(1-F_{d_i}(W)\big) $. Therefore,

\begin{align*}
F(\rho_x,\rho_{x+d})&=F(\rho_{x+d_1'},\rho_{x+d_r'+d_r})=\cos A(\rho_{x+d_1'},\rho_{x+d_r'+d_r}) \stackrel{(a)}{\geq} \cos\left(\sum_{i=1}^{r} A(\rho_{x+d_i'},\rho_{x+d_i'+d_i})\right)\\
&= \cos\left(\sum_{i=1}^{r} \arccos F(\rho_{x+d_i'},\rho_{x+d_i'+d_i})\right) \stackrel{(b)}{\geq}  \cos\left(\sum_{i=1}^{r} \arccos \Big(1-q\big(1-F_{d_i}(W)\big)\Big) \right),
\end{align*}
where (a) follows from the fact that $A(\rho',\rho'')=\arccos F(\rho',\rho'')$ is a metric distance \cite{Nielsen}. (a) and (b) are true because $\cos$ is a decreasing function on $\displaystyle \left[0,\frac{\pi}{2}\right]$ and we assumed that $\displaystyle F_{d_i}(W)\geq 1-\frac{1}{q}\left(1-\cos\frac{\pi}{2 r}\right)$ for every $1\leq i\leq r$. We conclude that
\begin{align*}
F_d(W)=\frac{1}{q}\sum_{x\in G}F(\rho_x,\rho_{x+d})\geq \cos\left(\sum_{i=1}^{r} \arccos \Big(1-q\big(1-F_{d_i}(W)\big)\Big) \right).
\end{align*}
\end{proof}

\begin{mylem}
\label{lemFdFmaxRel}
Let $d\in G$ be such that $d\neq 0$ and let $H=\langle d\rangle$ be the subgroup generated by $d$. We have:
\begin{itemize}
\item If $F_d(W)\leq F_{\max}^{M|H}(W)$ for every maximal subgroup $M$ of $H$.
\item If $\displaystyle F_{\max}^{M|H}(W)\geq 1-\frac{1}{q}\left(1-\cos\frac{\pi}{2q}\right)$ for every maximal subgroup $M$ of $H$, then
$$F_d(W)\geq \cos\left(q\cdot \arccos \left(1-q\left(1- \min_{\substack{M\;\text{is a maximal}\\\text{subgroup of}\; H}} F_{\max}^{M|H}(W)\right)\right) \right).$$
\end{itemize}
\end{mylem}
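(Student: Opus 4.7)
The plan is to prove the two parts of the lemma separately. The first claim follows immediately from the definitions: since $H=\langle d\rangle$, the element $d$ lies in no proper subgroup of $H$, and in particular $d\notin M$ for every maximal subgroup $M$ of $H$. Hence $d$ is one of the elements in $H\setminus M$ over which the maximum defining $F_{\max}^{M|H}(W)$ is taken, so $F_d(W)\leq F_{\max}^{M|H}(W)$.

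For the second part, the strategy is to write $d$ as a sum $d=d_{(1)}+\cdots+d_{(t)}$ of $t\leq q$ elements of $H$, each having $F$-value at least $v:=\min_{M}F_{\max}^{M|H}(W)$, and then invoke Lemma~\ref{lemFdSumRel}. Set $r=|H|$. For each prime $p$ dividing $r$, let $M_p$ denote the unique subgroup of $H$ of index $p$, and choose $d_p\in H\setminus M_p$ achieving $F_{d_p}(W)=F_{\max}^{M_p|H}(W)$. Writing $d_p=k_p d$ with $k_p\in\mathbb{Z}/r\mathbb{Z}$, the condition $d_p\notin M_p$ becomes $p\nmid k_p$, so the order of $d_p$ is divisible by the full $p$-part of $r$; hence $\langle d_p\rangle$ contains the $p$-primary component of $H$, and summing over $p\mid r$ shows that $S:=\{d_p:p\mid r\}$ generates $H$.

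I then show that every element of $H$ can be expressed as a sum of at most $r-1$ elements of $S$, with repetitions allowed. Let $T_k\subseteq H$ be the set of such sums of length at most $k$, so $T_0=\{0\}$ and $T_{k+1}=T_k\cup(T_k+S)$. The sequence $|T_k|$ is non-decreasing; if $|T_{k+1}|=|T_k|$ then $T_k$ is closed under adding each $d_p$, and since $d_p$ has finite order in $H$, also under subtracting each $d_p$, so $T_k$ is a subgroup of $H$ containing $S$ and thus equals $H$. Therefore $|T_k|$ strictly increases until it saturates at $r$, giving $T_{r-1}=H$ and a decomposition $d=d_{p_{i_1}}+\cdots+d_{p_{i_t}}$ with $t\leq r-1\leq q-1$.

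Finally, apply Lemma~\ref{lemFdSumRel} to this decomposition. Since $t\leq q$, the hypothesis $F_{\max}^{M|H}(W)\geq 1-\frac{1}{q}\bigl(1-\cos\frac{\pi}{2q}\bigr)$ implies $F_{d_{p_{i_j}}}(W)\geq 1-\frac{1}{q}\bigl(1-\cos\frac{\pi}{2t}\bigr)$ for every $j$, so the lemma applies and yields $F_d(W)\geq \cos\bigl(\sum_j \arccos(1-q(1-F_{d_{p_{i_j}}}(W)))\bigr)$. Using $F_{d_{p_{i_j}}}(W)\geq v$ and monotonicity of $\arccos$, each summand is at most $\arccos(1-q(1-v))\leq \pi/(2q)$ (the latter being equivalent to the hypothesis), so the argument of $\cos$ lies in $[0,\pi/2]$ where $\cos$ is non-negative and decreasing; combined with $t\leq q$, this gives the stated bound $F_d(W)\geq \cos(q\cdot\arccos(1-q(1-v)))$. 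The main obstacle is the combinatorial step producing a decomposition of $d$ with at most $q$ summands; the rest is careful monotonicity bookkeeping.
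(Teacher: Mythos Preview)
Your proof is correct and follows essentially the same approach as the paper's: for the first part you use that $d$ generates $H$ and hence lies outside every proper (in particular maximal) subgroup, and for the second part you pick maximizers $d_p\in H\setminus M_p$, show they generate $H$ with a short decomposition of $d$, and then invoke Lemma~\ref{lemFdSumRel}. The only noteworthy difference is that where the paper imports from \cite{SahebiPradhan} the fact that $d=\sum_i l_i d_i$ with $\sum_i l_i\leq q$, you supply a self-contained ``growing set'' argument ($T_{k+1}=T_k\cup(T_k+S)$ strictly grows until it stabilizes at a subgroup containing $S$, hence $T_{r-1}=H$), which even gives the slightly sharper bound $t\leq |H|-1$.
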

\begin{proof}
Let $M$ be a maximal subgroup of $H$. Since $H=\langle d\rangle$, then we must have $d\in H$ and $d\notin M$. Therefore,
$$F_d(W)\leq \max_{\substack{d'\in H,\\d'\notin M}}F_{d'}(W)=F_{\max}^{M|H}(W).$$

Now let $M_1,\ldots,M_r$ be the maximal subgroups of $H=\langle d\rangle$. For every $1\leq i\leq r$, let $d_i\in H$ be such that $d_i\notin M_i$ and $F_{d_i}(W)=F_{\max}^{M_i|H}(W)$. It was shown in \cite{SahebiPradhan} that $d\in\langle d_1,\ldots,d_r\rangle$, which means that there are $l_1,\ldots,l_r\in\mathbb{N}$ such that $\displaystyle d= \sum_{i=1}^r l_i d_i$. Moreover, $l_1,\ldots,l_r\in\mathbb{N}$ can be chosen so that $l_1+\ldots+l_r\leq q$.

Since $\displaystyle F_{d_i}(W)\geq 1-\frac{1}{q}\left(1-\cos\frac{\pi}{2q}\right)\geq 1-\frac{1}{q}\left(1-\cos\frac{\pi}{2(l_1+\ldots+l_r)}\right)$ for all $1\leq i\leq r$, Lemma \ref{lemFdSumRel} implies that
\begin{align*}
F_d(W)=F_{l_1d_1+\ldots+l_rd_r}(W)&\geq \cos\left(\sum_{i=1}^{r} l_i\arccos \Big(1-q\big(1-F_{d_i}(W)\big)\Big) \right)\\
&\stackrel{(a)}{\geq} \cos\left((l_1+\ldots+l_r) \arccos\Big(1-q\big(1- \min_{1\leq i\leq r} F_{d_i}(W)\big)\Big) \right)\\
&\stackrel{(b)}{\geq} \cos\left(q\cdot\arccos\Big(1-q\big(1- \min_{1\leq i\leq r} F_{d_i}(W)\big)\Big) \right),
\end{align*}
where (a) and (b) are true because $\cos$ is decreasing on $\displaystyle\left[0,\frac{\pi}{2}\right]$ and because we assumed that  $\displaystyle F_{d_i}(W)\geq 1-\frac{1}{q}\left(1-\cos\frac{\pi}{2q}\right)$ for all $1\leq i\leq r$.
\end{proof}

\begin{myprop}
\label{propPolarFArb}
For every $d\in G$, the process $\{F_d(W_n)\}_{n\geq 0}$ converges almost surely to a random variable $F_d^{(\infty)}\in\{0,1\}$. Moreover, the random set $\{d\in G:\; F_d^{(\infty)}=1\}$ is almost surely a subgroup of $G$.
\end{myprop}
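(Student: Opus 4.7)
The plan is to combine the convergence results for the quantities $F_{\max}^{M|H}(W_n)$ from Lemma \ref{lemPolarIMH} with the two-sided comparison between $F_d(W)$ and these quantities given by Lemma \ref{lemFdFmaxRel}, and then verify the subgroup axioms for the limiting set directly using the metric-style bound of Lemma \ref{lemFdSumRel}. The case $d = 0$ is trivial since $F_0(W_n) = 1$ for all $n$, so $F_0^{(\infty)} = 1$.

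For the convergence to $\{0,1\}$, fix $d \neq 0$ and let $H = \langle d \rangle$, a finite cyclic group with only finitely many maximal subgroups $M_1, \ldots, M_r$. Note that $d \in H$ and $d \notin M_i$ for any $i$ (otherwise $M_i \supseteq \langle d \rangle = H$, contradicting maximality). Lemma \ref{lemPolarIMH} gives that each $F_{\max}^{M_i|H}(W_n)$ converges almost surely to a value in $\{0,1\}$; intersecting these finitely many almost-sure events produces a clean dichotomy. If some $F_{\max}^{M_i|H}(W_n) \to 0$, the first bullet of Lemma \ref{lemFdFmaxRel} forces $F_d(W_n) \leq F_{\max}^{M_i|H}(W_n) \to 0$. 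Otherwise every $F_{\max}^{M_i|H}(W_n) \to 1$; eventually they all lie above the threshold $1 - \frac{1}{q}(1 - \cos\frac{\pi}{2q})$ and the second bullet of Lemma \ref{lemFdFmaxRel} lower-bounds $F_d(W_n)$ by a quantity tending to $\cos 0 = 1$, so $F_d(W_n) \to 1$. In either case $F_d^{(\infty)} \in \{0,1\}$ almost surely.

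For the subgroup property of $S = \{d \in G : F_d^{(\infty)} = 1\}$, one verifies the three axioms. Clearly $0 \in S$. Using the substitution $y = x - d$ together with the symmetry of the fidelity, $F_{-d}(W) = \frac{1}{q}\sum_x F(\rho_x, \rho_{x-d}) = \frac{1}{q}\sum_y F(\rho_y, \rho_{y+d}) = F_d(W)$, so $F_{-d}^{(\infty)} = F_d^{(\infty)}$, giving closure under inversion. For closure under addition, if $d_1, d_2 \in S$ then for any $\epsilon > 0$ eventually $F_{d_i}(W_n) \geq 1 - \epsilon$ for $i = 1, 2$; choosing $\epsilon < \frac{1}{q}(1 - \cos\frac{\pi}{4})$ and applying Lemma \ref{lemFdSumRel} with $r = 2$ yields $F_{d_1+d_2}(W_n) \geq \cos(2 \arccos(1 - q\epsilon))$, which tends to $1$ as $\epsilon \to 0$. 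Since the first part already forces $F_{d_1+d_2}^{(\infty)} \in \{0,1\}$, this means $d_1 + d_2 \in S$. Finiteness of $G$ allows us to intersect the almost-sure events over all $d$, so $S$ is a.s. a subgroup.

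The main obstacle I expect is engineering the clean dichotomy in the convergence step: the useful lower bound in Lemma \ref{lemFdFmaxRel} only activates when \emph{every} maximal subgroup $M_i$ of $\langle d \rangle$ has $F_{\max}^{M_i|H}(W_n)$ near $1$, since the bound is stated in terms of the minimum over such $M_i$. Finiteness of the set of maximal subgroups of a finite cyclic group, together with a simultaneous application of Lemma \ref{lemPolarIMH} to each, is exactly what rules out any intermediate limit point and is the technical heart of the argument.
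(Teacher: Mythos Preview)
Your proof is correct and follows essentially the same route as the paper: invoke Lemma \ref{lemPolarIMH} for each maximal subgroup of $\langle d\rangle$, use the two bounds of Lemma \ref{lemFdFmaxRel} to force the dichotomy for $F_d(W_n)$, and then use Lemma \ref{lemFdSumRel} to obtain closure of the limiting set under addition. Your explicit verification of $0\in S$ and of closure under inversion (via $F_{-d}(W)=F_d(W)$) is a minor addition; the paper omits these because in a finite group a nonempty subset closed under addition is automatically a subgroup.
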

\begin{proof}
Let $d\in G$ be such that $d\neq 0$. Let $H=\langle d\rangle$ be the subgroup generated by $d$. Lemma \ref{lemPolarIMH} shows that for every maximal subgroup $M$ of $H$, the process $\left\{F_{\max}^{M|H}(W_n)\right\}_{n\geq 0}$ converges almost surely to a random variable taking values in $\{0,1\}$.

Take a sample of the process $\{W_n\}_{n\geq 0}$ for which $\left\{F_{\max}^{M|H}(W_n)\right\}_{n\geq 0}$ converges to either 0 or 1 for every maximal subgroup $M$ of $H$. We have:
\begin{itemize}
\item If there exists a maximal subgroup $M$ of $H$ for which $\left\{F_{\max}^{M|H}(W_n)\right\}_{n\geq 0}$ converges to 0, then the first point of Lemma \ref{lemFdFmaxRel} implies that $\{F_d(W_n)\}_{n\geq 0}$ converges to 0 as well.
\item If $\left\{F_{\max}^{M|H}(W_n)\right\}_{n\geq 0}$ converges to 1 for all maximal subgroups $M$ of $H$, then the second point of Lemma \ref{lemFdFmaxRel} implies that $\{F_d(W_n)\}_{n\geq 0}$ converges to 1 as well.
\end{itemize}
We conclude that for every $d\in G$, the process $\{F_d(W_n)\}_{n\geq 0}$ converges almost surely to a random variable $F_d^{(\infty)}\in\{0,1\}$. (Note that for $d=0$, we have $F_0(W_n)=1$ for all $n$.)

Now take a sample of the process $\{W_n\}_{n\geq 0}$ for which $\{F_d(W_n)\}_{n\geq 0}$ converges to either 0 or 1 for every $d\in G$. If $d_1,d_2\in G$ are such that $\{F_{d_1}(W_n)\}_{n\geq 0}$ and $\{F_{d_2}(W_n)\}_{n\geq 0}$ converge to 1, then Lemma \ref{lemFdSumRel} implies that $\{F_{d_1+d_2}(W_n)\}_{n\geq 0}$ converges to 1 as well. We conclude that the set $\big\{d\in G:\; \{F_d(W_n)\}_{n\geq 0}\;\text{converges to}\;1\big\}$ is a subgroup of $G$.
\end{proof}

\begin{mycor}
\label{corPolarF}
For every $\epsilon>0$, we have
\begin{align*}
\lim_{n\to\infty} \frac{1}{2^n}\Big|\Big\{s\in\{-,+\}^n:\; &\exists H_s\;\text{a subgroup of}\; G,\\
&F_d(W)>1-\epsilon\;\text{for every}\;d\in H_s,\;\text{and}\;F_d(W)<\epsilon\;\text{for every}\;d\notin H_s\Big\}\Big|=1.
\end{align*}
\end{mycor}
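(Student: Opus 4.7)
The plan is to translate the counting statement into a probabilistic statement using the random process $\{W_n\}_{n\geq 0}$ already set up in the section. Since $\{B_n\}_{n\geq 0}$ is i.i.d.\ uniform on $\{-,+\}$, the law of $W_n$ is the uniform distribution on $\{W^s : s\in\{-,+\}^n\}$. Hence, for any event $\mathcal{E}$ on channels,
$$\mathbb{P}(W_n\in\mathcal{E})=\frac{1}{2^n}\bigl|\{s\in\{-,+\}^n: W^s\in\mathcal{E}\}\bigr|.$$
So the corollary is equivalent to the assertion that $\mathbb{P}(W_n\in\mathcal{E}_\epsilon)\to 1$, where $\mathcal{E}_\epsilon$ is the event that there exists a subgroup $H\leq G$ with $F_d(W_n)>1-\epsilon$ for all $d\in H$ and $F_d(W_n)<\epsilon$ for all $d\notin H$.

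The key input is Proposition \ref{propPolarFArb}, which gives an a.s.-event $\Omega_0$ of probability $1$ on which, simultaneously for every $d\in G$, the scalar process $F_d(W_n)$ converges to a limit $F_d^{(\infty)}(\omega)\in\{0,1\}$, and on which the set $H(\omega)=\{d\in G:F_d^{(\infty)}(\omega)=1\}$ is a subgroup of $G$. I would fix a sample $\omega\in\Omega_0$ and take $H_s:=H(\omega)$ as the candidate subgroup witnessing $\mathcal{E}_\epsilon$. Because $G$ is finite, the convergence $F_d(W_n)\to F_d^{(\infty)}$ is automatically uniform in $d$: there exists $N(\omega,\epsilon)$ such that for all $n\geq N(\omega,\epsilon)$ and all $d\in G$, $|F_d(W_n)-F_d^{(\infty)}(\omega)|<\epsilon$. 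For such $n$, any $d\in H(\omega)$ satisfies $F_d(W_n)>1-\epsilon$ and any $d\notin H(\omega)$ satisfies $F_d(W_n)<\epsilon$, so $W_n(\omega)\in\mathcal{E}_\epsilon$.

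This shows $\mathbf{1}_{\mathcal{E}_\epsilon}(W_n)\to 1$ pointwise on $\Omega_0$, hence almost surely. The bounded convergence theorem then gives $\mathbb{P}(W_n\in\mathcal{E}_\epsilon)\to 1$, which is the desired statement. There is no real obstacle beyond recording that finiteness of $G$ upgrades the pointwise (in $d$) a.s.\ convergence of Proposition \ref{propPolarFArb} to a uniform-in-$d$ version, and noting that the witnessing subgroup $H_s$ is allowed to depend on $s$, so we may take it to be the random limiting subgroup on each sample path.
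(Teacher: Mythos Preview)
Your proposal is correct and is precisely the argument the paper has in mind: the paper states this as a corollary of Proposition~\ref{propPolarFArb} with no proof, and your translation to the stochastic process $\{W_n\}$, use of the almost-sure $\{0,1\}$-valued limits $F_d^{(\infty)}$ together with the subgroup structure of $\{d:F_d^{(\infty)}=1\}$, the finiteness of $G$ to pass to uniform-in-$d$ convergence, and bounded convergence to conclude $\mathbb{P}(W_n\in\mathcal{E}_\epsilon)\to 1$, is exactly the intended deduction.
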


\begin{mylem}
\label{lemFromFToI}
For every $\delta>0$, there exists $\epsilon>0$ depending only on $\delta$ and $q$ such that for every cq-channel $W$, if there exists a subgroup $H$ of $G$ satisfying $F_d(W)>1-\epsilon$ for all $d\in G$ and $F_d(W)<\epsilon$ for all $d\notin H$, then $\big|I(W)-\log|G/H|\big|<\delta$ and $\big|I(W[H])-\log|G/H|\big|<\delta$.
\end{mylem}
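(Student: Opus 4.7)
My plan is to decompose $I(W)=I(W[H])+I_{\{0\}|H}(W)$ via Lemma \ref{lemIMH} applied with $M=\{0\}$, and then show separately that $I(W[H])$ is close to $\log|G/H|$ and that $I_{\{0\}|H}(W)$ is close to $0$. A preliminary observation used throughout: whenever $F_d(W)>1-\epsilon$, the identity $F_d(W)=\frac{1}{q}\sum_{x\in G}F(\rho_x,\rho_{x+d})$ combined with $F(\rho_x,\rho_{x+d})\leq 1$ forces the pointwise bound $F(\rho_x,\rho_{x+d})\geq 1-q\epsilon$ for every $x$. (I am reading the hypothesis ``for all $d\in G$'' as the intended ``for all $d\in H$'', since otherwise it contradicts the second hypothesis whenever $H\neq G$.)

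For the first piece, I would apply Lemma \ref{lemFMDFmax}(1) with the inner subgroup taken to be $H$ and the ``outer'' subgroup taken to be $G$ itself; the unique coset is $D=G$, and then $W[H\mid G]$ coincides with $W[H]$. This yields $F(W[H])\leq |H|\cdot F_{\max}^{H|G}(W)\leq q\epsilon$. Applying Proposition \ref{propFid}(i) to $W[H]$ (whose input alphabet has size $|G/H|$) then gives $\log|G/H|-I(W[H])\leq \log\bigl(1+(|G/H|-1)q\epsilon\bigr)=O(q^{2}\epsilon)$.

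For the second piece, Lemma \ref{lemIMH} expresses $I_{\{0\}|H}(W)$ as the average of $I(W[\{0\}|D])$ over cosets $D\in G/H$, so it suffices to bound each term. The channel $W[\{0\}|D]$ has input alphabet $D$ of size $|H|$; for distinct $x,x'\in D$ the difference $x'-x$ lies in $H\setminus\{0\}$, so the preliminary observation gives $F(\rho_x,\rho_{x'})\geq 1-q\epsilon$, and averaging yields $F(W[\{0\}|D])\geq 1-q\epsilon$. Inequality (iii) of Proposition \ref{propFid}, applied to this channel of alphabet size $|H|$, then produces $I(W[\{0\}|D])=O(\sqrt{q^{3}\epsilon})$, and the same bound transfers to $I_{\{0\}|H}(W)$.

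Putting the two estimates together, both $\bigl|I(W[H])-\log|G/H|\bigr|$ and $\bigl|I(W)-\log|G/H|\bigr|$ are $O(\sqrt{q^{3}\epsilon})$, so choosing $\epsilon$ of order $\delta^{2}/q^{3}$ (depending only on $\delta$ and $q$) makes each less than $\delta$. I do not anticipate a serious obstacle: the only mildly subtle step is the degenerate use of Lemma \ref{lemFMDFmax}(1) with outer subgroup $G$ (so that the only coset is $G$ itself and $W[H\mid G]=W[H]$), which is legitimate since that lemma only assumes $M\subset H$ as subgroups of $G$.
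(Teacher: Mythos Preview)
Your reading of the hypothesis (``for all $d\in H$'' rather than ``for all $d\in G$'') is correct and matches what the paper's own proof actually uses. Your argument is correct and parallels the paper's closely: both decompose $I(W)=I(W[H])+I_{\{0\}|H}(W)$, bound the first piece via Lemma~\ref{lemFMDFmax}(1) together with Proposition~\ref{propFid}(i), and bound the second via Lemma~\ref{lemIMH} and Proposition~\ref{propFid}(iii) applied to each $W[\{0\}|D]$.

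The only real difference lies in how $F(W[\{0\}|D])$ is shown to be close to $1$. The paper invokes Lemma~\ref{lemFMDFmax}(2), whose stated hypothesis requires $\{0\}$ to be maximal in $H$ (i.e.\ $|H|$ prime) and produces a lower bound through the $\arccos$ metric. You instead exploit that \emph{every} $F_d(W)$ with $d\in H\setminus\{0\}$ exceeds $1-\epsilon$ (not just the largest one), so the pointwise bound $F(\rho_x,\rho_{x'})\geq 1-q\epsilon$ holds directly for all distinct $x,x'$ in a common coset $D$, giving $F(W[\{0\}|D])\geq 1-q\epsilon$ without any further work. This is shorter, yields a cleaner quantitative dependence on $\epsilon$, and sidesteps the maximality hypothesis of Lemma~\ref{lemFMDFmax}(2) altogether.
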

\begin{proof}
If $H=G$, then $I(W[G])=0=\log|G/G|$ and so $\big|I(W[G])-\log|G/G|\big|=0<\delta$. On the other hand, since $H=G$, we have $F_d(W)>1-\epsilon$ for every $d\in G$. Therefore, $\displaystyle F(W)=\frac{1}{q-1}\sum_{\substack{d\in G,\\d\neq 0}}F_d(W)>1-\epsilon$. The third inequality of Proposition \ref{propFid} now implies $I(W)<\delta_q^{(1)}$ for some function $\epsilon\to \delta_q^{(1)}(\epsilon)$ (depending only on $\epsilon$ and $q$) which satisfies $\displaystyle\lim_{\epsilon\to 0} \delta_q^{(1)}(\epsilon)=0$.

Now assume that $H\neq G$. We have
\begin{align*}
F(W[H])=F(W[H|G])\stackrel{(a)}{\leq} \frac{q\cdot |H|}{q}F_{\max}^{H|G}(W)\leq  q \max_{\substack{d\in G,\\d\notin H}} F_d(W)\leq q\epsilon,
\end{align*}
where (a) follows from the first inequality of Lemma \ref{lemFMDFmax}. The first inequality of Proposition \ref{propFid} implies that $I(W[H])> \log|G/H| - \delta_q^{(2)}(\epsilon)$ for some function $\epsilon\to \delta_q^{(2)}(\epsilon)$ (depending only on $\epsilon$ and $q$) which satisfies $\displaystyle\lim_{\epsilon\to 0} \delta_q^{(2)}(\epsilon)=0$.

On the other hand, we have $\displaystyle F_{\max}^{\{0\}|H}(W)=\max_{\substack{d\in H,\\d\neq 0}} F_d(W)\geq 1-\epsilon$. Assume that $\epsilon<\epsilon_q$, where $\epsilon_q$ is given by Lemma \ref{lemFMDFmax}. For every $D\in G/H$, we have
\begin{align*}
F(W[\{0\}|D])\geq \cos\left((|H|-1)\cdot \arccos \left(1-\sqrt{1-\Big(1-q\big(1-F_{\max}^{\{0\}|H}(W)\big)\Big)^2} \right)\right).
\end{align*}
This means that $F(W[\{0\}|D])$ is close to 1 as well. The third inequality of Proposition \ref{propFid} now implies that $I(W[\{0\}|D])<\delta_q^{(3)}(\epsilon)$ for some function $\epsilon\to \delta_q^{(3)}(\epsilon)$ (depending only on $\epsilon$ and $q$) which satisfies $\displaystyle\lim_{\epsilon\to 0} \delta_q^{(3)}(\epsilon)=0$. We conclude that
$$I(W)-I(W[H])=I(W[\{0\}])-I(W[H])= I_{\{0\}|H}(W)\stackrel{(a)}{=}\frac{1}{|G/H|}\sum_{D\in G/H}I(W[\{0\}|D])< \delta_q^(3),$$
where (a) follows from Lemma \ref{lemIMH}. We conclude that
$$\big|I(W)-\log|G/H|\big|\leq |I(W)-I(W[H])| + \big|I(W[H])-\log|G/H|\big|<\delta_q^{(2)}(\delta)+\delta_q^{(3)}(\delta).$$

If we define $\delta_q(\epsilon)=\max\left\{\delta_q^{(1)}(\epsilon),\delta_q^{(2)}(\epsilon)+\delta_q^{(3)}(\epsilon)\right\}$, we get $\big|I(W)-\log|G/H|\big|<\delta_q(\epsilon)$ and $\big|I(W[H])-\log|G/H|\big|<\delta_q(\epsilon)$ in all cases. Moreover, $\displaystyle \lim_{\epsilon\to 0}\delta_q(\epsilon)=0$.

This concludes the proof of the lemma.
\end{proof}

\vspace*{3mm}

The proof of Theorem \ref{thePolG} now follows immediately from Corollary \ref{corPolarF} and Lemma \ref{lemFromFToI}.

\section{Rate of polarization}

In order to derive the rate of polarization (i.e., how fast do synthetic cq-channels polarize), we need the following two lemmas.

\begin{mylem}
\label{lemFWHPolar}
For every subgroup $H$ of $G$, we have:
\begin{itemize}
\item $F(W^-[H])\leq |H|q(q-|H|) F(W[H])$.
\item $F(W^+[H])\leq |H|(q-|H|)^2 F(W[H])^2$.
\end{itemize}
\end{mylem}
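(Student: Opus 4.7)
The plan is to reduce both bounds to Corollary \ref{corFidPolar} applied to $W[H]$, viewed as a cq-channel with input alphabet $G/H$ of size $q'=q/|H|$. The key step is to establish a pair of structural identities: $W^-[H] = (W[H])^-$ as cq-channels, and $(W[H])^+$ is a classical coarse-graining of $W^+[H]$.

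For the first identity I would fix $D_1\in G/H$ and write $\rho^-_{D_1} = \frac{1}{|H|q}\sum_{u_1\in D_1}\sum_{u_2\in G}\rho_{u_1+u_2}\otimes\rho_{u_2}$, then regroup $u_2$ by its coset $D_2 = u_2+H$. The combinatorial fact needed is that $\sum_{u_1\in D_1,\,u_2\in D_2}\rho_{u_1+u_2}\otimes\rho_{u_2} = |H|^2\,\rho_{D_1+D_2}\otimes \rho_{D_2}$, which follows because for every fixed $u_2\in D_2$ the map $u_1\mapsto u_1+u_2$ is a bijection $D_1 \to D_1+D_2$, so the two sums factorize. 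Summing over $D_2\in G/H$ and simplifying the prefactor gives exactly the output state of $(W[H])^-$.

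For the $+$ part, the state $\rho^+_{D_2}$ of $W^+[H]$ carries the classical register $|u_1\rangle\langle u_1|$ at full $G$-granularity, while the state $\tilde\rho^+_{D_2}$ of $(W[H])^+$ carries only the coarser $|D_1\rangle\langle D_1|$. The classical CPTP map sending $|u_1\rangle\langle u_1|$ to $|u_1+H\rangle\langle u_1+H|$, applied to that register, converts $\rho^+_{D_2}$ into $\tilde\rho^+_{D_2}$ (using the same regrouping identity as above). Monotonicity of fidelity under CPTP maps then gives $F(\rho^+_{D_2},\rho^+_{D'_2}) \leq F(\tilde\rho^+_{D_2},\tilde\rho^+_{D'_2})$ for every pair of distinct cosets, and averaging yields $F(W^+[H]) \leq F((W[H])^+)$.

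With these two facts in hand, Corollary \ref{corFidPolar} applied to $W[H]$ (with $q$ replaced by $q'=q/|H|$) gives $F(W^-[H]) = F((W[H])^-) \leq q'(q'-1)F(W[H]) = \frac{q(q-|H|)}{|H|^2} F(W[H])$ and $F(W^+[H]) \leq F((W[H])^+) \leq (q'-1)^2 F(W[H])^2 = \frac{(q-|H|)^2}{|H|^2} F(W[H])^2$. Since $|H|\geq 1$ implies $1/|H|^2 \leq |H|$, both are majorized by the claimed bounds $|H|q(q-|H|)\,F(W[H])$ and $|H|(q-|H|)^2\,F(W[H])^2$. The degenerate case $H=G$ is trivial because $W[H]$ has a single input and $F(W[H])=0$ by convention. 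The only nontrivial step is verifying the two structural identities; these are straightforward linear-algebra computations once the regrouping by cosets is set up, and I do not expect a real obstacle.
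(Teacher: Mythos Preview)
Your proof is correct and takes a genuinely different route from the paper. The paper never identifies $W^{-}[H]$ with $(W[H])^{-}$; instead it passes through the intermediate quantity $F_{\max}^{H|G}(W)$, first using Lemma~\ref{lemFMDFmax} to bound $F(W^{\pm}[H])\leq |H|\,F_{\max}^{H|G}(W^{\pm})$, then Proposition~\ref{propFPolar} to control $F_{\max}^{H|G}(W^{\pm})$ in terms of $F_{\max}^{H|G}(W)$, and finally a new lemma (Lemma~\ref{lemFMaxLeqFH}) showing $F_{\max}^{H|G}(W)\leq (q-|H|)F(W[H])$. Your argument is both shorter and sharper: the structural identity $W^{-}[H]=(W[H])^{-}$ and the coarse-graining relation for the $+$ branch reduce everything to Corollary~\ref{corFidPolar} applied at alphabet size $q'=q/|H|$, yielding constants $q(q-|H|)/|H|^{2}$ and $(q-|H|)^{2}/|H|^{2}$, which are strictly smaller than the paper's $|H|q(q-|H|)$ and $|H|(q-|H|)^{2}$ whenever $|H|>1$. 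What the paper's approach buys is that the machinery ($F_{\max}^{M|H}$, Lemma~\ref{lemFMDFmax}) is already in place from the polarization proof in Section~V, so the argument reuses existing lemmas rather than introducing a new structural observation; your approach buys a cleaner, self-contained argument and a tighter constant, at the cost of verifying the two regrouping identities (which, as you note, are routine).
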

\begin{proof}
See Appendix \ref{appFWHPolar}
\end{proof}

\begin{mylem}
For any $0<\delta<\log 2$ and any $0<\beta<\frac{1}{2}$, we have
$$\lim_{n\to\infty}\frac{1}{2^n}\Big|\big\{s\in\{-,+\}^n:\;I(W^s[H])>\log |G/H|-\delta,\;F(W^s[H])\geq2^{-{2^{\beta n}}}\big\}\Big|=0.$$
\label{lemFidLemPol}
\end{mylem}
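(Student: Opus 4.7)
The plan is to reduce the statement to the Ar\i kan--Telatar rate-of-polarization analysis (already invoked via \cite[Theorem 3.5]{SasogluThesis} in the proof of Theorem~\ref{thePolarFq}), now applied to the \emph{quotient} fidelity process $Z_n:=F(W_n[H])$. Two extra ingredients are needed relative to Section~IV: a recursion for $Z_n$ with a harmless multiplicative constant, and an identification of the event $\{I(W_n[H])>\log|G/H|-\delta\}$ with the event on which $Z_n\to 0$.

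First I would put $Z_n:=F(W_n[H])$ and $C_0:=\max\{|H|q(q-|H|),\,|H|(q-|H|)^2\}$. Lemma~\ref{lemFWHPolar} directly supplies the recursion $Z_{n+1}\le C_0 Z_n$ when $B_{n+1}=-$, and $Z_{n+1}\le C_0 Z_n^2$ when $B_{n+1}=+$. This has exactly the form that drives Theorem~\ref{thePolarFq} (with $C_0$ in place of $q^2$), so the Ar\i kan--Telatar-style doubling argument of \cite[Theorem 3.5]{SasogluThesis} applies verbatim and yields, for every $\beta<\tfrac{1}{2}$,
$$P\bigl(Z_n\to 0,\; Z_n\ge 2^{-2^{\beta n}}\bigr)\xrightarrow[n\to\infty]{}0.$$

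Next, I would identify the event $\{I(W_n[H])>\log|G/H|-\delta\}$ with $\{Z_n\to 0\}$ up to a probability-$o(1)$ symmetric difference. By Lemma~\ref{lemSubMart}, $\{I(W_n[H])\}$ is a bounded submartingale and therefore converges almost surely to some $I_\infty^H$. Theorem~\ref{thePolG} shows that the synthetic channel $W^s$ polarizes almost surely to a homomorphism channel whose kernel is some subgroup $H_s$ of $G$; a short calculation using $H(X\bmod H\mid X\bmod H_s)=\log(|H_s|/|H\cap H_s|)$ then gives $I_\infty^H=\log|G/(H+H_s)|$. Hence $I_\infty^H$ takes values only in the discrete set $\{\log|G/K|:H\subseteq K\subseteq G\}$, whose maximum $\log|G/H|$ is separated from every other value by at least $\log 2$. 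Because $\delta<\log 2$, this forces $\{I_\infty^H>\log|G/H|-\delta\}$ to coincide almost surely with $\{I_\infty^H=\log|G/H|\}=\{H_s\subseteq H\}$, and the almost-sure convergence $I(W_n[H])\to I_\infty^H$ implies that $\{I(W_n[H])>\log|G/H|-\delta\}$ differs from this event by a set of probability $o(1)$. Moreover, on $\{I_\infty^H=\log|G/H|\}$, Proposition~\ref{propFid}(ii) applied to the $|G/H|$-input channel $W_n[H]$ forces $\sqrt{1-Z_n^2}\ge 1-(\log|G/H|-I(W_n[H]))/\log 2\to 1$, i.e.\ $Z_n\to 0$. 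Combining these facts with the first step proves the lemma.

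The main hurdle is the doubling step itself. The key observation is that once $-\log Z_n$ is much larger than $\log C_0$, a $+$-step essentially doubles $-\log Z_n$ while a $-$-step reduces it by at most $\log C_0$; on this regime the process $\log(-\log Z_n)$ behaves as a biased random walk with drift $\tfrac{1}{2}\log 2$, and Hoeffding concentration on the i.i.d.\ $B_j$'s delivers $-\log Z_n\ge 2^{\beta n}\log 2$ with probability tending to $1$ on the event $\{Z_n\to 0\}$, for every $\beta<\tfrac{1}{2}$. The delicate point is handling the transient phase before $-\log Z_n$ enters this regime, but the two-stage argument of \cite[Theorem 3.5]{SasogluThesis} (a fixed prefix to reach a small $Z_n$, followed by the doubling regime on the remaining $\sim n$ steps) applies here without modification.
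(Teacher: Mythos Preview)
Your argument is correct and follows the same overall template as the paper---set up the fidelity recursion from Lemma~\ref{lemFWHPolar} and invoke the Ar\i kan--Telatar doubling argument of \cite[Theorem~3.5]{SasogluThesis}---but the two proofs diverge in how they pin down the limiting values of $I(W_n[H])$.

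The paper does \emph{not} appeal to Theorem~\ref{thePolG}. Instead it picks a maximal chain $H=H_1\subset H_2\subset\dots\subset H_r=G$ and telescopes
\[
I(W_n[H])=\sum_{i=1}^{r-1} I_{H_i|H_{i+1}}(W_n),
\]
so that Lemma~\ref{lemPolarIMH} (already proved) forces each summand to converge almost surely to $0$ or $\log|H_{i+1}/H_i|$, giving $I_\infty^H\in\{\log m:\;m\mid |G/H|\}$ directly. Your route through Theorem~\ref{thePolG} and the formula $I_\infty^H=\log|G/(H+H_s)|$ reaches the same discrete set, and in fact gives a sharper description of which value is actually attained. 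The trade-off is that your ``short calculation'' is not quite as short as stated: Theorem~\ref{thePolG} controls only $I(W^s)$ and $I(W^s[H_s])$, not $I(W^s[H])$ for an arbitrary fixed $H$. You still need a couple of lines such as
\[
I(X\bmod H;B)\le I(X\bmod H;X\bmod H_s)+I(X;B\mid X\bmod H_s)
\]
together with $I(X\bmod H;B)\ge I(X\bmod(H+H_s);B)\ge \log|G/(H+H_s)|-H(X\bmod H_s\mid B)$, and then use $I(W^s)\approx I(W^s[H_s])\approx\log|G/H_s|$ to kill the error terms. This is routine, but it is the missing link between Theorem~\ref{thePolG} and your claimed value of $I_\infty^H$. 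Once that is written out, your proof and the paper's are equivalent; the paper's chain argument is slightly more economical because Lemma~\ref{lemPolarIMH} already packages exactly the convergence statement needed.
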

\begin{proof}
The lemma is trivial if $H=G$, so let us assume that $H\neq G$. Let $H_1,\ldots,H_r$ be a sequence of subgroups of $G$ satisfying:
\begin{itemize}
\item $H=H_1\subset \ldots\subset H_r=G$.
\item $H_i$ is maximal in $H_{i+1}$ for every $1\leq i<r$.
\end{itemize}

Let $\{W_n\}_{n\geq 0}$ be the process defined in the previous section. Lemma \ref{lemPolarIMH} implies that $\{I_{H_i|H_{i+1}}(W_n)\}_{n\geq 0}$ converges almost surely to a random variable $I_{H_i|H_{i+1}}^{(\infty)}\in\{0,\log |H_{i+1}/H_i|\}$. On the other hand, we have
\begin{align*}
I(W_n[H])=I(W_n[H])-I(W_n[G])=\sum_{i=1}^{r-1} \big(I(W_n[H_i])-I(W_n[H_{i+1}])\big)=\sum_{i=1}^{r-1} I_{H_i|H_{i+1}}(W_n).
\end{align*}
This shows that the process $\{I(W_n[H])\}_{n\geq 0}$ converges almost surely to a random variable $I_H^{(\infty)}$ satisfying $$I_H^{(\infty)}\in\{\log m:\; m\;\text{divides}\;|G/H|\}.$$

Due to the relations between the quantities $I(W)$ and $F(W)$ in Proposition \ref{propFid}, we can see that $\{F(W_n[H])\}_{n\geq0}$ converges to 0 whenever $\{I(W_n[H])\}_{n\geq0}$ converges to $\log|G/H|$, and there is a number $f_0>0$ such that $\displaystyle\liminf_{n\to\infty} F(W_n[H])>f_0$ whenever $\{I(W_n[H])\}_{n\geq0}$ converges to a number in $\{\log m:\; m\;\text{divides}\;|G/H|\}$ other than $\log|G/H|$. Therefore, we can say that almost surely, we have:
$$\lim_{n\to\infty} F(W_n[H])=0\;\;\text{or}\;\;\liminf_{n\to\infty} F(W_n[H])>f_0.$$

Now from Lemma \ref{lemFWHPolar}, we have $F(W_n^-[H])\leq q^3 F(W_n[H])$ and $F(W_n^+[H])\leq q^3 F(W_n[H])^2$. By applying exactly the same techniques that were used to prove \cite[Theorem 3.5]{SasogluThesis} we get: $$\displaystyle\lim_{n\to\infty}\mathbb{P}\Big(\Big\{I(W_n[H])>\log |G/H|-\delta, F(W_n[H])\geq2^{-{2^{n\beta}}}\Big\}\Big)=0.$$ 
By examining the explicit expression of this probability we get the lemma.
\end{proof}

\begin{mythe}
\label{thePolGF}
The polarization of $W_n$ is almost surely fast:
\begin{align*}
\lim_{n\to\infty} \frac{1}{2^n}\Big|\Big\{ & s\in\{-,+\}^n: \exists H_s\;\text{subgroup of}\;G,\\
& \big|I(W^s)-\log|G/H_s|\big|<\delta, \big|I(W^s[H_s])-\log|G/H_s|\big|<\delta, F(W^s[H_s])<2^{-2^{\beta n}} \Big\}\Big| = 1,
\end{align*}
for any $\displaystyle 0<\delta<\log 2$ and any $0<\beta<\frac{1}{2}$.
\end{mythe}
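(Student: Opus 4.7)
The plan is to combine Theorem \ref{thePolG} with Lemma \ref{lemFidLemPol} via a finite union bound over the subgroup lattice of $G$. Theorem \ref{thePolG} already supplies, for any $\delta>0$, a subgroup $H_s$ for a $1-o(1)$ fraction of sequences $s\in\{-,+\}^n$ such that both $\bigl|I(W^s)-\log|G/H_s|\bigr|<\delta$ and $\bigl|I(W^s[H_s])-\log|G/H_s|\bigr|<\delta$. What is missing from that statement is the fast decay of $F(W^s[H_s])$, and Lemma \ref{lemFidLemPol} delivers exactly this decay, but one subgroup at a time.

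First I would fix $0<\delta<\log 2$ and $0<\beta<\tfrac12$, and for each subgroup $H$ of $G$ define the bad set
\[
\mathcal{B}_n(H):=\Bigl\{s\in\{-,+\}^n:\;I(W^s[H])>\log|G/H|-\delta,\;F(W^s[H])\geq 2^{-2^{\beta n}}\Bigr\}.
\]
By Lemma \ref{lemFidLemPol}, $|\mathcal{B}_n(H)|/2^n\to 0$ for each fixed $H$. Since $G$ is finite, its subgroup lattice is finite, so the union $\mathcal{B}_n:=\bigcup_{H\leq G}\mathcal{B}_n(H)$ still satisfies $|\mathcal{B}_n|/2^n\to 0$. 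Separately, Theorem \ref{thePolG} gives a set $\mathcal{A}_n\subseteq\{-,+\}^n$ of sequences for which a valid subgroup $H_s$ exists, with $|\mathcal{A}_n|/2^n\to 1$.

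Next I would argue that every $s\in\mathcal{A}_n\setminus\mathcal{B}_n$ meets the conclusion of Theorem \ref{thePolGF}. Indeed, for such an $s$ the subgroup $H_s$ provided by $\mathcal{A}_n$ satisfies $I(W^s[H_s])>\log|G/H_s|-\delta$, which is exactly the hypothesis of the event $\mathcal{B}_n(H_s)$; since $s\notin\mathcal{B}_n\supseteq\mathcal{B}_n(H_s)$, the negation must hold, i.e.\ $F(W^s[H_s])<2^{-2^{\beta n}}$. Combined with the two entropy inequalities already inherited from membership in $\mathcal{A}_n$, this places $s$ in the set whose cardinality we want to lower bound, so that set has density at least $|\mathcal{A}_n\setminus\mathcal{B}_n|/2^n\geq |\mathcal{A}_n|/2^n-|\mathcal{B}_n|/2^n\to 1$.

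The only real work has already been done in Lemma \ref{lemFidLemPol}, whose proof leverages the polarization bounds $F(W_n^-[H])\leq q^3 F(W_n[H])$ and $F(W_n^+[H])\leq q^3 F(W_n[H])^2$ from Lemma \ref{lemFWHPolar} together with the Ar\i kan--Telatar style argument from \cite{SasogluThesis}. The potential subtlety here is that the random subgroup $H_s$ depends on $s$, so one cannot invoke Lemma \ref{lemFidLemPol} directly for the correct subgroup; the union bound over the finite subgroup lattice is precisely what removes this dependence and is the only nontrivial move in the argument.
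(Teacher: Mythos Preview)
Your proposal is correct and follows essentially the same approach as the paper: define the ``good'' set $\mathcal{A}_n$ from Theorem \ref{thePolG}, define for each subgroup $H$ the ``bad'' set $\mathcal{B}_n(H)$ handled by Lemma \ref{lemFidLemPol}, take a finite union over the subgroup lattice, and observe that $\mathcal{A}_n\setminus\bigcup_H\mathcal{B}_n(H)$ lands in the target set. The paper's proof uses exactly this structure (with sets named $E_1$, $E_H$, $E_2$), and your observation that the union bound over the finite lattice is what decouples the random choice of $H_s$ from $s$ is the precise content of the argument.
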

\begin{proof}
For every subgroup $H$ of $G$, define:
$$E_H=\Big\{s\in\{-,+\}^n:\;I(W^s[H])>\log |G/H|-\delta, F(W^s[H])\geq2^{-{2^{\beta n}}}\Big\},$$

\begin{align*}
E_1=\Big\{s\in\{-,+\}^n: \exists H_s\;\text{subgroup of}\;G,\big|I(W^s)-\log|G/H_s|\big|<\delta, \big|I(W^s[H_s])-\log|G/H_s|\big|<\delta \Big\},
\end{align*}
and
\begin{align*}
E_2=\Big\{ s\in\{-,+\}^n: &\exists H_s\;\text{subgroup of}\;G,\\
& \big|I(W^s)-\log|G/H_s|\big|<\delta, \big|I(W^s[H_s])-\log|G/H_s|\big|<\delta, F(W^s[H_s])<2^{-2^{\beta n}} \Big\}.
\end{align*}

If $\displaystyle s\in E_1/\Big(\bigcup_{H\;\text{subgroup of}\;G} E_H\Big)$ then  $s\in E_2$. Therefore, $$\displaystyle E_1/\Big(\bigcup_{H\;\text{subgroup of}\;G} E_H\Big) \subset E_2,$$ and $\displaystyle|E_2|\geq|E_1|-\sum_{H\;\text{subgroup of}\;G}|E_H|$. By Theorem \ref{thePolG} and Lemma \ref{lemFidLemPol} we have:
$$1\geq\lim_{n\to\infty}\frac{1}{2^n}|E_2|\geq\lim_{n\to\infty}\frac{1}{2^n}\Big(|E_1|-\sum_{H\;\text{subgroup of}\;G}|E_H|\Big)=1-0=1.$$
\end{proof}

\section{Polar code construction}
Let $W:x\in G\longrightarrow \rho_x\in\mathcal{DM}(k)$ be an arbitrary cq-channel.

Choose $\displaystyle 0<\delta<\log 2$ and $0<\beta<\beta'<\frac{1}{2}$, and let $n$ be an integer such that $$2\sqrt{2^n}\sqrt{ (q-1)2^n2^{-2^{\beta' n}}}\leq 2^{-2^{\beta n}}\;\;\text{and}\;\;\frac{1}{2^n}|E_n|>1-\frac{\delta}{2\log q},$$ where

\begin{align*}
E_n=\Big\{ s\in &\{-,+\}^n: \exists H_s\;\text{subgroup of}\;G,\\
& \big|I(W^s)-\log|G/H_s|\big|<\frac{\delta}{2}, \big|I(W^s[H_s])-\log|G/H_s|\big|<\frac{\delta}{2}, F(W^s[H_s])<2^{-2^{\beta' n}} \Big\}.
\end{align*}

Such an integer exists due to Theorem \ref{thePolGF}. For every $s\in\{-,+\}^n$ choose a subgroup $H_s$ of $G$ as follows:
\begin{itemize}
\item If $s\notin E_n$, define $H_s=G$. We clearly have $F(W^s[H_s])=0<2^{-2^{\beta' n}}$.
\item If $s\in E_n$, choose a subgroup $H_s$ of $G$ such that $F(W^s[H_s])<2^{-2^{\beta' n}}$, $\big|I(W^s)-\log|G/H_s|\big|<\frac{\delta}{2}$ and $\big|I(W^s[H_s])-\log|G/H_s|\big|<\frac{\delta}{2}$.
\end{itemize}
Now for every $s\in\{-,+\}^n$, let $f_s:G/H_s\longrightarrow G$ be a frozen mapping (in the sense that the receiver knows $f_s$) such that $f_s(a)\bmod H_s = a$ for all $a\in G/H_s$. We call such mapping \emph{a section mapping} of $G/H_s$. Let $\tilde{U}^s$ be a random coset chosen uniformly in $G/H_s$ and we let $U^s=f_s(\tilde{U}^s)$. Note that if the receiver can determine $U^s\bmod H_s=\tilde{U}^s$ accurately, then he can also determine $U^s$ since he knows $f_s$.

If $H_s\neq\{0\}$, we have some freedom on the choice of the section mapping $f_s$. We will analyze the performance of polar codes averaged on all possible section mappings. I.e., we assume that $f_s$ is chosen uniformly from all the possible section mappings of $H_s$. We can easily see that the induced distributions of $\big\{U^s:\;s\in\{-,+\}^n\big\}$ are independent and uniform in $G$. Note that for every $s\in\{-,+\}^n$, the receiver has to determine $\tilde{U}^s=U^s\bmod H_s$ in order to successfully determine $U^s$.

\subsection{Encoder}

We associate the set $S_n:=\{-,+\}^n$ with the strict total order $<$ defined as $(s_1,...,s_n)<(s_1',...,s_n')$ if and only if $s_i=-,s_i'=+$ for some $i\in\{1,...,n\}$ and $s_h=s_h'$ for all $i<h\leq n$.

For every $u=(u^s)_{s\in S_n}\in G^{S_n}$, every $0\leq n'\leq n$ and every $(s',s'')\in S_{n'}\times S_{n-n'}$, define $\mathcal{E}_{s'}^{s''}(u)\in G$ recursively on $0\leq n'\leq n$ as follows:
\begin{itemize}
\item $\mathcal{E}_{\o}^{s}(u)=u^s$ if $n'=0$ and $s\in S_n$.
\item $\mathcal{E}_{(s',-)}^{s''}(u)=\mathcal{E}_{s'}^{(s'',-)}(u)+\mathcal{E}_{s'}^{(s'',+)}(u)$ if $n'>0$, $s'\in S_{n'-1}$ and $s''\in S_{n-n'}$.
\item $\mathcal{E}_{(s',+)}^{s''}(u)=\mathcal{E}_{s'}^{(s'',+)}(u)$ if $n'>0$, $s'\in S_{n'-1}$ and $s''\in S_{n-n'}$.
\end{itemize}
For every $s\in S_n$, we write $\mathcal{E}_{\o}^{s}(u)$ as $\mathcal{E}^{s}(u)$ and $\mathcal{E}_s^{\o}(u)$ as $\mathcal{E}_s(u)$.

Let $\{W_s\}_{s\in S_n}$ be a set of $2^n$ independent copies of the channel $W$. $W_s$ should not be confused with $W^s$: $W_s$ is a copy of the channel $W$ and $W^s$ is a synthetic cq-channel obtained from $W$ as before.

Let $(U^s)_{s\in S_n}=(f_s(\tilde{U}^s))_{s\in S_n}$ be the sequence of $2^n$ independent random variables that were defined before. For every $0\leq n'\leq n$, $s'\in S_{n'}$ and $s''\in S_{n-n'}$, define $U_{s'}^{s''}=\mathcal{E}_{s'}^{s''}\big((U^s)_{s\in S_n}\big)$. We have:
\begin{itemize}
  \item $U_{\o}^{s}=U^{s}$ if $n'=0$ and $s\in\{-,+\}^n$.
  \item $U_{(s';-)}^{s''}=U_{s'}^{(s'';+)}+U_{s'}^{(s'';-)}$ if $n'>0$, $s'\in\{-,+\}^{n'-1}$ and $s''\in\{-,+\}^{n-n'}$.
  \item $U_{(s';+)}^{s''}=U_{s'}^{(s'';+)}$ if $n'>0$, $s'\in\{-,+\}^{n'-1}$ and $s''\in\{-,+\}^{n-n'}$.
\end{itemize}
For every $s\in S_n$, let $U_s=U_s^{\o}$. It is easy to see that $(U_s)_{s\in S_n}$ are independent and uniformly distributed in $G$.

For every $s\in S_n$, we send $U_s$ through the channel $W_s$. Let $B_s$ be the system describing the output of the channel $W_s$, and let $B=\{B_s\}_{s\in S_n}$. We can prove by backward induction on $n'$ that the channel $U_{s'}^{s''}\rightarrow \big(\{B_s\}_{s\;has\;s'\; as\;prefix},\{U_{s'}^{r}\}_{r<s''}\big)$ is equivalent to the channel $W^{s''}$ for every $0\leq n'\leq n$, $s'\in S_{n'}$ and $s''\in S_{n-n'}$. In particular, the channel $U^s\rightarrow \big(B,\{U^{r}\}_{r<s}\big)$ is equivalent to the channel $W^s$ for every $s\in S_n$.

Note that the encoding algorithm described above has a complexity of $O(N\log N)$, where $N=2^n$ is the blocklength of the polar code.

\subsection{Quantum successive cancellation decoder}
Before describing the decoder, let us fix a few useful notations. 

For every $s\in S_n$, define $\mathcal{L}_{s}=\{r\in S_n:\;r<s\}$ and $\mathcal{U}_{s}=\{r\in S_n:\;r>s\}$. For every $u=(u^s)_{s\in S_n}\in G^{S_n}$, define the following:
\begin{itemize}
\item For every $S\subset S_n$, let $u^S:=(u^s)_{s\in S}$.
\item For every $s\in S_n$, let $u_s:=\mathcal{E}_s(u)$.
\item Define $\displaystyle\rho_u^B:=\bigotimes_{s\in S_n}\rho_{u_s}^{B_s}$. This means that if $U^s=u^s$ for every $s\in S_n$, then the receiver sees the state $\rho_u^B$ at the output.
\end{itemize}

It is easy to see that for every $s\in S_n$, we have $W^s:u^s\in G\longrightarrow \rho^{B,U^{\mathcal{L}_s}}_{s,u^s}\in\mathcal{DM}\left(k^{2^n}\cdot q^{|\mathcal{L}_s|}\right)$, where
$$\rho^{B,U^{\mathcal{L}_s}}_{s,u^s}=\frac{1}{q^{|\mathcal{L}_s|}} \sum_{u^{\mathcal{L}_s}\in G^{{\mathcal{L}_s}}} \overline{\rho}^{B}_{u^s,u^{\mathcal{L}_s}} \otimes \left|u^{\mathcal{L}_s}\right\rangle\left\langle u^{\mathcal{L}_s}\right|^{U^{\mathcal{L}_s}},$$
and
$$\overline{\rho}^{B}_{u^s,u^{\mathcal{L}_s}}=\frac{1}{q^{|\mathcal{U}_s|}}\sum_{u^{\mathcal{U}_s}\in G^{\mathcal{U}_s}} \rho_u^B.$$

Moreover, we have $W^s[H_s]:\tilde{u}^s\in G/H_s\longrightarrow \rho^{B,U^{\mathcal{L}_s}}_{s,\tilde{u}^s}\in\mathcal{DM}\left(k^{2^n}\cdot q^{|\mathcal{L}_s|}\right)$, where
$$\rho^{B,U^{\mathcal{L}_s}}_{s,\tilde{u}^s}=\frac{1}{|H_s|}\sum_{u_s\in\tilde{u}_s}\rho^{B,U^{\mathcal{L}_s}}_{s,u^s}=\frac{1}{q^{|\mathcal{L}_s|}} \sum_{u^{\mathcal{L}_s}\in G^{{\mathcal{L}_s}}} \overline{\rho}^{B}_{\tilde{u}^s,u^{\mathcal{L}_s}} \otimes \left|u^{\mathcal{L}_s}\right\rangle\left\langle u^{\mathcal{L}_s}\right|^{U^{\mathcal{L}_s}},$$
and
$$\overline{\rho}^{B}_{\tilde{u}^s,u^{\mathcal{L}_s}}=\frac{1}{|H_s|\cdot q^{|\mathcal{U}_s|}} \sum_{u^s\in\tilde{u}^s}\sum_{u^{\mathcal{U}_s}\in G^{\mathcal{U}_s}} \rho_u^B.$$

\begin{mylem}
\label{lemPOVMFWH}
For every $u^{\mathcal{L}_s}\in G^{{\mathcal{L}_s}}$, there exists a POVM $\left\{\Pi_{(s),u^{\mathcal{L}_s},\tilde{u}^s}^{B}:\; \tilde{u}^s\in G/H_s\right\}$ such that the POVM $\left\{\Pi_{(s),\tilde{u}^s}^{B, U^{\mathcal{L}_s}}:\; \tilde{u}^s\in G/H_s\right\}$ defined as
$$\Pi_{(s),\tilde{u}^s}^{B, U^{\mathcal{L}_s}}=\sum_{u^{\mathcal{L}_s}\in G^{{\mathcal{L}_s}}}\Pi_{(s),u^{\mathcal{L}_s},\tilde{u}^s}^{B}\otimes \left|u^{\mathcal{L}_s}\right\rangle\left\langle u^{\mathcal{L}_s}\right|^{U^{\mathcal{L}_s}}, $$
satisfies
$$1-\frac{1}{|G/H_s|}\sum_{\tilde{u}^s\in G/H_s} \Tr\left(\Pi_{(s),\tilde{u}^s}^{B, U^{\mathcal{L}_s}} \rho^{B,U^{\mathcal{L}_s}}_{s,\tilde{u}^s}\right)<(|G/H_s|-1)F(W[H_s]).$$
\end{mylem}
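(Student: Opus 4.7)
\medskip

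\noindent\textbf{Proof plan.} The plan is to exploit the classical, block-diagonal structure of the side information register $U^{\mathcal{L}_s}$: since the states $\rho^{B,U^{\mathcal{L}_s}}_{s,\tilde{u}^s}$ are all block-diagonal in the fixed basis $\{|u^{\mathcal{L}_s}\rangle\}$, the POVM elements of the required form $\Pi_{(s),\tilde{u}^s}^{B,U^{\mathcal{L}_s}}=\sum_{u^{\mathcal{L}_s}}\Pi_{(s),u^{\mathcal{L}_s},\tilde{u}^s}^{B}\otimes|u^{\mathcal{L}_s}\rangle\langle u^{\mathcal{L}_s}|$ are precisely those that first measure $U^{\mathcal{L}_s}$ and then apply a $u^{\mathcal{L}_s}$-dependent POVM on $B$. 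So it suffices to design one POVM per value of $u^{\mathcal{L}_s}$.

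For each fixed $u^{\mathcal{L}_s}\in G^{\mathcal{L}_s}$, I would look at the auxiliary cq-channel $V_{u^{\mathcal{L}_s}}:\tilde{u}^s\in G/H_s\longrightarrow \overline{\rho}^{B}_{\tilde{u}^s,u^{\mathcal{L}_s}}\in\mathcal{DM}\bigl(k^{2^n}\bigr)$ (these are legitimate density matrices because the sum defining $\overline{\rho}^{B}_{\tilde{u}^s,u^{\mathcal{L}_s}}$ has exactly $|H_s|\cdot q^{|\mathcal{U}_s|}$ unit-trace summands with the same normalizing factor). By the bound $\mathbb{P}_e(V)\leq (|G/H_s|-1)F(V)$ from \cite{AvFidelity}, there exists a POVM $\{\Pi_{(s),u^{\mathcal{L}_s},\tilde{u}^s}^{B}:\tilde{u}^s\in G/H_s\}$ achieving this error bound for $V_{u^{\mathcal{L}_s}}$. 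Assemble these into the claimed $\Pi_{(s),\tilde{u}^s}^{B,U^{\mathcal{L}_s}}$; this is clearly a valid POVM.

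A direct computation, using that $|u^{\mathcal{L}_s}\rangle\langle u^{\mathcal{L}_s}|\cdot|u'^{\mathcal{L}_s}\rangle\langle u'^{\mathcal{L}_s}|=\delta_{u^{\mathcal{L}_s},u'^{\mathcal{L}_s}}|u^{\mathcal{L}_s}\rangle\langle u^{\mathcal{L}_s}|$, yields
\begin{align*}
\frac{1}{|G/H_s|}\sum_{\tilde{u}^s}\Tr\!\left(\Pi_{(s),\tilde{u}^s}^{B,U^{\mathcal{L}_s}}\rho^{B,U^{\mathcal{L}_s}}_{s,\tilde{u}^s}\right)
=\frac{1}{q^{|\mathcal{L}_s|}}\sum_{u^{\mathcal{L}_s}}\frac{1}{|G/H_s|}\sum_{\tilde{u}^s}\Tr\!\left(\Pi_{(s),u^{\mathcal{L}_s},\tilde{u}^s}^{B}\,\overline{\rho}^{B}_{\tilde{u}^s,u^{\mathcal{L}_s}}\right),
\end{align*}
so the overall error probability is at most $(|G/H_s|-1)\cdot\frac{1}{q^{|\mathcal{L}_s|}}\sum_{u^{\mathcal{L}_s}}F(V_{u^{\mathcal{L}_s}})$.

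The final step, which is the main point of the argument, is to recognize the average on the right as $F(W^s[H_s])$. Here I would use the standard identity that for block-diagonal states $\rho=\bigoplus_k p_k\rho_k$ and $\sigma=\bigoplus_k p_k\sigma_k$ with the same classical weights $p_k$, one has $F(\rho,\sigma)=\sum_k p_k F(\rho_k,\sigma_k)$. Applied with $p_k=1/q^{|\mathcal{L}_s|}$ to the pair $\rho^{B,U^{\mathcal{L}_s}}_{s,\tilde{u}^s}$ and $\rho^{B,U^{\mathcal{L}_s}}_{s,\tilde{u}'^s}$, this gives $F\bigl(\rho^{B,U^{\mathcal{L}_s}}_{s,\tilde{u}^s},\rho^{B,U^{\mathcal{L}_s}}_{s,\tilde{u}'^s}\bigr)=\frac{1}{q^{|\mathcal{L}_s|}}\sum_{u^{\mathcal{L}_s}}F\bigl(\overline{\rho}^{B}_{\tilde{u}^s,u^{\mathcal{L}_s}},\overline{\rho}^{B}_{\tilde{u}'^s,u^{\mathcal{L}_s}}\bigr)$; averaging over unequal coset pairs and swapping the order of summation yields $F(W^s[H_s])=\frac{1}{q^{|\mathcal{L}_s|}}\sum_{u^{\mathcal{L}_s}}F(V_{u^{\mathcal{L}_s}})$, which closes the bound. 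No step is really hard; the only thing to be careful about is the fidelity-on-block-diagonal-states identity and checking that $\overline{\rho}^{B}_{\tilde{u}^s,u^{\mathcal{L}_s}}$ is genuinely a density operator, and I read the ``$F(W[H_s])$'' in the statement as a typo for $F(W^s[H_s])$.
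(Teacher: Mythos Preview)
Your proposal is correct and follows essentially the same route as the paper: the paper reduces to a ``simpler version'' (Lemma~\ref{lemPOVMFWHSimple}) that abstracts exactly your argument---per classical block $u$ take an optimal POVM achieving the $(q-1)F(W_u)$ bound from \cite{AvFidelity}, assemble block-diagonally, and use the identity $F(\bigoplus_k p_k\rho_k,\bigoplus_k p_k\sigma_k)=\sum_k p_k F(\rho_k,\sigma_k)$ to collapse the average back to $F(W)$. Your reading of ``$F(W[H_s])$'' as a typo for ``$F(W^s[H_s])$'' is also correct.
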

\begin{proof}
See Appendix \ref{appPOVMFWH}.
\end{proof}

\vspace*{3mm}

For every $s\in S_n$, every $u^s\in G$ and every $u^{\mathcal{L}_s}\in G^{\mathcal{L}_s}$, define the POVM $\left\{\Pi_{(s),u^{\mathcal{L}_s},u^s}^{B}:\; u^s\in G\right\}$ as:
$$\Pi_{(s),u^{\mathcal{L}_s},u^s}^{B}=\begin{cases}\Pi_{(s),u^{\mathcal{L}_s},u^s\bmod H_s}^{B}\quad&\text{if}\;u^s=f_s(u^s\bmod H_s),\\0\quad&\text{otherwise}. \end{cases}$$

Now we are ready to describe the quantum successive cancellation decoder. We will decode $\{U^s\}_{s\in S_n}$ successively by respecting the order $<$ on $S_n$. At the stage $s\in S_n$, we would have decoded $U^{\mathcal{L}_s}=(U^{r})_{r<s}$ and obtained an estimate $\hat{u}^{\mathcal{L}_s}=(\hat{u}^{r})_{r<s}$ of it, so we apply the POVM $\left\{\Pi_{(s),\hat{u}^{\mathcal{L}_s},u^s}^{B}:\; u^s\in G\right\}$ on the output system $B=(B_s)_{s\in S_n}$ and we let $\hat{u}^s$ be the measurement result. We assume that the the POVM measurement is designed so that if $\sigma^B$ was the state of the $B$ system before the measurement, and if the output $\hat{u}^s$ occurs, then the post-measurement state is $\displaystyle \frac{\sqrt{\Pi_{(s),\hat{u}^{\mathcal{L}_s},\hat{u}^s}^{B}}\sigma^B\sqrt{\Pi_{(s),\hat{u}^{\mathcal{L}_s},\hat{u}^s}^{B}}}{\Tr\left(\Pi_{(s),\hat{u}^{\mathcal{L}_s},\hat{u}^s}^{B}\sigma^B\right)}$.

The whole procedure is equivalent to applying the POVM $\left\{\Lambda_{u}^B:\;u=(u^s)_{s\in S_n}\in G^{S_n}\right\}$ defined as:
\begin{align*}
\Lambda_{u}^B=\sqrt{\Pi^B_{(s_1),u^{s_1}}}\ldots\sqrt{\Pi^B_{(s_i),u^{\mathcal{L}_{s_i}},u^{s_i}}}\ldots \sqrt{\Pi^B_{(s_N),u^{\mathcal{L}_{s_N}},u^{s_N}}}&\sqrt{\Pi^B_{(s_N),u^{\mathcal{L}_{s_N}},u^{s_N}}} \ldots\\
&\sqrt{\Pi^B_{(s_i),u^{\mathcal{L}_{s_i}},u^{s_i}}}\ldots \sqrt{\Pi^B_{(s_1),u^{s_1}}},
\end{align*}

where $s_1< s_2<\ldots< s_N$ are the $N=2^n$ elements of $S_n$ ordered according to the order relation $<$.

It is easy to see that $\Lambda_{u}\geq 0$ for every $u\in G^{S_n}$, and $\displaystyle\sum_{u\in G^{S_n}} \Lambda_u=I$.

\subsection{Performance of polar codes}

For every $s\in S_n$, let $\mathcal{F}_s$ be the set of section mappings of $G/H_s$. We have:
$$\mathcal{F}_s=\left\{f_s\in G^{G/H_s}:\; f_s(\hat{u}^s)\in \hat{u}^s\;\text{for all}\; \hat{u}^s\in G/H_s\right\}.$$
It is easy to see that $|\mathcal{F}_s|=|H_s|^{|G/H_s|}$. Define
$$\mathcal{F}:=\prod_{s\in S_n}\mathcal{F}_s.$$

For every $f=(f_s)_{s\in S_n}\in \mathcal{F}$ and every $\displaystyle\tilde{u}=(\tilde{u}^s)_{s\in S_n}\in\prod_{s\in S_n} (G/H_s)$, define $f(\tilde{u})=\big(f_s(\tilde{u}^s)\big)_{s\in S_n}\in G^{S_n}$.

The probability of error of the quantum successive cancellation decoder for a particular choice of $\displaystyle f=(f_s)_{s\in S_n}\in\mathcal{F}=\prod_{s\in S_n}\mathcal{F}_s$ is given by:

\begin{align*}
P_e(f)&=\frac{1}{\prod_{s\in S_n} |G/H_s|}\sum_{\tilde{u}\in \prod_{s\in S_n}(G/H_s)}\left(1-\Tr\left(\Lambda_{f(\tilde{u})}^B\rho_{f(\tilde{u})}^B\right)\right)=\mathbb{E}_{\tilde{U}}\left(1-\Tr\left(\Lambda_{f(\tilde{U})}^B\rho_{f(\tilde{U})}^B\right)\right),
\end{align*}
where $\tilde{U}=(\tilde{U}^s)_{s\in S_n}$ is uniformly distributed in $\displaystyle \prod_{s\in S_n}(G/H_s)$.

The probability of error averaged over all the choices of $\displaystyle f=(f_s)_{s\in S_n}\in\mathcal{F}=\prod_{s\in S_n}\mathcal{F}_s$ is:

\begin{align*}
\overline{P}_e&=\frac{1}{|\mathcal{F}|}\sum_{f\in \mathcal{F}}P_e(f)=\frac{1}{|\mathcal{F}|}\sum_{f\in \mathcal{F}} \mathbb{E}_{\tilde{U}}\left(1-\Tr\left(\Lambda_{f(\tilde{U})}\rho_{f(\tilde{U})}^B\right)\right)\\
&=\mathbb{E}_{F,\tilde{U}}\left(1-\Tr\left(\Lambda_{F(\tilde{U})}^B\rho_{F(\tilde{U})}^B\right)\right)=\mathbb{E}_{F,\tilde{U}}\left(1-\Tr\left(\Lambda_{U}^B\rho_{U}^B\right)\right),
\end{align*}
where $F=(F_s)_{s\in S_n}$ is uniformly distributed in $\displaystyle\mathcal{F}=\prod_{s\in S_n}\mathcal{F}_s$, and $U=(U^s)_{s\in S_n}=F(U)=\big(F_s(\tilde{U}^s)\big)_{s\in S_n}$.  It is easy to see that $\{U^s:\;s\in S_n\}$ are independent and uniformly distributed in $G$. We have:

\begin{align*}
\overline{P}_e&=\mathbb{E}_{F,\tilde{U}}\left(1-\Tr\left(\Lambda_{U}^B\rho_{U}^B\right)\right)\\
&=\mathbb{E}_{F,\tilde{U}}\left(1- \Tr\left(\sqrt{\Pi^B_{(s_N),U^{\mathcal{L}_{s_N}},U^{s_N}}}\ldots\sqrt{\Pi^B_{(s_1),U^{s_1}}} \rho_U^B\sqrt{\Pi^B_{(s_1),U^{s_1}}}
\ldots \sqrt{\Pi^B_{(s_N),U^{\mathcal{L}_{s_N}},U^{s_N}}}\right)\right)\\
&\stackrel{(a)}{\leq} \mathbb{E}_{F,\tilde{U}}\left( 2\sqrt{N} \sqrt{\sum_{i=1}^N \left(1-\Tr\left(\Pi^B_{(s_i),U^{\mathcal{L}_{s_i}},U^{s_i}}\rho_U^B\right)\right) }\right)\\
&\stackrel{(b)}{\leq} 2\sqrt{N} \sqrt{\mathbb{E}_{F,\tilde{U}}\left( \sum_{i=1}^N \left(1-\Tr\left(\Pi^B_{(s_i),U^{\mathcal{L}_{s_i}},U^{s_i}}\rho_U^B\right)\right) \right)}=2\sqrt{N} \sqrt{\sum_{s\in S_n}  \mathbb{E}_{F,\tilde{U}} \left(1-\Tr\left(\Pi^B_{(s),U^{\mathcal{L}_{s}},U^{s}}\rho_U^B\right) \right)}\\
&\stackrel{(c)}{=}2\sqrt{N} \sqrt{\sum_{s\in S_n}  \mathbb{E}_{F,\tilde{U}} \left(1-\Tr\left(\Pi^B_{(s),U^{\mathcal{L}_{s}},\tilde{U}^{s}}\rho_U^B\right) \right)}\stackrel{(d)}{=}2\sqrt{N} \sqrt{\sum_{s\in S_n}  \mathbb{E}_{U,\tilde{U}^s} \left(1-\Tr\left(\Pi^B_{(s),U^{\mathcal{L}_{s}},\tilde{U}^{s}}\rho_U^B\right) \right)}\\
&=2\sqrt{N} \sqrt{\sum_{s\in S_n} \bigg(1-\mathbb{E}_{\tilde{U}^s,U^{\mathcal{L}_s}}\Tr\Big(\Pi^B_{(s),U^{\mathcal{L}_{s}},\tilde{U}^{s}} \mathbb{E}_{U^s,U^{\mathcal{U}_s}|\tilde{U}^s,U^{\mathcal{L}_s}} \left(\rho_U^B\right)\Big)\bigg) }\\
&\stackrel{(e)}{=}2\sqrt{N} \sqrt{\sum_{s\in S_n} \bigg(1-\mathbb{E}_{\tilde{U}^s,U^{\mathcal{L}_s}}\left(\Tr\left(\Pi^B_{(s),U^{\mathcal{L}_{s}},\tilde{U}^{s}} \overline{\rho}^B_{\tilde{U}^s,U^{\mathcal{L}_s}}\right)\right) \bigg)},
\end{align*}
where (a) follows from the ``non-commutative union bound" of Lemma \ref{lemNonComUnionBound}. (b) follows from the concavity of the square root. (c) follows from the fact that $U^s=f_s(\tilde{U}^s)$, which implies that $U^s\bmod H_s= \tilde{U}^s$ and $U^s=f_s(U^s\bmod H_s)$, which in turn implies that $\Pi^B_{(s),U^{\mathcal{L}_{s}},U^s}=\Pi^B_{(s),U^{\mathcal{L}_{s}},U^s\bmod H_s}$. (d) follows from the fact that $\Tr\left(\Pi^B_{(s),U^{\mathcal{L}_{s}},\tilde{U}^{s}}\rho_U^B\right)$ depends only on $\tilde{U}^s$ and $U$. (e) follows from the fact that for every $\tilde{u}^s\in G/H_s$ and every $u^{\mathcal{L}_s}\in G^{\mathcal{L}_s}$, we have:

\begin{align*}
\mathbb{E}_{U^s,U^{\mathcal{U}_s}|\tilde{U}^s=\tilde{u}^s,U^{\mathcal{L}_s}=u^{\mathcal{L}_s}} \left(\rho_U^B\right)&=\frac{1}{|H_s|\cdot q^{|\mathcal{U}_s|}}\sum_{u^s\in\tilde{u}^s}\sum_{u^{\mathcal{U}_s}\in G^{\mathcal{U}_s}} \rho^B_u=\overline{\rho}^B_{\tilde{u}^s,u^{\mathcal{L}_s}}.
\end{align*}

On the other hand, we have:
\begin{align*}
\mathbb{E}_{\tilde{U}^s,U^{\mathcal{L}_s}}\left(\Tr\left(\Pi^B_{(s),U^{\mathcal{L}_{s}},\tilde{U}^{s}} \overline{\rho}^B_{\tilde{U}^s,U^{\mathcal{L}_s}}\right)\right) &= \frac{1}{|G/H_s|}\sum_{\tilde{u}^s\in G/H_s}\frac{1}{q^{|\mathcal{L}_s|}}\sum_{u^{\mathcal{L}_s}\in G^{\mathcal{L}_s}} \Tr\left( \Pi^B_{(s),u^{\mathcal{L}_{s}},\tilde{u}^{s}} \overline{\rho}^B_{\tilde{u}^s,u^{\mathcal{L}_s}}\right)\\
&=\frac{1}{|G/H_s|}\sum_{\tilde{u}^s\in G/H_s} \Tr\left(\Pi^{B,U^{\mathcal{L}_s}}_{(s),\tilde{u}^{s}}  \rho^{B,U^{\mathcal{L}_s}}_{s,\tilde{u}^s}\right).
\end{align*}

Therefore,

\begin{align*}
\overline{P}_e&\leq 2\sqrt{N} \sqrt{\sum_{s\in S_n} \bigg(1-\mathbb{E}_{\tilde{U}^s,U^{\mathcal{L}_s}}\left(\Tr\left(\Pi^B_{(s),U^{\mathcal{L}_{s}},\tilde{U}^{s}} \overline{\rho}^B_{\tilde{U}^s,U^{\mathcal{L}_s}}\right)\right) \bigg)}\\
&=2\sqrt{N} \sqrt{\sum_{s\in S_n} \left(1-\frac{1}{|G/H_s|}\sum_{\tilde{u}^s\in G/H_s} \Tr\left(\Pi^{B,U^{\mathcal{L}_s}}_{(s),\tilde{u}^{s}}  \rho^{B,U^{\mathcal{L}_s}}_{s,\tilde{u}^s}\right) \right)}\\
&\stackrel{(a)}{\leq} 2\sqrt{N}\sqrt{\sum_{s\in S_n} (|G/H_s|-1)F(W[H_s])}\leq 2\sqrt{N}\sqrt{\sum_{s\in S_n} (q-1)2^{-2^{\beta' n}}}\\
&\leq 2\sqrt{2^n}\sqrt{ (q-1)2^n2^{-2^{\beta' n}}}\leq 2^{-2^{\beta n}},
\end{align*}
where (a) follows from Lemma \ref{lemPOVMFWH}.

The above upper bound was calculated on average over a random choice of the frozen section mappings. Therefore, there is at least one choice of the frozen section mappings for which the upper bound of the probability of error still holds.

It remains to study the rate of the constructed polar code. The rate at which we are communicating is $\displaystyle R=\frac{1}{2^n}\sum_{s\in \{-,+\}^n}\log |G/H_s|=\frac{1}{2^n}\sum_{s\in E_n}\log |G/H_s|$. On the other hand, we have $\big|I(W^s)-\log|G/H_s|\big|<\frac{\delta}{2}$ for all $s\in E_n$. Now since we have $\displaystyle\sum_{s\in\{-,+\}^n} I(W^s)=2^nI(W)$, we conclude that:
\begin{align*}
I(W)&=\frac{1}{2^n}\sum_{s\in \{-,+\}^n} I(W^s)= \frac{1}{2^n}\sum_{s\in E_n}I(W^s) + \frac{1}{2^n}\sum_{s\in E_n^c}I(W^s)\\
&<\frac{1}{2^n}\sum_{s\in E_n}\Big(\log|G/H_s|+\frac{\delta}{2}\Big) + \frac{1}{2^n}|E_n^c|\log q \\
&< R + \frac{1}{2^n}|E_n|\frac{\delta}{2} + \frac{\delta}{2\log q}\log q\\
&\leq R+\frac{\delta}{2}+\frac{\delta}{2}=R+\delta,
\end{align*}
where $E_n^c=\{-,+\}^n\setminus E_n$.

To this end we have proven the following theorem which is the main result of this paper:

\begin{mythe}
Let $W:x\in G\longrightarrow \rho_x\in\mathcal{DM}(k)$ be an arbitrary cq-channel, where the input alphabet is endowed with an Abelian group operation. For every $\delta>0$ and every $0<\beta<\frac{1}{2}$, there exists a polar code of blocklength $N=2^n$ based on the group operation which has a rate $R>I(W)-\delta$ and an encoder algorithm of complexity $O(N\log N)$. Moreover, the probability of error of the quantum successive cancellation decoder is less than $2^{-N^\beta}$.
\end{mythe}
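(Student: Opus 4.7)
The plan is to combine Theorem \ref{thePolGF} (fast polarization) with the explicit construction described in the preceding subsections, and then to carry out an averaging argument over a random choice of section mappings. First, I pick $\beta'\in(\beta,1/2)$ and use Theorem \ref{thePolGF} to take $n$ large enough that the ``good'' set $E_n$ has $|E_n^c|/2^n<\delta/(2\log q)$, while simultaneously $2\sqrt{2^n}\sqrt{(q-1)\,2^n\,2^{-2^{\beta' n}}}\leq 2^{-2^{\beta n}}$. For each $s\in E_n$, I assign the subgroup $H_s$ guaranteed by Theorem \ref{thePolGF} together with a section mapping $f_s\colon G/H_s\to G$; for $s\notin E_n$, I set $H_s=G$ (no information transmitted on that coordinate). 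The information symbols $\tilde U^s\in G/H_s$ are drawn independently and uniformly, lifted to $U^s=f_s(\tilde U^s)$, and fed through Ar{\i}kan's recursive transform $\mathcal{E}$, which gives an encoder of complexity $O(N\log N)$ by construction.

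Next, I describe the quantum successive cancellation decoder. At stage $s$, having already estimated $\hat u^{\mathcal L_s}$, the decoder applies the POVM $\{\Pi_{(s),\hat u^{\mathcal L_s},u^s}^B\}$ obtained from Lemma \ref{lemPOVMFWH} after passing through the lifting $f_s$. The sequential application of these $N=2^n$ POVMs is equivalent to a single ``nested square-root'' POVM $\{\Lambda_u^B\}$. A crucial point is that these $\Pi$'s are general positive operators, not projections, which is exactly why Sen's inequality \eqref{eqNonComUnionBoundSen} does not apply and the slightly looser Lemma \ref{lemNonComUnionBound} is needed.

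The main analytic step, and the hard part, is to control the expected error of this multi-stage POVM. I average the error probability over a uniformly random choice $F=(F_s)$ of section mappings; this makes the induced joint distribution of $U=(U^s)$ uniform on $G^{S_n}$ and decouples the stages. Applying Lemma \ref{lemNonComUnionBound} inside the expectation, then using concavity of the square root (Jensen) to pull the expectation inside, reduces the overall error to
\[
\overline P_e \;\leq\; 2\sqrt{N}\,\sqrt{\sum_{s\in S_n}\mathbb E\bigl[1-\Tr\bigl(\Pi^B_{(s),U^{\mathcal L_s},U^s}\,\rho_U^B\bigr)\bigr]}.
\]
Using $\Pi^B_{(s),U^{\mathcal L_s},U^s}=\Pi^B_{(s),U^{\mathcal L_s},\tilde U^s}$ whenever $U^s=f_s(\tilde U^s)$, and then marginalising over $U^s$ and $U^{\mathcal U_s}$ conditioned on $(\tilde U^s,U^{\mathcal L_s})$, each summand collapses into a quantity of the form $1-\frac{1}{|G/H_s|}\sum_{\tilde u^s}\Tr\bigl(\Pi^{B,U^{\mathcal L_s}}_{(s),\tilde u^s}\,\rho^{B,U^{\mathcal L_s}}_{s,\tilde u^s}\bigr)$. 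By Lemma \ref{lemPOVMFWH}, this is bounded by $(|G/H_s|-1)F(W^s[H_s])$, which is at most $(q-1)2^{-2^{\beta' n}}$ for $s\in E_n$ and vanishes for $s\notin E_n$ (since then $H_s=G$). Substituting and using the choice of $n$ yields $\overline P_e\leq 2^{-2^{\beta n}}=2^{-N^{\beta}}$, so some deterministic choice of section mappings achieves the same bound.

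Finally, the rate analysis is straightforward from the conservation identity $\sum_{s}I(W^s)=2^n I(W)$ together with $\bigl|I(W^s)-\log|G/H_s|\bigr|<\delta/2$ for $s\in E_n$: splitting the sum over $E_n$ and $E_n^c$ and using $|E_n^c|/2^n<\delta/(2\log q)$ gives $I(W)<R+\delta/2+\delta/2=R+\delta$, where $R=\frac{1}{2^n}\sum_s\log|G/H_s|$ is the achieved rate. Combined with the error bound above, this proves the theorem.
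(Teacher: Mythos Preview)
Your proposal is correct and follows essentially the same approach as the paper: the same choice of $\beta'$ and $n$, the same assignment of $H_s$ (with $H_s=G$ on $E_n^c$), the same averaging over random section mappings to make $U$ uniform, the same application of Lemma~\ref{lemNonComUnionBound} followed by Jensen, the same per-stage reduction via Lemma~\ref{lemPOVMFWH}, the same derandomization, and the same rate computation by splitting over $E_n$ and $E_n^c$. There is nothing substantive to add.
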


\section{Polar codes for arbitrary classical-quantum MACs}

An $m$-user classical-quantum multiples access channel (cq-MAC) $$W:(x_1,\ldots,x_m)\in G_1\times\ldots\times G_m\longrightarrow \rho_{x_1,\ldots,x_m}\in\mathcal{DM}(k)$$ takes classical inputs $\{x_i\in G_i:\;1\leq i\leq m\}$ from the $m$ users and produces a quantum output $\rho_{x_1,\ldots,x_m} \in \mathcal{DM}(k)$. We assume that the input alphabets $G_i$ are finite but their sizes $q_i=|G_i|$ can be arbitrary.

The achievable rate-region is described by a collection of inequalities \cite{cqMAC}:

$$\forall S\subset\{1,\ldots,m\},\; 0\leq R_S\leq I(X_S;B|X_{S^c})_{\rho}=I(X_S;BX_{S^c})_{\rho},$$
where $\displaystyle R_S=\sum_{i\in S}R_i$, $X_S=(X_i)_{i\in S}$, $S^c=\{1,\ldots,m\}\setminus S$, and the mutual information $I(X_S;Y|X_{S^c})_{\rho}$ is computed according to following state:
$$\rho^{X_1,\ldots,X_m,B}=\sum_{\substack{x_1\in G_1,\\\vdots\\x_m\in G_m}}\left(\prod_{i=1}^m\mathbb{P}_{X_i}(x_i)\right)\left(\bigotimes_{1\leq i\leq m}|x_i\rangle\langle x_i|^{X_i}\right)\otimes \rho^B_{x_1,\ldots,x_m},$$
for some independent probability distributions $\{\mathbb{P}_{X_i}(x_i):\;x_i\in G_i\}$ on $G_i$ for $1\leq i\leq m$.

We are interested in the case where the probability distributions of $X_1,\ldots,X_m$ are uniform in $G_1,\ldots,G_m$ respectively. We define the symmetric capacity region $\mathcal{J}(W)$ of $W$ as
$$\mathcal{J}(W)=\Big\{(R_1,\ldots,R_m)\in\mathbb{R}^m:\; 0\leq R_S\leq I[S](W),\;\forall S\subset\{1,\ldots,m\}\Big\},$$
where $I[S](W):=I(X_S;BX_{S^c})_{\rho}$ is computed according to
$$\rho^{X_1,\ldots,X_m,B}=\frac{1}{q_1\cdots q_m}\sum_{\substack{x_1\in G_1,\\\vdots\\x_m\in G_m}}\left(\bigotimes_{1\leq i\leq m}|x_i\rangle\langle x_i|^{X_i}\right)\otimes \rho^B_{x_1,\ldots,x_m}.$$
The set $\big\{(R_1,\ldots,R_m)\in \mathcal{J}(W):\; R_1+\ldots+R_m=I(W)\big\}$ is called the dominant face of $\mathcal{J}(W)$, where $I(W):=I[\{1,\ldots,m\}](W)=I(X_1\ldots X_m;B)_{\rho}$ is the symmetric sum-capacity of $W$.

For every $1\leq i\leq m$, we fix an Abelian group operation on $G_i$ and we denote it additively. It is possible to construct cq-MAC codes which achieve the rates in the region $\mathcal{J}(W)$ using one of the following two methods:
\begin{itemize}
\item By using the monotone chain rule method of Ar{\i}kan \cite{Monotone} and applying a polarization transformation using the Abelian group operation for each user.
\item By using the rate-splitting method described in \cite{SasogluTelYeh} and applying a polarization transformation using the Abelian group operation for each user.
\end{itemize}
By using the cq-channel polarization results of this paper and a similar analysis as in \cite{Monotone}, \cite{SasogluTelYeh} and \cite{PolarcqMAC}, we can show that both methods yield cq-MAC codes that achieve the whole region $\mathcal{J}(W)$ for which the probability of error of the quantum successive cancellation decoder decays faster than $2^{-N^\beta}$ for any $\beta<\frac{1}{2}$, where $N$ is the blocklength of the code.

However, one may hesitate to call the codes obtained using these methods as cq-MAC polar codes because they are not based on the polarization of cq-MACs. These methods are hybrid schemes which combine cq-channel polarization (not cq-MAC polarization) with other techniques. Moreover, the code construction for these methods is more complicated than cq-MAC polar codes. In the rest of this section, we describe how cq-MAC polar codes are constructed.

We define the cq-MACs $W^-$ and $W^+$ as follows:
$$W^-:(u_{1,1},\ldots,u_{1,m})\in G_1\times\ldots\times G_m\longrightarrow \rho_{u_{1,1},\ldots,u_{1,m}}^-\in\mathcal{DM}(k^2),$$
$$W^+:(u_{2,1},\ldots,u_{2,m})\in G_1\times\ldots\times G_m\longrightarrow \rho_{u_{2,1},\ldots,u_{2,m}}^+\in\mathcal{DM}(k^2q_1\cdots q_m),$$
where
$$\rho_{u_{1,1},\ldots,u_{1,m}}^-=\frac{1}{q_1\cdots q_m} \sum_{\substack{u_{2,1}\in G_1,\\\vdots\\u_{2,m}\in G_m}} \rho_{u_{1,1}+u_{2,1},\ldots,u_{1,m}+u_{2,m}}\otimes \rho_{u_{2,1},\ldots,u_{2,m}},$$
and
$$\rho_{u_{2,1},\ldots,u_{2,m}}^+=\frac{1}{q_1\cdots q_m} \sum_{\substack{u_{1,1}\in G_1,\\\vdots\\u_{1,m}\in G_m}} \rho_{u_{1,1}+u_{2,1},\ldots,u_{1,m}+u_{2,m}}\otimes \rho_{u_{2,1},\ldots,u_{2,m}}\otimes\left(\bigotimes_{1\leq i\leq m}|u_{1,i}\rangle\langle u_{1,i}|\right).$$

Note that the cq-MAC $W$ can be seen as a cq-channel with input in $G:=G_1\times\ldots\times G_m$. Moreover, $W^-$ and $W^+$ when seen as cq-channels can be obtained from the cq-channel $W$ by applying the polarization transformation which uses the Abelian group operation of the product group $G$. Therefore, the cq-channel polarization results of the previous sections can be applied to $W$. In particular, we have:
\begin{itemize}
\item $I(W^-)+I(W^+)=2I(W)$. This shows that the symmetric sum-capacity is conserved by the polarization transformation and that for every $n>0$, the region $\displaystyle\frac{1}{2^n}\sum_{s\in\{-,+\}^n}\mathcal{J}(W^s)$ contains points on the dominant face of $\mathcal{J}(W)$.
\item For every subgroup $H$ of $G$, we have $I(W^-[H])+I(W^-[H])\geq 2I(W[H])$ by Lemma \ref{lemSubMart}. Therefore, for every $S\subset\{1,\ldots,m\}$, we have
\begin{equation}
\label{eqSideRate}
\begin{aligned}
I[S](W^-)+I[S](W^+)&=\left(I(W^-)-I(W^-[G_{S}])\right)+\left(I(W^+)-I(W^+[G_{S}])\right)\\
&\leq 2I(W) - 2I(W[G_{S}])=2I[S](W),
\end{aligned}
\end{equation}
where,
$$G_{S}=\left(\prod_{i\in S} G_i\right)\times\left(\prod_{j\notin S} \{0\}\right).$$
Equation \eqref{eqSideRate} shows that although the symmetric-sum capacity is conserved by polarization, the highest achievable individual rates can decrease. In other words, polarization can induce a loss in the symmetric capacity region.
\item Theorem \ref{thePolGF} implies that
\begin{align*}
\lim_{n\to\infty} \frac{1}{2^n}\Big|\big\{ & s\in\{-,+\}^n: \exists H_s\;\text{subgroup of}\;G,\\
& \big|I(W^s)-\log|G/H_s|\big|<\delta, \big|I(W^s[H_s])-\log|G/H_s|\big|<\delta, F(W^s[H_s])<2^{-2^{\beta n}} \big\}\Big| = 1.
\end{align*}
In other words, as the number of polarization steps becomes large, the synthetic cq-MACs become close to deterministic homomorphism channels which project the input $(U^s_1,\ldots,U^s_m)$ onto some quotient group $G/H_s$ of the product group $G$.
\end{itemize}

One can employ the properties of subgroups of product groups to show that the polarized cq-MAC $W^s$ is an ``easy" cq-MAC in a sense similar to the way easy MACs were defined in \cite{RajErgII}. This allows the construction of cq-MAC polar codes for which the probability of error of the quantum successive cancellation decoder decays faster than $2^{-N^{\beta}}$ for any $0<\beta<\frac{1}{2}$, where $N=2^n$ is the blocklength of the code. The region of rates that are achievable by cq-MAC polar codes is given by:
\begin{align*}
\mathcal{J}^{\text{pol}}(W)&=\bigcap_{n\geq 0} \left(\frac{1}{2^n}\sum_{s\in\{-,+\}^n} \mathcal{J}(W^s)\right)\\
&=\Big\{(R_1,\ldots,R_m)\in\mathbb{R}^m:\; R_S\leq I^{\text{pol}}[S](W),\;\forall S\subset\{1,\ldots,m\}\Big\},
\end{align*}
where
$$I^{\text{pol}}[S](W)=\lim_{n\to\infty}\frac{1}{2^n}\sum_{s\in\{-,+\}^n}I[S](W^s).$$

The cq-MAC polar codes can be compared to the two cq-MAC coding methods that were described at the beginning of this section:
\begin{itemize}
\item The cq-MAC polar codes has the advantage that the code construction is simpler.
\item The other two coding methods have the advantage that they always achieve the whole symmetric capacity region $\mathcal{J}(W)$, which may not be the case for cq-MAC polar codes in general.
\end{itemize}

\section{Conclusion}

We have shown that using a polarization transformation that is based on an Abelian group operation on the input alphabet yields multi-level polarization for arbitrary classical-quantum channels in a similar way as in the case of classical channels. This result made it possible to construct polar codes for arbitrary cq-channels and arbitrary cq-MACs.

One weakness of the results presented here is that the proposed quantum successive cancellation decoder does not seem to have an efficient implementation. This was also the case for the polar codes that were constructed for binary-input cq-channels in \cite{WildeGuhaCQ}. Finding an efficient decoder for the polar codes remains an open problem.

If we define cq-polarizing binary operations as those which can polarize an arbitrary cq-channel to ``easy" cq-channels in a sense similar to the definition of classical polarizing binary operations \cite{RajErgII}, then this paper has shown that Abelian group operations are cq-polarizing. Therefore, being an Abelian group operation is a sufficient condition to be cq-polarizing. On the other hand, from the results of \cite{RajErgII} we can deduce that being uniformity-preserving and having a right-inverse that is strongly ergodic are necessary conditions because classical channels are a particular case of cq-channels. Finding a necessary and sufficient condition for a binary operation to be cq-polarizing remains an open problem. Trying to prove a quantum version of the results in \cite{RajErgII} by using a similar approach may not be successful because the proof of the sufficient condition in \cite{RajErgII} relies heavily on the entropy of the input conditioned on a particular output symbol, and this does not have an analogue in the case of cq-channels.

We have shown that cq-MAC polarization can induce a loss in the symmetric capacity region. A necessary and sufficient condition for $\mathcal{J}^{\text{pol}}(W)=\mathcal{J}(W)$ in the case of classical MACs was given in \cite{RajTelFourier}. Generalizing the results of \cite{RajTelFourier} to cq-MACs is an open problem. We note that the condition in \cite{RajTelFourier} was given in terms of the Fourier transform of the probability distribution of one input conditioned on the output and on the other input. Since this conditional probability does not have an analogue in the case of cq-MACs, generalizing the results of \cite{RajTelFourier} to cq-MACs might be challenging and a completely different approach might be needed.

\appendices

\section{Proof of Proposition \ref{propFid}}
\label{appFid}

In \cite[Prop. 1]{HolevoRel}, it was shown that for every $0\leq s\leq 1$, we have:
$$I(W)\geq -\frac{1}{s}\log\Tr\left[\left(\sum_{x\in G}\mathbb{P}_X(x)\cdot\rho_x^{\frac{1}{1+s}}\right)^{1+s}\right].$$

By taking $s=1$, we obtain:

\begin{align*}
I(W)&\geq -\log\Tr\left[\left(\sum_{x\in G}\frac{1}{q}\cdot\sqrt{\rho_x}\right)^2\right]= -\log\Tr\left(\frac{1}{q^2}\sum_{x,x'\in G}\sqrt{\rho_x}\sqrt{\rho_{x'}}\right)\\
&=-\log\Tr\left(\frac{1}{q^2}\sum_{x\in G}\rho_x + \frac{1}{q^2}\sum_{\substack{x,x'\in G,\\x\neq x'}}\sqrt{\rho_x}\sqrt{\rho_{x'}}\right)=-\log\left(\frac{1}{q} + \frac{1}{q^2}\sum_{\substack{x,x'\in G,\\x\neq x'}}\Tr(\sqrt{\rho_x}\sqrt{\rho_{x'}})\right)\\
&\stackrel{(a)}{\geq} -\log\left(\frac{1}{q} + \frac{1}{q^2}\sum_{\substack{x,x'\in G,\\x\neq x'}}F(\rho_x,\rho_{x'})\right)= \log \frac{q}{1+(q-1)F(W)},
\end{align*}
where (a) follows from the fact that $\Tr(\sqrt{\rho_x}\sqrt{\rho_{x'}})\leq \Tr(|\sqrt{\rho_x}\sqrt{\rho_{x'}}|)=\|\sqrt{\rho_x}\sqrt{\rho_{x'}}\|_1=F(\rho_x,\rho_{x'})$.

\vspace*{3mm}

In order to prove the second inequality, define the channel $\tilde{W}:x\in G\longrightarrow \tilde{\rho}_x\in\mathcal{DM}(k\cdot q^2)$ as follows:

$$\tilde{\rho}_x^{BS_1S_2}=\rho_x^B\otimes \left(\frac{1}{2(q-1)}\sum_{\substack{x'\in G,\\ x'\neq x}}\Big(|x\rangle\langle x|^{S_1}\otimes|x'\rangle\langle x'|^{S_2}+|x'\rangle\langle x'|^{S_1}\otimes|x\rangle\langle x|^{S_2}\Big)\right).$$

The two additional systems $S_1$ and $S_2$ can be interpreted as additional side information about the input which is provided to the receiver. Note that if $S_1S_2$ are traced out, we recover the channel $W$.

Let $\displaystyle \tilde{\rho}^{XBS_1S_2}=\frac{1}{q}\sum_{x\in G}|x\rangle\langle x|^X\otimes \tilde{\rho}_x^{BS_1S_2}$. We have:

\begin{align*}
I(W) &= I(X;B)_{\tilde{\rho}}\leq I(X;BS_1S_2)_{\tilde{\rho}} =  I(X;S_1S_2)_{\tilde{\rho}} + I(X;B|S_1S_2)_{\tilde{\rho}} \\
&= H(X)- H(X|S_1S_2) + I(X;B|S_1S_2)_{\tilde{\rho}}\\
&\stackrel{(a)}{=}\log(q)-\log(2) + \sum_{s_1,s_2\in G} I(X;B|S_1=s_1,S_2=s_2)\mathbb{P}_{S_1,S_2}(s_1,s_2)\\
&\stackrel{(b)}{=}\log(q/2) + \frac{1}{q(q-1)} \sum_{\substack{s_1,s_2\in G,\\s_1\neq s_2}} I(X;B|S_1=s_1,S_2=s_2)\\
&\stackrel{(c)}{=}\log(q/2) + \frac{1}{q(q-1)} \sum_{\substack{s_1,s_2\in G,\\s_1\neq s_2}} I(W_{s_1,s_2}),
\end{align*}
where (a) follows from the fact that given $\{S_1=s_1,S_2=s_2\}$, the conditional probability distribution of $X$ is uniform in $\{s_1,s_2\}$. (b) follows from the fact that the distribution of $(S_1,S_2)$ is uniform in the set $$\{(s_1,s_2)\in G\times G:\; s_1\neq s_2\}.$$  (c) is true because conditioning $\tilde{\rho}^{XBS_1S_2}$ on $\{S_1=s_2,S_2=s_2\}$ and then tracing out $S_1S_2$ gives the state $\displaystyle \frac{1}{2}|s_1\rangle\langle s_1|^X\otimes \rho_{s_1}^B+\frac{1}{2}|s_2\rangle\langle s_2|^X\otimes \rho_{s_2}^B$ which just represents $W_{s_1,s_2}$ with uniform input, where $W_{s_1,s_2}:x\in\{0,1\}\longrightarrow \rho_{x,s_1,s_2}\in\mathcal{DM}(k)$ is the binary-input cq-channel defined as $\rho_{0,s_1,s_2}=\rho_{s_1}$ and $\rho_{1,s_1,s_2}=\rho_{s_2}$. In other words, the channel $W_{s_1,s_2}$ is obtained from $W$ by restricting the input to $\{s_1,s_2\}$.

Now since $W_{s_1,s_2}$ is a binary-input cq-channel, we have from \cite[Prop. 1]{WildeGuhaCQ} that $$I(W_{s_1,s_2})\leq (\log 2)\sqrt{1-F(W_{s_1,s_2})^2}=(\log 2)\sqrt{1-F(\rho_{s_1},\rho_{s_2})^2}.$$ Therefore,
\begin{align*}
I(W) \leq \log(q/2) + \frac{1}{q(q-1)} \sum_{\substack{s_1,s_2\in G,\\s_1\neq s_2}} (\log 2)\sqrt{1-F(\rho_{s_1},\rho_{s_2})^2}\leq \log(q/2) + (\log 2)\sqrt{1-F(W)^2},
\end{align*}
where the last inequality follows from the concavity of the function $t\to \sqrt{1-t^2}$.

\vspace*{3mm}

It remains to show the last inequality of Proposition \ref{propFid}. Define the following:
\begin{itemize}
\item $\displaystyle\rho^{XB}=\frac{1}{q}\sum_{x\in G}|x\rangle\langle x|^X\otimes \rho_x^B$.
\item $\displaystyle\Lambda^{XB} = \sum_{x\in G} |x\rangle\langle x|^X\otimes E_x^B $, where $\{E_x^B:\; x\in G\}$ is an optimal POVM that decodes $W$ with the lowest probability of error.
\end{itemize}

We have:
\begin{itemize}
\item $\displaystyle\rho^X={\textstyle\Tr_B}(\rho^{XB})=\frac{1}{q}\sum_{x\in G}|x\rangle\langle x|^X$.
\item $\displaystyle\rho^B={\textstyle\Tr_X}(\rho^{XB})=\frac{1}{q}\sum_{x\in G} \rho_x^B$.
\end{itemize}

From \cite[Sec 9.2.3]{Nielsen}, we have
\begin{equation}
D\left(\rho^{XB},\rho^X\otimes\rho^B\right)^2 + F\left(\rho^{XB},\rho^X\otimes\rho^B\right)^2\leq 1,
\label{lalaeqeqvhsbdv}
\end{equation}
where $ D(\rho',\rho'')=\frac{1}{2}\|\rho'-\rho''\|_1$ is the trace distance between $\rho'$ and $\rho''$. We have:

\begin{equation}
\label{eqFidelRhoXBRhoXRhoB}
\begin{aligned}
F\left(\rho^{XB},\rho^X\otimes\rho^B\right)&=\left\|\sqrt{\rho^{XB}}\sqrt{\rho^X\otimes\rho^B}\right\|_1=\left\|\frac{1}{q}\left(\sum_{x\in G}|x\rangle\langle x|^X\otimes \sqrt{\rho_x^B}\right)\cdot\left(\sum_{x\in G}|x\rangle\langle x|^X\otimes \sqrt{\rho^B}\right) \right\|_1\\
&=\frac{1}{q}\left\|\sum_{x\in G}|x\rangle\langle x|^X\otimes \sqrt{\rho_x^B}\sqrt{\rho^B}\right\|_1=\frac{1}{q}\sum_{x\in G}\left\|\sqrt{\rho_x^B}\sqrt{\rho^B}\right\|_1=\frac{1}{q} \sum_{x\in G} F\left(\rho_x^B,\rho^B\right)\\
&= \frac{1}{q} \sum_{x\in G} F\left(\rho_x^B,\frac{1}{q}\sum_{x'\in G}\rho_{x'}^B\right)\stackrel{(a)}{\geq } \frac{1}{q^2} \sum_{x,x'\in G} F\left(\rho_x^B,\rho_{x'}^B\right)=\frac{1}{q^2}\left(q+ \sum_{\substack{x,x'\in G,\\x\neq x'}} F\left(\rho_x^B,\rho_{x'}^B\right)\right)\\
&=\frac{1}{q}\left(1+(q-1)F(W)\right),
\end{aligned}
\end{equation}

where (a) follows from the concavity of the fidelity.

Now let $\mathbb{P}_c(W)=1-\mathbb{P}_e(W)$ be the probability of correct guess of the optimal decoder $\{E_x^B:\; x\in G\}$. We have:

$$\mathbb{P}_c(W)=\frac{1}{q}\sum_{x\in G}\Tr\left(E_x^B\rho_x^B\right)=\frac{1}{q}\sum_{x\in G}\Tr\left(|x\rangle\langle x|^X\otimes E_x^B\rho_x^B\right)= \Tr\left(\Lambda^{XB}\rho^{XB}\right).$$

Therefore,

\begin{align*}
\Tr\left(\Lambda^{XB}\left(\rho^{XB} - \rho^X\otimes\rho^B\right)\right) &= \mathbb{P}_c(W) - \Tr\left(\frac{1}{q}\sum_{x\in G}|x\rangle\langle x|^X\otimes E_x^B\rho^B\right)\\
&=\mathbb{P}_c(W)-\frac{1}{q}\sum_{x\in G}\Tr(E_x^B\rho^B)=\mathbb{P}_c(W)-\frac{1}{q}\stackrel{(a)}{\geq} 0,
\end{align*}
where (a) follows from the fact that a random guess gives a probability of correct guess $\frac{1}{q}$.

On the other hand, we know that $\displaystyle D(\rho^{XB},\rho^X\otimes\rho^B)=\max_{0\leq \Gamma\leq I}\Tr(\Gamma(\rho^{XB}-\rho^X\otimes\rho^B))$. Therefore,
\begin{equation}
\label{eqTraceDistanceLabaYa}
\begin{aligned}
0\leq \mathbb{P}_c(W)-\frac{1}{q} = \Tr\left(\Lambda^{XB}\left(\rho^{XB} - \rho^X\otimes\rho^B\right)\right)&\stackrel{(b)}{\leq} \max_{0\leq \Gamma\leq I}\Tr(\Gamma(\rho^{XB}-\rho^X\otimes\rho^B))\\
&=D\left(\rho^{XB},\rho^X\otimes\rho^B\right),
\end{aligned}
\end{equation}
where (b) follows from the fact that $0\leq \Lambda^{XB}\leq I$.

By combining \eqref{lalaeqeqvhsbdv}, \eqref{eqFidelRhoXBRhoXRhoB} and \eqref{eqTraceDistanceLabaYa}, we get:
$$\left(\mathbb{P}_c(W)-\frac{1}{q}\right)^2 + \frac{1}{q^2}\left(1+(q-1)F(W)\right)^2\leq 1.$$
Thus,
$$\mathbb{P}_c(W)\leq \frac{1}{q}+\sqrt{1-  \frac{1}{q^2}\left(1+(q-1)F(W)\right)^2}=\frac{1+\sqrt{q^2 - \left(1+(q-1)F(W)\right)^2}}{q},$$
which implies that
$$H(X|B)\stackrel{(a)}{\geq} -\log \mathbb{P}_c(W)\geq\log q - \log \left(1+\sqrt{q^2 - \left(1+(q-1)F(W)\right)^2}\right),$$
where (a) follows from \cite[Prop 4.3]{Tomamichel} and the operational interpretation of the conditional min-entropy of a cq-state in terms of the guessing probability \cite{Konig}.
Therefore,
$$I(W)=I(X;B)=H(X)-H(X|B)=\log q - H(X|B) \leq \log \left(1+\sqrt{q^2 - \left(1+(q-1)F(W)\right)^2}\right).$$

\section{Proof of Lemma \ref{lemNonComUnionBound}}
\label{appNonComUnionBound}

Let $\Pi_{r+1}=I$. We have:
\begin{align*}
1-&\Tr\left(\sqrt{\Pi_r}\ldots\sqrt{\Pi_1}\rho \sqrt{\Pi_1}\ldots\sqrt{\Pi_r}\right)\\
&=\Tr\left(\sqrt{\Pi_{r+1}}\rho\sqrt{\Pi_{r+1}}\right)-\Tr\left(\sqrt{\Pi_{r+1}}\ldots\sqrt{\Pi_1}\rho\sqrt{\Pi_1}\ldots\sqrt{\Pi_{r+1}}\right)\\
&=\sum_{i=1}^r \Tr\left(\sqrt{\Pi_{r+1}}\ldots\sqrt{\Pi_{i+1}} \rho\sqrt{\Pi_{i+1}}\ldots\sqrt{\Pi_{r+1}}\right)-\Tr\left(\sqrt{\Pi_{r+1}}\ldots\sqrt{\Pi_i} \rho\sqrt{\Pi_i}\ldots\sqrt{\Pi_{r+1}}\right)\\
&=\sum_{i=1}^r \Tr\left(\sqrt{\Pi_{r+1}}\ldots\sqrt{\Pi_{i+1}} \left(\rho-\sqrt{\Pi_i}\rho\sqrt{\Pi_i}\right)\sqrt{\Pi_{i+1}}\ldots\sqrt{\Pi_{r+1}}\right)\\
&\stackrel{(a)}{\leq}\sum_{i=1}^r \Tr\Bigg(\sqrt{\Pi_{r+1}}\ldots\sqrt{\Pi_{i+1}} \cdot\left|\rho-\sqrt{\Pi_i}\rho\sqrt{\Pi_i}\right|\cdot\sqrt{\Pi_{i+1}}\ldots\sqrt{\Pi_{r+1}}\Bigg)\\
&\stackrel{(b)}{\leq} \sum_{i=1}^r \Tr\left|\rho-\sqrt{\Pi_i}\rho\sqrt{\Pi_i}\right|=\sum_{i=1}^r\left\|\rho-\sqrt{\Pi_i}\rho\sqrt{\Pi_i}\right\|_1\stackrel{(c)}{\leq} 2\sum_{i=1}^r \sqrt{\Tr\left(\rho-\sqrt{\Pi_i}\rho\sqrt{\Pi_i}\right)}\\
&=2r\frac{1}{r} \sum_{i=1}^r \sqrt{1-\Tr(\Pi_i\rho)}\stackrel{(d)}{\leq}2r\sqrt{\frac{1}{r} \sum_{i=1}^r\left(1-\Tr(\Pi_i\rho)\right)}=2\sqrt{r}\sqrt{\sum_{i=1}^r\left(1-\Tr(\Pi_i\rho)\right)},
\end{align*}

where (a) follows from the fact that $\sqrt{\Pi_j}\geq 0$ for every $i+1\leq j\leq r+1$, $\rho-\sqrt{\Pi_i}\rho\sqrt{\Pi_i}\leq \left|\rho-\sqrt{\Pi_i}\rho\sqrt{\Pi_i}\right|$ and the fact that if $A\leq B$ and $C\geq 0$, then $\Tr(AC)\leq \Tr(BC)$. (b) follows from the fact that $0\leq\sqrt{\Pi_j}\leq I$ for every $i+1\leq j\leq r+1$, $\left|\rho-\sqrt{\Pi_i}\rho\sqrt{\Pi_i}\right|\geq 0$, and the fact that if $A,B$ are two positive operators with $B\leq I$, then $\Tr(AB)\leq \Tr(AB)+\Tr(A(I-B))=\Tr(A)$. (c) follows from the fact that $\left\|\rho-\sqrt{X}\rho\sqrt{X}\right\|_1\leq 2\sqrt{\Tr\left(\rho-\sqrt{X}\rho\sqrt{X}\right)}$ for every positive operator $X\leq I$ (see \cite{OgawaNagaoka}). (d) follows from the concavity of the square root.

\section{Proof of Proposition \ref{propFPolar}}
\label{appFPolar}
%

\begin{mylem}
\label{lemTrSqrt}
Let $A$ and $B$ be two positive semi-definite $k\times k$ matrices. We have\footnote{The proof of Lemma \ref{lemTrSqrt} is due to Martin Argerami who thankfully answered our question on Math Stack Exchange. In an earlier version of this paper, we used a weaker inequality $\displaystyle\Tr\sqrt{\sum_{i=1}^n A_i}\leq n\sum_{i=1}^n\Tr\sqrt{A_i}$ which we proved using Weyl's inequality \cite{Weyl} that relates the eigenvalues of $A+B$ with those of $A$ and $B$.}:
$$\Tr\sqrt{A+B}\leq \Tr\sqrt{A}+\Tr\sqrt{B}.$$
\end{mylem}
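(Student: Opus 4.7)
The plan is to realize $A+B$ as $ZZ^*$ for a suitable rectangular matrix $Z$, so that the desired inequality reduces to the triangle inequality for the trace (nuclear) norm. Set $X := \sqrt{A}$ and $Y := \sqrt{B}$, the unique positive semidefinite square roots, so that $A+B = X^2 + Y^2$. Writing this product out, $A+B = ZZ^*$, where $Z$ is the $k \times 2k$ block row matrix $Z := \begin{pmatrix} X & Y \end{pmatrix}$.

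The key identity is that $\Tr\sqrt{A+B}$ equals the trace norm $\|Z\|_1$. Indeed, by the singular value decomposition the nonzero eigenvalues of $ZZ^*$ coincide with those of $Z^*Z$, both being equal to the squares of the nonzero singular values of $Z$. Hence
$$\Tr\sqrt{A+B} = \Tr\sqrt{ZZ^*} = \sum_i \sigma_i(Z) = \|Z\|_1.$$

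To finish, decompose $Z = \begin{pmatrix} X & 0 \end{pmatrix} + \begin{pmatrix} 0 & Y \end{pmatrix}$ and apply the triangle inequality for the trace norm:
$$\|Z\|_1 \leq \left\|\begin{pmatrix} X & 0 \end{pmatrix}\right\|_1 + \left\|\begin{pmatrix} 0 & Y \end{pmatrix}\right\|_1 = \|X\|_1 + \|Y\|_1,$$
where the last equality uses that padding with a zero block leaves the singular values unchanged. Since $X \geq 0$, its singular values are its eigenvalues, so $\|X\|_1 = \Tr X = \Tr\sqrt{A}$, and similarly $\|Y\|_1 = \Tr\sqrt{B}$. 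Combining the pieces yields $\Tr\sqrt{A+B} \leq \Tr\sqrt{A} + \Tr\sqrt{B}$. The only non-routine ingredient is the identification $\Tr\sqrt{ZZ^*} = \|Z\|_1$ for a rectangular $Z$, but this is a one-line consequence of the SVD, so there is no real obstacle to overcome.
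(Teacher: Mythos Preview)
Your proof is correct and takes a genuinely different route from the paper's argument. The paper first assumes $A$ and $B$ are invertible, uses operator monotonicity of the inverse and of the square root to obtain $(A+B)^{-1/2}\leq A^{-1/2}$ and $(A+B)^{-1/2}\leq B^{-1/2}$, then writes $\Tr\sqrt{A+B}=\Tr\big(A\,(A+B)^{-1/2}\big)+\Tr\big(B\,(A+B)^{-1/2}\big)$ and bounds each term; the general case is obtained by the limit $A+\epsilon I$, $B+\epsilon I$ as $\epsilon\to 0$. Your argument instead factors $A+B=ZZ^*$ with $Z=(\sqrt{A}\;\sqrt{B})$ and reduces the inequality to the triangle inequality for the Schatten $1$-norm. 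Your approach has the advantage of handling the singular case directly, with no limiting step, and of making the connection to the trace-norm triangle inequality explicit; the paper's approach, on the other hand, stays entirely within the world of square $k\times k$ matrices and uses only operator monotonicity facts, avoiding rectangular matrices and the SVD.
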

\begin{proof}
Let us first assume that $A$ and $B$ are invertible. Since the mapping $C\rightarrow C^{-1}$ is monotonically decreasing \cite{Bhatia}, we have $(A+B)^{-1}\leq A^{-1}$. Moreover, since the square root is operator monotone \cite{Bhatia}, we have $(A+B)^{-\frac{1}{2}}\leq A^{-\frac{1}{2}}$. Similarly, $(A+B)^{-\frac{1}{2}}\leq B^{-\frac{1}{2}}$. Therefore,
\begin{align*}
\Tr\sqrt{A+B}&=\Tr\left((A+B)\cdot(A+B)^{-\frac{1}{2}}\right)=\Tr\left(A\cdot(A+B)^{-\frac{1}{2}}\right)+\Tr\left(B\cdot(A+B)^{-\frac{1}{2}}\right)\\
&\stackrel{(a)}{\leq} \Tr\left(A\cdot A^{-\frac{1}{2}}\right)+\Tr\left(B\cdot B^{-\frac{1}{2}}\right)=\Tr\sqrt{A} + \Tr\sqrt{B},
\end{align*}
where (a) follows from the fact that if $C\leq D$ and $A\geq 0$, then $\Tr(AC)\leq \Tr(AD)$.

Now let $A$ and $B$ be two arbitrary positive semi-definite $k\times k$ matrices. We have:
\begin{align*}
\Tr\sqrt{A+B}&=\lim_{\epsilon\to 0} \Tr\sqrt{A+B+2\epsilon I}\leq \lim_{\epsilon\to 0} \Tr\sqrt{A+\epsilon I} + \Tr\sqrt{B+\epsilon I}=\Tr\sqrt{A} + \Tr\sqrt{B}.
\end{align*}
\end{proof}

%

\begin{mylem}
\label{lemFidIneqSum}
Let $\rho_1,\ldots,\rho_n$ and $\sigma_1,\ldots,\sigma_m$ be $n+m$ density matrices of the same dimension. Let $\{p_1,\ldots,p_n\}$ and $\{q_1,\ldots,q_m\}$ be probability distributions on $\{1,\ldots,n\}$ and $\{1,\ldots,m\}$ respectively. We have:
$$F\left(\sum_{i=1}^n p_i\rho_i, \sum_{j=1}^m q_j\sigma_j\right)\leq \sum_{i=1}^n\sum_{j=1}^m \sqrt{p_iq_j}F(\rho_i,\sigma_j).$$
\end{mylem}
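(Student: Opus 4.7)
The plan is to reduce the statement to two successive applications of Lemma \ref{lemTrSqrt}, exploiting the identity $F(\rho,\sigma)=\Tr\sqrt{\sqrt{\rho}\,\sigma\,\sqrt{\rho}}$ (equivalent to $\|\sqrt{\sigma}\sqrt{\rho}\|_1$), which puts the fidelity in exactly the form that Lemma \ref{lemTrSqrt} is designed to bound. Writing $P=\sum_i p_i\rho_i$ and $Q=\sum_j q_j\sigma_j$, one rewrites
$$F(P,Q)=\Tr\sqrt{\sqrt{P}\,Q\,\sqrt{P}}=\Tr\sqrt{\sum_j q_j\,\sqrt{P}\,\sigma_j\,\sqrt{P}}.$$
Each summand $q_j\sqrt{P}\sigma_j\sqrt{P}$ is positive semi-definite since $\sigma_j\geq 0$ and $\sqrt{P}$ is Hermitian.

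The first step is to apply Lemma \ref{lemTrSqrt}, extended to any finite number of summands by an immediate induction, to bound this quantity by
$$\sum_j \Tr\sqrt{q_j\,\sqrt{P}\,\sigma_j\,\sqrt{P}}\;=\;\sum_j \sqrt{q_j}\,\Tr\sqrt{\sqrt{P}\,\sigma_j\,\sqrt{P}}\;=\;\sum_j \sqrt{q_j}\,F(P,\sigma_j),$$
where the last equality is again the representation of fidelity used above.

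The second step is to do exactly the same thing to each $F(P,\sigma_j)$, but with the roles of the two arguments swapped via the symmetry $\Tr\sqrt{\sqrt{X}Y\sqrt{X}}=\Tr\sqrt{\sqrt{Y}X\sqrt{Y}}$ (which follows from $\|\sqrt{X}\sqrt{Y}\|_1=\|\sqrt{Y}\sqrt{X}\|_1$). Namely,
$$F(P,\sigma_j)=\Tr\sqrt{\sqrt{\sigma_j}\,P\,\sqrt{\sigma_j}}=\Tr\sqrt{\sum_i p_i\,\sqrt{\sigma_j}\,\rho_i\,\sqrt{\sigma_j}}\;\leq\;\sum_i\sqrt{p_i}\,F(\rho_i,\sigma_j),$$
by Lemma \ref{lemTrSqrt} applied once more. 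Substituting this into the previous display yields $F(P,Q)\leq\sum_{i,j}\sqrt{p_iq_j}\,F(\rho_i,\sigma_j)$, which is the claim.

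I don't expect any genuine obstacle; the only points to verify are the trivial induction extending Lemma \ref{lemTrSqrt} from two to $n$ summands, and the positive semi-definiteness of the intermediate operators $q_j\sqrt{P}\sigma_j\sqrt{P}$ and $p_i\sqrt{\sigma_j}\rho_i\sqrt{\sigma_j}$, both of which are clear because $\sqrt{P}$ and $\sqrt{\sigma_j}$ are positive while $\rho_i,\sigma_j$ are density matrices and $p_i,q_j\geq 0$.
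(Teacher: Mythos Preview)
Your proposal is correct and follows essentially the same approach as the paper: the paper proves only the $n=1$ case via Lemma \ref{lemTrSqrt} and remarks that this suffices, while you make both applications explicit (first splitting $Q$, then splitting $P$ after using the symmetry of fidelity). The only minor embellishments you add---the induction extending Lemma \ref{lemTrSqrt} to finitely many summands and the verification that the intermediate operators are positive semi-definite---are exactly what the paper leaves implicit.
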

\begin{proof}
It is sufficient to show the lemma for the case where $n=1$:
\begin{align*}
F\left(\rho, \sum_{j=1}^m q_j\sigma_j\right)=\Tr\sqrt{\rho^{\frac{1}{2}}\left(\sum_{j=1}^m q_j\sigma_j\right)\rho^{\frac{1}{2}}}\stackrel{(a)}{\leq}  \sum_{j=1}^m \sqrt{q_j}\Tr\sqrt{\rho^{\frac{1}{2}}\sigma_j\rho^{\frac{1}{2}}}=\sum_{j=1}^m \sqrt{q_j}F(\rho,\sigma_j),
\end{align*}
where (a) follows from Lemma \ref{lemTrSqrt}.
\end{proof}

\vspace*{3mm}

Now we are ready to prove Proposition \ref{propFPolar}:

\begin{align*}
F_d(W^+)&=\frac{1}{q}\sum_{x\in G} F(\rho_{x}^+,\rho_{x+d}^+)\\
&=\frac{1}{q}\sum_{x\in G} F\left(\frac{1}{q}\sum_{u_1\in G} \rho_{u_1+x}\otimes\rho_{x}\otimes|u_1\rangle\langle u_1|,\frac{1}{q}\sum_{u_1\in G} \rho_{u_1+x+d}\otimes\rho_{x+d}\otimes|u_1\rangle\langle u_1|\right)\\
&=\frac{1}{q}\sum_{x\in G} F\left(\left(\frac{1}{q}\sum_{u_1\in G}|u_1\rangle\langle u_1|\otimes\rho_{u_1+x}\right)\otimes\rho_{x},\left(\frac{1}{q}\sum_{u_1\in G} |u_1\rangle\langle u_1|\otimes\rho_{u_1+x+d}\right)\otimes\rho_{x+d}\right)\\
&=\frac{1}{q}\sum_{x\in G}
F\left(\frac{1}{q}\sum_{u_1\in G}|u_1\rangle\langle u_1|\otimes\rho_{u_1+x},\frac{1}{q}\sum_{u_1\in G} |u_1\rangle\langle u_1|\otimes\rho_{u_1+x+d}\right)\cdot F\left(\rho_x,\rho_{x+d}\right)\\
&=\frac{1}{q}\sum_{x\in G}
\left(\frac{1}{q}\sum_{u_1\in G}F\left(\rho_{u_1+x},\rho_{u_1+x+d}\right)\right)\cdot F\left(\rho_x,\rho_{x+d}\right)\\
&=\frac{1}{q}\sum_{x\in G}
F_d(W)\cdot F\left(\rho_x,\rho_{x+d}\right)=F_d(W)^2.\\
\end{align*}

\begin{align*}
F_d(W^-)&=\frac{1}{q}\sum_{x\in G} F(\rho_{x}^-,\rho_{x+d}^-)=\frac{1}{q}\sum_{x\in G} F\left(\frac{1}{q}\sum_{u_2\in G} \rho_{x+u_2}\otimes\rho_{u_2},\frac{1}{q}\sum_{u_2\in G} \rho_{x+d+u_2}\otimes\rho_{u_2}\right)\\
&\stackrel{(a)}{\geq}\frac{1}{q^2}\sum_{x,u_2\in G} F(\rho_{x+u_2}\otimes\rho_{u_2}, \rho_{x+d+u_2}\otimes\rho_{u_2})=\frac{1}{q^2}\sum_{x,u_2\in G} F(\rho_{x+u_2}, \rho_{x+d+u_2})=F_d(W),
\end{align*}
where (a) follows from the joint concavity of the fidelity.

\begin{align*}
&F_d(W^-)\\
&=\frac{1}{q}\sum_{x\in G} F\left(\frac{1}{q}\sum_{u_2\in G} \rho_{x+u_2}\otimes\rho_{u_2},\frac{1}{q}\sum_{u_2'\in G} \rho_{x+d+u_2'}\otimes\rho_{u_2'}\right)\\
&\stackrel{(a)}{\leq} \frac{1}{q}\sum_{x\in G}\sum_{u_2,u_2'\in G}\frac{1}{\sqrt{q^2}} F\left(\rho_{x+u_2}\otimes\rho_{u_2},\rho_{x+d+u_2'}\otimes\rho_{u_2'}\right)=\frac{1}{q^2}\sum_{x,u_2,u_2'\in G} F\left(\rho_{x+u_2},\rho_{x+d+u_2'}\right)\cdot F\left(\rho_{u_2},\rho_{u_2'}\right)\\
&=\frac{1}{q^2}\sum_{x,u_2\in G} F\left(\rho_{x+u_2},\rho_{x+d+u_2}\right) + \frac{1}{q^2}\sum_{x,u_2\in G} F\left(\rho_{u_2},\rho_{u_2-d}\right) + \frac{1}{q^2}\sum_{\substack {x,u_2,u_2'\in G,\\
u_2'\neq u_2,\\
u_2'\neq u_2-d}} F\left(\rho_{x+u_2},\rho_{x+d+u_2'}\right)\cdot F\left(\rho_{u_2},\rho_{u_2'}\right)\\
&= 2 F_d(W) + \frac{1}{q^2}\sum_{\substack{\Delta\in G,\\
\Delta\neq 0,\\\Delta\neq -d}}\sum_{x',u_2\in G} F\left(\rho_{x'},\rho_{x'+d+\Delta}\right)F\left(\rho_{u_2},\rho_{u_2+\Delta}\right)\\
&=2 F_d(W) + \sum_{\substack{\Delta\in G,\\
\Delta\neq 0,\\\Delta\neq -d}} F_{\Delta}(W)F_{d+\Delta}(W),
\end{align*}
where (a) follows from Lemma \ref{lemFidIneqSum}.

\section{Proof of Lemma \ref{lemFMDFmax}}
\label{appFMDFmax}

\begin{align*}
F(W[M|D])&= \frac{1}{|D/M|(|D/M|-1)}\sum_{\substack{C,C'\in D/M,\\C\neq C'}}F(\rho_C,\rho_{C'})\\
&=\frac{|M|^2}{|H|(|H|-|M|)}\sum_{\substack{C,C'\in D/M,\\C\neq C'}}F\left(\frac{1}{|C|}\sum_{x\in C} \rho_x,\frac{1}{|C'|}\sum_{x'\in C'}\rho_{x'}\right)\\
&\stackrel{(a)}{\leq} \frac{|M|^2}{|H|(|H|-|M|)\sqrt{|C|\cdot|C'|}}\sum_{\substack{C,C'\in D/M,\\C\neq C'}}\sum_{\substack{x\in C,\\x'\in C'}}F(\rho_x,\rho_{x'})\\
&\stackrel{(b)}{\leq}\frac{|M|}{|H|(|H|-|M|)}\sum_{\substack{x\in D,\\d\in H,\\d\notin M}} F(\rho_x,\rho_{x+d})\leq \frac{|M|}{|H|(|H|-|M|)}\sum_{\substack{d\in H,\\d\notin M}}\frac{q}{q}\sum_{x\in G} F(\rho_x,\rho_{x+d})\\
&=\frac{q\cdot|M|}{|H|(|H|-|M|)}\sum_{\substack{d\in H,\\d\notin M}}F_d(W)\leq \frac{q\cdot|M|}{|H|(|H|-|M|)}(|H|-|M|)F_{\max}^{M|H}(W),
\end{align*}
where (a) follows from Lemma \ref{lemFidIneqSum}, and (b) follows from the fact that $|C|=|C'|=|M|$ and the fact that \Big\{$\exists C,C'\in D/M$: $x\in C$, $x'\in C'$ and $C\neq C'$\Big\} if and only if \Big\{$x\in D$, $x'-x\in H$ and $x'-x\notin M$\Big\}.

Now let us show the second inequality of Lemma \ref{lemFMDFmax}. Assume that $M$ is maximal in $H$ and let $d\in H$ be such that $d\notin M$ and $F_{\max}^{M|H}(W)=F_d(W)$. Since $\displaystyle 1-F_d(W)=\frac{1}{q}\sum_{x\in G}\big(1-F(\rho_x,\rho_{x+d})\big)$, we have $F(\rho_x,\rho_{x+d})\geq 1-q(1-F_d(W))=1-q\big(1-F_{\max}^{M|H}(W)\big)$ for every $x\in G$.

For every $C\in D/M$, we have:
\begin{align*}
F(\rho_C,\rho_{d+C}) &\stackrel{(a)}{\geq} 1- D(\rho_C,\rho_{d+C})= 1-D\left (\frac{1}{|C|}\sum_{x\in C}\rho_x,\frac{1}{|C|}\sum_{x\in C}\rho_{x+d}\right)=1-\frac{1}{2}\left\|\frac{1}{|C|}\sum_{x\in C}(\rho_x-\rho_{x+d})\right\|_1\\
&\geq
1-\frac{1}{|C|}\sum_{x\in C}\frac{1}{2}\left\|\rho_x-\rho_{x+d}\right\|_1= 1-\frac{1}{|C|}\sum_{x\in C}  D(\rho_x,\rho_{x+d})\\
&\stackrel{(b)}{\geq} 1-\frac{1}{|C|}\sum_{x\in C}\sqrt{1-F(\rho_x,\rho_{x+d})^2}\geq 1-\sqrt{1-\Big(1-q\big(1-F_{\max}^{M|H}(W)\big)\Big)^2},
\end{align*}
where (a) follows from the fact that $ D(\rho',\rho'')+F(\rho',\rho'')\geq 1$ (see \cite{Nielsen}). (here $ D(\rho',\rho'')=\frac{1}{2}\|\rho'-\rho''\|_1$ is the trace distance between $\rho'$ and $\rho''$.) (b) follows from the fact that $ D(\rho',\rho'')^2+F(\rho',\rho'')^2\leq 1$ (see \cite{Nielsen}).

Now let $C,C'\in D/M$ be such that $C\neq C'$. Since $|H/M|$ is prime, we can write $C'=ld+C$ for some $0\leq l<|H/M|$. We have:

\begin{align*}
F(\rho_C,\rho_{C'})&=F(\rho_C,\rho_{ld+C})=\cos A(\rho_C,\rho_{ld+C})\stackrel{(a)}{\geq} \cos\left(\sum_{i=0}^{l-1}A\big(\rho_{id+C},\rho_{(i+1)d+C}\big)\right)\\
&=\cos\left(\sum_{i=0}^{l-1} \arccos F\big(\rho_{id+C},\rho_{(i+1)d+C}\big)\right)\\
&\stackrel{(b)}{\geq} \cos\left(l\cdot \arccos \left(1-\sqrt{1-\Big(1-q\big(1-F_{\max}^{M|H}(W)\big)\Big)^2} \right)\right)\\
&\stackrel{(c)}{\geq} \cos\left(\frac{|H|-|M|}{|M|} \arccos \left(1-\sqrt{1-\Big(1-q\big(1-F_{\max}^{M|H}(W)\big)\Big)^2} \right)\right),
\end{align*}
where (a) follows from the fact that $A(\rho',\rho'')=\arccos F(\rho',\rho'')$ is a metric \cite{Nielsen}. Note that since $\cos$ is a decreasing function on $\displaystyle \left[0,\frac{\pi}{2}\right]$, (a), (b) and (c) become true if we assume that $\displaystyle 1-\sqrt{1-\Big(1-q\big(1-F_{\max}^{M|H}(W)\big)\Big)^2}\geq \cos\left(\frac{\pi}{2(q-1)}\right)$. In other words, we can take $$\delta_q = \frac{1}{q}\left(1-\sqrt{1-\left(1-\cos\left(\frac{\pi}{2(q-1)}\right)\right)^2}\right).$$

We conclude that
\begin{align*}
F(W[M|D])&=\frac{1}{|D/M|(|D/M|-1)}\sum_{\substack{C,C'\in D/M,\\C\neq C'}}F(\rho_C,\rho_{C'})\\
&\geq \cos\left(\frac{|H|-|M|}{|M|} \arccos \left(1-\sqrt{1-\Big(1-q\big(1-F_{\max}^{M|H}(W)\big)\Big)^2} \right)\right).
\end{align*}

\section{Proof of Lemma \ref{lemFWHPolar}}

\label{appFWHPolar}

\begin{mylem}
\label{lemFMaxLeqFH}
For every subgroup $H$ of $G$, we have:
$$F_{\max}^{H|G}(W)\leq (q-|H|) F(W[H])$$
\end{mylem}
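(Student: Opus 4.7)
The plan is to start from the explicit formula for $F(W[H])$ and use the joint concavity of the fidelity to lower-bound each coset-level fidelity $F(\rho_C, \rho_{C'})$ by an average of element-level fidelities $F(\rho_x, \rho_{x'})$. The sum $\sum_{d\notin H} F_d(W)$ will then appear naturally by re-indexing pairs $(x, x')$ in distinct cosets via their difference $d = x' - x$, and the maximum $F_{\max}^{H|G}(W)$ is of course bounded by this sum.

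Concretely, set $q' = |G/H| = q/|H|$. By definition,
$$F(W[H]) = \frac{1}{q'(q'-1)}\sum_{\substack{C, C' \in G/H \\ C \neq C'}} F(\rho_C, \rho_{C'}),$$
and $\rho_C = \frac{1}{|H|}\sum_{x \in C}\rho_x$. First I would note that fidelity is jointly concave (equivalently, concave in each argument), which yields
$$F(\rho_C, \rho_{C'}) \geq \frac{1}{|H|^2}\sum_{\substack{x \in C \\ x' \in C'}} F(\rho_x, \rho_{x'}).$$
Using $(q-|H|) = |H|(q'-1)$, multiplying out gives
$$(q-|H|)\,F(W[H]) = \frac{|H|^2}{q}\sum_{C \neq C'} F(\rho_C, \rho_{C'}) \geq \frac{1}{q}\sum_{C \neq C'}\sum_{x \in C,\, x' \in C'} F(\rho_x, \rho_{x'}).$$

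The remaining step is just bookkeeping: the double sum on the right ranges over all ordered pairs $(x, x') \in G\times G$ lying in distinct cosets of $H$, i.e., with $d := x' - x \notin H$. Grouping these pairs by $d$, the sum equals $\sum_{d \notin H}\sum_{x \in G} F(\rho_x, \rho_{x+d}) = q\sum_{d \notin H} F_d(W)$, so
$$(q-|H|)\,F(W[H]) \geq \sum_{d \notin H} F_d(W) \geq \max_{d \notin H} F_d(W) = F_{\max}^{H|G}(W),$$
which is exactly the claim. The case $H = G$ is vacuous under the conventions of the paper, since both sides are zero.

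There is no real obstacle here; the only thing to get right is the direction of the concavity inequality. One might initially be tempted to invoke Lemma \ref{lemFidIneqSum} (which gives an upper bound on $F$ of a mixture), but that points the wrong way for this argument. What is needed is the opposite direction, joint concavity, which gives precisely the lower bound on $F(\rho_C, \rho_{C'})$ required to convert the coset-level fidelity sum into a sum of $F_d(W)$ over $d \notin H$.
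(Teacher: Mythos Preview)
Your proof is correct and follows essentially the same route as the paper's own argument: expand $F(W[H])$ as an average of coset-level fidelities, use the joint concavity of the fidelity to lower-bound each $F(\rho_C,\rho_{C'})$ by the average of the $F(\rho_x,\rho_{x'})$ over $x\in C$, $x'\in C'$, re-index the resulting sum by $d=x'-x\notin H$ to obtain $\sum_{d\notin H}F_d(W)$, and then bound this sum below by its maximum term. Your remark that Lemma~\ref{lemFidIneqSum} points the wrong way and that concavity is what is needed here is also exactly right.
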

\begin{proof}
\begin{align*}
F(W[H])&=\frac{1}{|G/H|(|G/H|-1)} \sum_{\substack{C,C'\in G/H,\\C\neq C'}}F(\rho_C,\rho_{C'})\\
&=\frac{1}{|G/H|(|G/H|-1)} \sum_{\substack{C,C'\in G/H,\\C\neq C'}}F\left(\frac{1}{|C|} \sum_{x\in C}\rho_x,\frac{1}{|C'|}\sum_{x'\in C'} \rho_{x'}\right)\\
&\stackrel{(a)}{\geq} \frac{1}{|G/H|(|G/H|-1)}\cdot\frac{1}{|H|^2} \sum_{\substack{C,C'\in G/H,\\C\neq C'}}  \sum_{\substack{x\in C,\\x'\in C'}} F(\rho_x,\rho_{x'})\\
&=\frac{1}{q(q-|H|)} \sum_{\substack{x,d\in G,\\d\notin H}} F(\rho_x,\rho_{x+d})=\frac{1}{q-|H|}\sum_{\substack{d\in G,\\d\notin H}} F_d(W)\geq \frac{1}{q-|H|} F_{\max}^{H|G}(W),
\end{align*}
where (a) follows from the concavity of the fidelity and from the fact that $|C|=|C'|=|H|$.
\end{proof}

\vspace*{3mm}

Now we are ready to prove Lemma \ref{lemFWHPolar}. The lemma is trivial for $H=G$. Assume that $H\neq G$. We have:
\begin{align*}
F(W^-[H])&=F(W^-[H|G]) \stackrel{(a)}{\leq} \frac{q\cdot |H|}{q} F_{\max}^{H|G}(W^-)= |H| \max_{\substack{d\in G,\\d\notin H}}F_d(W^-)\\
&\stackrel{(b)}{\leq} |H| \max_{\substack{d\in G,\\d\notin H}}\bigg\{2 F_d(W) + \sum_{\substack{\Delta\in G,\\
\Delta\neq 0,\\\Delta\neq -d}}F_\Delta(W)F_{d+\Delta}(W)\bigg\}\\
&\stackrel{(c)}{\leq} |H| \bigg(2 F_{\max}^{H|G}(W) + (q-2)F_{\max}^{H|G}(W)\bigg)= |H|qF_{\max}^{H|G}(W)\\
&\stackrel{(d)}{\leq} |H|q(q-|H|) F(W[H]),
\end{align*}
where (a) follows from Lemma \ref{lemFMDFmax}. (b) follows from Proposition \ref{propFPolar}. (c) follows from the fact that for every $d,\Delta\in G$, if $d\notin H$ then either $\Delta\notin H$ or $d+\Delta\notin H$, and so $F_\Delta(W)F_{d+\Delta}(W)\leq F_{\max}^{H|G}(W)$. (d) follows from Lemma \ref{lemFMaxLeqFH}.

On the other hand,
\begin{align*}
F(W^+[H])&=F(W^+[H|G]) \stackrel{(a)}{\leq} \frac{q\cdot|H|}{q} F_{\max}^{H|G}(W^+)= |H| \max_{\substack{d\in G,\\d\notin H}}F_d(W^+)\\
&\stackrel{(b)}{=} |H| \max_{\substack{d\in G,\\d\notin H}}F_d(W)^2= |H| F_{\max}^{H|G}(W)^2 \stackrel{(c)}{\leq} |H| (q-|H|)^2 F(W[H])^2,
\end{align*}
where (a) follows from Lemma \ref{lemFMDFmax}, (b) follows from Proposition \ref{propFPolar} and (c) follows from Lemma \ref{lemFMaxLeqFH}.

\section{Proof of Lemma \ref{lemPOVMFWH}}
\label{appPOVMFWH}
It is sufficient to show the following simpler version:

\begin{mylem}
\label{lemPOVMFWHSimple}
If $W: x\in G\longrightarrow \rho_x\in\mathcal{DM}(k\cdot r)$ is a cq-channel such that
$$\rho_x^{BU}=\frac{1}{r}\sum_{u=1}^r \rho_{x,u}^B\otimes|u\rangle\langle u|^U,$$
where $\rho_{x,u}^B\in\mathcal{DM}(k)$ and $\{|u\rangle^U:\;1\leq u\leq r\}$ is an orthonormal basis of the Hilbert space of dimension $r$, then for every $1\leq u\leq r$, there exists a POVM $\left\{\Pi_{u,x}^B:\;x\in G \right\}$ such that the POVM $\left\{\Pi_x^{BU}:\;x\in G \right\}$ defined as
$$\Pi_x^{BU}=\sum_{u=1}^r \Pi_{u,x}^B\otimes|u\rangle\langle u|^U,$$
satisfies
$$1-\frac{1}{q}\sum_{x\in G}\Tr\left(\Pi_x^{BU} \rho_x^{BU}\right)<(q-1)F(W).$$
\end{mylem}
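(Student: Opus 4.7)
The plan is to exploit the block-diagonal structure of $\rho_x^{BU}$ with respect to the classical register $U$: if I choose the POVM element $\Pi_x^{BU}$ to have the same block-diagonal form $\sum_u \Pi_{u,x}^B \otimes |u\rangle\langle u|^U$ as the state, then the decoding problem reduces to decoding, for each fixed value of $u$, an ordinary cq-channel $W_u: x \in G \longrightarrow \rho_{x,u}^B \in \mathcal{DM}(k)$. Concretely, I would first expand
$$\frac{1}{q}\sum_{x\in G}\Tr\left(\Pi_x^{BU} \rho_x^{BU}\right) \;=\; \frac{1}{qr}\sum_{x\in G}\sum_{u=1}^r \Tr\left(\Pi_{u,x}^B \rho_{x,u}^B\right),$$
so that the overall error probability equals the average over $u$ of the decoding error of $W_u$ under the POVM $\{\Pi_{u,x}^B:\;x\in G\}$.

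Second, for each $u$ I would take $\{\Pi_{u,x}^B:\;x\in G\}$ to be an optimal decoding POVM for the channel $W_u$, so that the corresponding error probability equals $\mathbb{P}_e(W_u)$. Invoking the bound from \cite{AvFidelity} that was quoted in Section II, namely $\mathbb{P}_e(W_u)\le (q-1)F(W_u)$, yields
$$1-\frac{1}{q}\sum_{x\in G}\Tr\left(\Pi_x^{BU} \rho_x^{BU}\right) \;\le\; (q-1)\cdot\frac{1}{r}\sum_{u=1}^r F(W_u).$$

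Third, I would relate $\frac{1}{r}\sum_u F(W_u)$ back to $F(W)$ using the block-diagonal structure once more. Since
$$\sqrt{\rho_x^{BU}}=\frac{1}{\sqrt r}\sum_{u=1}^r \sqrt{\rho_{x,u}^B}\otimes |u\rangle\langle u|^U,$$
the operator $\sqrt{\rho_x^{BU}}\sqrt{\rho_{x'}^{BU}}$ is block-diagonal in $U$ with blocks $\frac{1}{r}\sqrt{\rho_{x,u}^B}\sqrt{\rho_{x',u}^B}$, so its nuclear norm is additive across the blocks and gives $F(\rho_x^{BU},\rho_{x'}^{BU})=\frac{1}{r}\sum_u F(\rho_{x,u}^B,\rho_{x',u}^B)$. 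Averaging over all pairs $x\ne x'$ in $G$ shows $F(W)=\frac{1}{r}\sum_u F(W_u)$, and substituting this into the previous inequality completes the proof.

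There is no real obstacle here; the two key observations (the POVM can be made block-diagonal without loss of optimality, and the fidelity of block-diagonal cq-states factorizes as an average of block fidelities) are both straightforward, and the rest is just combining the average-fidelity bound of \cite{AvFidelity} applied pointwise in $u$ with the linearity in $u$ on both sides. The strict inequality in the statement is harmless, following from either a strict form of the \cite{AvFidelity} bound or (equivalently) from a trivial relaxation in the chain above.
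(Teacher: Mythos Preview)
Your proposal is correct and follows essentially the same route as the paper: define $W_u$ for each $u$, take an optimal POVM for each $W_u$, bound each error by $(q-1)F(W_u)$ via \cite{AvFidelity}, and then use the block-diagonal structure to identify $\frac{1}{r}\sum_u F(W_u)$ with $F(W)$. Your justification of the fidelity identity via the nuclear norm of the block-diagonal operator $\sqrt{\rho_x^{BU}}\sqrt{\rho_{x'}^{BU}}$ is in fact slightly more explicit than what the paper writes.
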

\begin{proof}
For every $1\leq u\leq r$, define the cq-channel $W_u:x\in G\longrightarrow \rho_{x,u}\in\mathcal{DM}(k)$. The optimal decoder for $W_u$ satisfies $\mathbb{P}_e(W_u)\leq (q-1)F(W_u)$ \cite{AvFidelity}. Therefore, there exists a POVM $\left\{\Pi_{u,x}^B:\;x\in G \right\}$ satisfying,
$$1-\frac{1}{q}\sum_{x\in G}\Tr\left(\Pi_{u,x}^{B} \rho_{u,x}^{B}\right)<(q-1)F(W_u).$$

For every $x\in G$, define
$$\Pi_x^{BU}=\sum_{u=1}^r \Pi_{u,x}^B\otimes|u\rangle\langle u|^U.$$
It is easy to see that $\left\{\Pi_x^{BU}:\;x\in G \right\}$ is a valid POVM. We have:
\begin{align*}
1-\frac{1}{q}\sum_{x\in G}\Tr\left(\Pi_x^{BU} \rho_x^{BU}\right)&=
1-\frac{1}{qr}\sum_{x\in G}\sum_{u=1}^r\Tr\left(\Pi_{u,x}^{B} \rho_{u,x}^{B}\right)=\frac{1}{r}\sum_{u=1}^r\left(1-\frac{1}{q}\sum_{x\in G}\Tr\left(\Pi_{u,x}^{B} \rho_{u,x}^{B}\right)\right)\\
&\leq \frac{1}{r}\sum_{u=1}^r (q-1)F(W_u)=\frac{q-1}{r}\sum_{u=1}^r \sum_{\substack{x,x'\in G,\\x\neq x'}}F(\rho_{u,x}^B,\rho_{u,x'}^B)\\
&=(q-1)\sum_{\substack{x,x'\in G,\\x\neq x'}}F\left(\frac{1}{r}\sum_{u=1}^r\rho_{x,u}^B\otimes|u\rangle\langle u|^U,\frac{1}{r}\sum_{u=1}^r\rho_{x',u}^B\otimes|u\rangle\langle u|^U \right)\\
&=(q-1)F(W).
\end{align*}
\end{proof}

\bibliographystyle{IEEEtran}
\bibliography{bibliofile}

\begin{thebibliography}{10}
\providecommand{\url}[1]{#1}
\csname url@samestyle\endcsname
\providecommand{\newblock}{\relax}
\providecommand{\bibinfo}[2]{#2}
\providecommand{\BIBentrySTDinterwordspacing}{\spaceskip=0pt\relax}
\providecommand{\BIBentryALTinterwordstretchfactor}{4}
\providecommand{\BIBentryALTinterwordspacing}{\spaceskip=\fontdimen2\font plus
\BIBentryALTinterwordstretchfactor\fontdimen3\font minus
  \fontdimen4\font\relax}
\providecommand{\BIBforeignlanguage}[2]{{%
\expandafter\ifx\csname l@#1\endcsname\relax
\typeout{** WARNING: IEEEtran.bst: No hyphenation pattern has been}%
\typeout{** loaded for the language `#1'. Using the pattern for}%
\typeout{** the default language instead.}%
\else
\language=\csname l@#1\endcsname
\fi
#2}}
\providecommand{\BIBdecl}{\relax}
\BIBdecl

\bibitem{Arikan}
E.~Ar{\i}kan, ``Channel polarization: A method for constructing
  capacity-achieving codes for symmetric binary-input memoryless channels,''
  \emph{Information Theory, IEEE Transactions on}, vol.~55, no.~7, pp. 3051
  --3073, 2009.

\bibitem{ArikanTelatar}
E.~Ar{\i}kan and E.~Telatar, ``On the rate of channel polarization,'' in
  \emph{Information Theory, 2009. ISIT 2009. IEEE International Symposium on},
  28 2009.

\bibitem{SasogluTelAri}
E.~\c{S}a\c{s}o\u{g}lu, E.~Telatar, and E.~Ar{\i}kan, ``Polarization for
  arbitrary discrete memoryless channels,'' in \emph{Information Theory
  Workshop, 2009. ITW 2009. IEEE}, 2009, pp. 144 --148.

\bibitem{ParkBarg}
W.~Park and A.~Barg, ``Polar codes for $q$-ary channels,,'' \emph{Information
  Theory, IEEE Transactions on}, vol.~59, no.~2, pp. 955--969, 2013.

\bibitem{SahebiPradhan}
A.~G. Sahebi and S.~S. Pradhan, ``Multilevel channel polarization for arbitrary
  discrete memoryless channels,'' \emph{IEEE Transactions on Information
  Theory}, vol.~59, no.~12, pp. 7839--7857, Dec 2013.

\bibitem{RajTelA}
R.~Nasser and E.~Telatar, ``Polar codes for arbitrary dmcs and arbitrary
  macs,'' \emph{IEEE Transactions on Information Theory}, vol.~62, no.~6, pp.
  2917--2936, June 2016.

\bibitem{RajErgI}
\BIBentryALTinterwordspacing
R.~Nasser, ``Ergodic theory meets polarization. {I}: An ergodic theory for
  binary operations,'' \emph{CoRR}, vol. abs/1406.2943, 2014. [Online].
  Available: \url{http://arxiv.org/abs/1406.2943}
\BIBentrySTDinterwordspacing

\bibitem{RajErgII}
\BIBentryALTinterwordspacing
------, ``Ergodic theory meets polarization. {I}{I}: A foundation of
  polarization theory,'' \emph{CoRR}, vol. abs/1406.2949, 2014. [Online].
  Available: \url{http://arxiv.org/abs/1406.2949}
\BIBentrySTDinterwordspacing

\bibitem{SasogluTelYeh}
E.~\c{S}a\c{s}o\u{g}lu, E.~Telatar, and E.~M. Yeh, ``Polar codes for the
  two-user multiple-access channel,'' \emph{IEEE Transactions on Information
  Theory}, vol.~59, no.~10, pp. 6583--6592, Oct 2013.

\bibitem{AbbeTelatar}
E.~Abbe and E.~Telatar, ``Polar codes for the -user multiple access channel,''
  \emph{Information Theory, IEEE Transactions on}, vol.~58, no.~8, pp. 5437
  --5448, aug. 2012.

\bibitem{WildeGuhaCQ}
M.~M. Wilde and S.~Guha, ``Polar codes for classical-quantum channels,''
  \emph{IEEE Transactions on Information Theory}, vol.~59, no.~2, pp.
  1175--1187, Feb 2013.

\bibitem{PolarcqMAC}
C.~Hirche, C.~Morgan, and M.~M. Wilde, ``Polar codes in network quantum
  information theory,'' \emph{IEEE Transactions on Information Theory},
  vol.~62, no.~2, pp. 915--924, Feb 2016.

\bibitem{Monotone}
E.~Arikan, ``Polar coding for the slepian-wolf problem based on monotone chain
  rules,'' in \emph{Information Theory Proceedings (ISIT), 2012 IEEE
  International Symposium on}, July 2012, pp. 566--570.

\bibitem{AvFidelity}
\BIBentryALTinterwordspacing
H.~Barnum and E.~Knill, ``Reversing quantum dynamics with near-optimal quantum
  and classical fidelity,'' \emph{Journal of Mathematical Physics}, vol.~43,
  no.~5, pp. 2097--2106, 2002. [Online]. Available:
  \url{http://aip.scitation.org/doi/abs/10.1063/1.1459754}
\BIBentrySTDinterwordspacing

\bibitem{Sen}
P.~Sen, ``Achieving the han-kobayashi inner bound for the quantum interference
  channel by sequential decoding,'' \emph{arXiv:1109.0802}, September 2011.

\bibitem{Nielsen}
M.~A. Nielsen and I.~L. Chuang, \emph{Quantum Computation and Quantum
  Information: 10th Anniversary Edition}, 10th~ed.\hskip 1em plus 0.5em minus
  0.4em\relax New York, NY, USA: Cambridge University Press, 2011.

\bibitem{SasogluThesis}
\BIBentryALTinterwordspacing
E.~\c{S}a\c{s}o\u{g}lu, ``Polar {C}oding {T}heorems for {D}iscrete {S}ystems,''
  Ph.D. dissertation, IC, Lausanne, 2011. [Online]. Available:
  \url{http://library.epfl.ch/theses/?nr=5219}
\BIBentrySTDinterwordspacing

\bibitem{cqMAC}
A.~Winter, ``The capacity of the quantum multiple-access channel,'' \emph{IEEE
  Transactions on Information Theory}, vol.~47, no.~7, pp. 3059--3065, Nov
  2001.

\bibitem{RajTelFourier}
R.~Nasser and E.~Telatar, ``Fourier analysis of mac polarization,''
  \emph{arXiv:1501.06076}, January 2015.

\bibitem{HolevoRel}
A.~S. Holevo, ``Reliability function of general classical-quantum channel,''
  \emph{IEEE Transactions on Information Theory}, vol.~46, no.~6, pp.
  2256--2261, Sep 2000.

\bibitem{Tomamichel}
M.~Tomamichel, ``A framework for non-asymptotic quantum information theory,''
  \emph{arXiv:1406.2943}, 2012.

\bibitem{Konig}
R.~Konig, R.~Renner, and C.~Schaffner, ``The operational meaning of min- and
  max-entropy,'' \emph{IEEE Transactions on Information Theory}, vol.~55,
  no.~9, pp. 4337--4347, Sept 2009.

\bibitem{OgawaNagaoka}
T.~Ogawa and H.~Nagaoka, ``Making good codes for classical-quantum channel
  coding via quantum hypothesis testing,'' \emph{IEEE Transactions on
  Information Theory}, vol.~53, no.~6, pp. 2261--2266, June 2007.

\bibitem{Weyl}
\BIBentryALTinterwordspacing
H.~Weyl, ``Das asymptotische verteilungsgesetz der eigenwerte linearer
  partieller differentialgleichungen (mit einer anwendung auf die theorie der
  hohlraumstrahlung),'' \emph{Mathematische Annalen}, vol.~71, pp. 441--479,
  1912. [Online]. Available: \url{http://eudml.org/doc/158545}
\BIBentrySTDinterwordspacing

\bibitem{Bhatia}
R.~Bhatia, \emph{Positive Definite Matrices}, ser. Princeton Series in Applied
  Mathematics.\hskip 1em plus 0.5em minus 0.4em\relax Princeton University
  Press, 2009.

\end{thebibliography}
\end{document}